\documentclass{article}
\usepackage{geometry}
\geometry{
  a4paper,
  left=3.5cm,
  right=3.5cm,
  top=3.5cm,
  bottom=3.5cm
}

\usepackage[hyphens]{url}  
\usepackage{graphicx}
\urlstyle{rm}

\usepackage[english]{babel}
\usepackage[T1]{fontenc}

\usepackage[utf8]{inputenc}

\usepackage{tcolorbox}

\usepackage{amssymb}
\usepackage{amsthm}
\usepackage{mathtools,thmtools}
\usepackage{nicefrac}

\usepackage{booktabs}
\usepackage{tabularx}

\usepackage{tikz,pgfplots,xcolor}
\pgfplotsset{compat=1.16}
\usepgfplotslibrary{fillbetween}
\usetikzlibrary{calc}
\definecolor{shadingblue}{RGB}{197,209,227}
\definecolor{labelblue}{RGB}{20, 52, 164}

\usepackage{natbib}
\usepackage[capitalize,nameinlink]{cleveref}

\usepackage{caption}
\usepackage{subcaption}

\captionsetup[subfigure]{margin=3pt, parskip=0pt, hangindent=3pt, indention=10pt}

\usepackage[inline]{enumitem}
\setlist{nosep,nolistsep,labelsep=3pt,labelindent=0pt,leftmargin=7pt}

\newtheorem{theorem}{Theorem}
\newtheorem{lemma}{Lemma}
\newtheorem{proposition}{Proposition}

\theoremstyle{definition}
\newtheorem{definition}{Definition}

\newcommand{\naturals}{\ensuremath{\mathbb{N}}}

\newcommand{\np}{{\sffamily NP}}

\newcommand{\nphard}{\np-hard}

\newcommand{\dotcup}{\mathbin{\dot{\cup}}}

\newcommand{\agentsperm}{\ensuremath{\pi_\textrm{a}}}
\newcommand{\resperm}{\ensuremath{\pi_\mathit{goods}}}
\newcommand{\allPermAgents}{\ensuremath{\Pi_\textrm{A}}}
\newcommand{\allPermgoods}{\ensuremath{\Pi_\textrm{m}}}

\newcommand{\idealdist}{\ensuremath{d_\textrm{v}}}
\newcommand{\demdist}{\ensuremath{d_\textrm{d}}}

\DeclareMathOperator{\dem}{dem}
\newcommand{\sordem}{\ensuremath{\overrightarrow{\dem}}}

\newcommand{\goods}{\ensuremath{\mathcal{R}}}

\newcommand{\task}{\ensuremath{U}}

\newcommand{\IND}{{{\mathrm{IND}}}}
\newcommand{\SEP}{{{\mathrm{SEP}}}}
\newcommand{\CON}{{{\mathrm{CON}}}}
\newcommand{\BIC}{{{\mathrm{BIC}}}}
\newcommand{\WSEP}{{{\mathrm{WSEP}}}}
\newcommand{\WSEPf}{{{\mathrm{WSEPf}}}}

\newcommand{\norm}[1]{\lVert #1 \rVert}

\newcommand{\thespan}[1]{\mathit{span}(\{#1\})}

\title{Putting Fair Division on the Map}

\date{}
\author{
  \normalsize{\textbf{Paula B\"ohm}}\\[0pt]
  {\small Institut f\"ur Informatik}\\[-3pt]
  \small{TU Clausthal} \\[-3pt]
  \small{Clausthal-Zellerfeld, Germany} \\[-3pt]
  \small{\texttt{paula.boehm@tu-clausthal.de}}
  \and
  \normalsize{\textbf{Robert Bredereck}} \\[0pt]
  \small{Institut f\"ur Informatik} \\[-3pt]
  \small{TU Clausthal} \\[-3pt]
  \small{Clausthal-Zellerfeld, Germany} \\[-3pt]
  \small{\texttt{robert.bredereck@tu-clausthal.de}}\\
  \and
  \normalsize{\textbf{Paul G\"olz}} \\[0pt]
  \small{ORIE %
  } \\[-3pt]
  \small{Cornell University} \\[-3pt]
  \small{Ithaca, NY, USA} \\[-3pt]
  \small{\texttt{paulgoelz@cornell.edu}}
  \and
  \normalsize{\textbf{Andrzej Kaczmarczyk}} \\[0pt]
  \small{Department of Computer Science} \\[-3pt]
  \small{The University of Chicago} \\[-3pt]
  \small{Chicago, IL, USA} \\[-3pt]
  \small{\texttt{akaczmarczyk@uchicago.edu}}\\
  \and
  \normalsize{\textbf{Stanis\l{}aw Szufa}} \\[0pt]
  \small{CNRS, LAMSADE} \\[-3pt]
  \small{Université Paris Dauphine - PSL} \\[-3pt]
  \small{Paris, France} \\[-3pt]
  \small{\texttt{s.szufa@gmail.com}}
}

\begin{document}

\newtcbox{\tagbox}{size=fbox,nobeforeafter,colback=black,colupper=white,
left skip=2pt, right skip=2pt}

\newsavebox{\threesixbox}
\sbox{\threesixbox}{%
\tagbox{$3 \times 6$}%
}
\newlength{\threesixboxwidth}
\settowidth{\threesixboxwidth}{\usebox{\threesixbox}}

\newsavebox{\fivefivebox}
\sbox{\fivefivebox}{%
\tagbox{$5 \times 5$}%
}
\newlength{\fivefiveboxwidth}
\settowidth{\fivefiveboxwidth}{\usebox{\fivefivebox}}

\maketitle
\begin{abstract}
The fair division of indivisible goods is not only a subject of theoretical research, but also an important problem in practice, with solutions being offered on several online platforms.
Little is known, however, about the characteristics of real-world allocation instances and how they compare to synthetic instances.
Using dimensionality reduction, we compute a \emph{map} of allocation instances: a 2-dimensional embedding such that an instance's location on the map is predictive of the instance's origin and other key instance features.
Because the axes of this map closely align with the utility matrix's two largest singular values, we define a second, explicit map, which we theoretically characterize.
\end{abstract}
\pagestyle{plain}
\section{Introduction}
Over the past 20 years, the field of fair division has made great advances in
studying allocations of indivisible goods~\citep{aab+23}.
To illustrate this progress, consider the axiom of \emph{envy-freeness}, which
demands that no agent prefer another agent's bundle of allocated goods to their
own. 
By the end of the 20th century, economists already understood envy-freeness well in settings with \emph{divisible} goods.
For example\,---\,assuming that preferences are additive, as we do throughout
the paper\,---\,the allocation maximizing the Nash welfare is envy-free in these
settings~\citep{varian74,ss82}.
Little was known, however, about indivisible goods, a domain whose combinatorial
structure poses additional challenges to mathematical investigation.
Since envy-free allocations need not exist for all indivisible allocation
instances\footnote{Under the standard assumption that all goods must be allocated, which we also make throughout the paper.}, could envy at
least be limited, or were large amounts of envy unavoidable?

Since then, 
we have gained a refined understanding of the
degree to which envy can (and cannot) be avoided.
Notably, the field has coalesced around an attractive relaxation of
envy-freeness\,---\,\emph{envy-freeness up to one good}
(\emph{EF1})~\citep{budish11}\,---\,and
identified elegant algorithms~\citep{lmm+04,ckm+19} that construct EF1 allocations for any instance.
Though intriguing open questions remain,\footnote{For example, whether
envy-freeness up to \emph{any} good (EFX) can be
guaranteed~\citep{cgm20,pr20,abf+21}.} these questions sharpen and extend
a solid understanding of the landscape of allocations.  %
For alternative families of fairness axioms (such as the maximin share,
proportionality, and equitability), fair division has made similar progress in
understanding which axioms can be guaranteed even on worst-case allocation
instances~\citep{aab+23}.

In parallel to these theoretical advances, algorithms for allocating indivisible goods have entered practical usage, raising new questions for fair division.
The Course Match system, for example, assigns course seats to MBA students at
Wharton~\citep{bck+17}, and thousands of users have used the website
Spliddit~\citep{gp14a} to divide up estates or joint possessions.
The deployment of such systems makes it more pressing to study not only
worst-case instances but instances typically encountered in practice as
well.
For example, though envy-free allocations do not exist for all instances, should
algorithms not aim for envy-free allocations for %
the 71\% of Spliddit instances~\citep{bfg+22}
where envy-freeness is possible? If so, how to choose
among envy-free allocations?
Or, as a second example, which algorithms can be implemented in practice?
After all, the algorithms deployed on Course Match~\citep{bgo+23} and
Spliddit~\citep{ckm+19}, as well as other proposed methods~\citep{bfkkr21}, run
fast on practical inputs seemingly defying (worst-case) computational hardness
results. Answers to these questions cannot be found through worst-case analysis
alone.

Whereas most work in fair division follows the worst-case paradigm, a noteworthy
exception is some work in the paradigm of \emph{distributional analysis}, which
assumes that allocation instances are drawn from a probability
distribution~\citep{roughgarden20}.
Typically, these distributional models assume that all agent--good values are
drawn independently, either from a single distribution~\citep{amn+17,ms19,ms21},
a distribution that depends on the agent~\citep{kpw16,fgh+19,bg22}, or a
distribution that depends on the alternative~\citep{dgk+14,fgh+19}.

On the upside, when $m$ and $n$ are large, these distributions generate highly
structured instances, for which fair allocations are more prevalent. For
example, basic algorithms yield envy-free allocations for these instances, with
high probability~\citep{dgk+14,ms21}.
On the flip side, we are not aware of any empirical work that has tested if the
structures of these random instances are present in practical allocation
problems.\footnote{\citet{bfg+22} voice doubts, but also do not provide data.}
In recent work, \citet{bfg+22} try to overcome some of these concerns through \emph{smoothed analysis}, which means that their probability distributions are defined by adding random noise to a worst-case utility profile.
While they extend the possibility results for envy-freeness to more general distributions, they leave ``whether our smoothed model accurately describes the properties of real-world utility profiles that possess envy-free allocations'' to future empirical analysis.

Recently, social choice theory
addressed a similar need for bridging the gap between theoretical work and
practical instances.
Famously riddled with worst-case impossibilities~\citep{ck02},
modern social choice theory has been largely divorced from the analysis of election data.
Part of the theory considered distributional models of elections, but these models were overly prescriptive and, anyway, not easy to relate to real-world elections.
To address these concerns, \citet{sfs+20} created a \emph{map of elections}:
A two-dimensional embedding of election instances (originally from
distributional models and, in subsequent
work~\citep{boe-bre-fal-nie-szu:c:compass,fal-kac-sor-szu-was:c:microscope},
also from real-world data) with the following properties:
\begin{itemize}
\item This map recovers tell-tale features of the different data sources,
	turning elections generated by a specific probability distribution into a compact
	cluster, which evidently only covers a small subset of interesting
	elections.
\item Key features of election instances vary continuously over the map,
  showing that an election's position on the map is highly
	informative\,---\,despite rendering the high-dimensional space of
	elections in just two dimensions.
\item The map's two axes can be given a conceptual interpretation, and the
	boundary of the map be traced by natural ``extreme'' elections.
\end{itemize}
One achievement of this line of work was to highlight distributional models which, at least on the axes of the map, capture the range of interesting elections.
We apply a similar approach to fair division.

\subsection{Our Approach and Results}\label{sec:aproachandresults}
In fact, we create \emph{two} maps of allocation instances for indivisible goods in this paper.
In \cref{sec:embeddingmap}, we create a map by following the methodology
of \citet{sfs+20}:
we define a natural distance between fair division instances (the $\ell^1$ distance between utility matrices up to row and column permutations),
and use multi-dimensional scaling~\citep{kru:j:mds} (a common technique for dimensionality reduction) to find a 2-dimensional
\emph{distance-embedding map} of a given set of allocation instances that
approximately preserves the pairwise distances.
To scale this approach to allocation instances with many agents and goods, we also propose a computationally tractable proxy distance, which leads to almost identical maps at a much smaller computational cost.

Using data from three real-world data sources and several synthetic distributions over approval instances, we show that the map picks up on common properties of instances from the same data source.
We also show that key features of the allocation instance (e.g., the maximum achievable Nash welfare or the existence of envy-free allocations) are distributed in clear patterns across the map.
Whereas two natural probability distributions over instances fail to cover the
whole range of real-world instances, a new distribution we propose covers the whole
map in our experiments and is therefore a natural choice for future synthetic
experiments in fair division.

In \cref{sec:explicit}, we go beyond the heuristically generated embedding described above, by providing an \emph{explicit map}, i.e., an explicit function from allocation instances into $\mathbb{R}^2$, which reproduces the general layout of the distance-embedding map.
Since an instance's position on this explicit map is given by the two largest singular values of its utility matrix, our explicit map is amenable to theoretical analysis.
In particular, we tightly characterize the range of the map, and identify (up to
rounding terms) the most extreme instances in the map's four corners.
We conclude by showing that the explicit map can similarly segment instance sources and features.

\section{Preliminaries}\label{sec:prelim}

Let $[n]$ be a set of agents and $[m]$ be a set of goods.
For ease of exposition, we assume throughout the paper that $m \geq n \geq 2$, which arguably includes all interesting allocation instances.
An \emph{allocation instance} (of indivisible goods) is described by a \emph{utility matrix}
$U \in \mathbb{R}_{\geq 0}^{n \times m}$ whose entries $u_{i,j}$ describe agent
$i$'s utility for good~$j$. We refer to the $i$th row of
this matrix as $i$'s \emph{utility vector} $\vec{u}_i \in \mathbb{R}_{\geq
0}^m$. We assume that preferences are additive, so that agent $i$'s utility for a \emph{bundle} $S \subseteq [m]$ of
goods is given by $u_i(S) \coloneqq \sum_{j \in S}
u_{i,j}$. Since we only consider tasks in which agents' utilities $u_i([m])$ for
the whole bundle are normalized to 1, we consider exactly the set of \emph{row-stochastic} matrices $U$.

\subsection{Characteristic Instances}
\label{sec:prelim:instances}
As useful signposts for navigating through the space of allocation instances, we define several \emph{characteristic instances}.
Each of these instances represents an intuitively extreme scenario with easily-understood, symmetric structure.
Due to space limitations, we introduce these instances in words here, and refer
the reader to \cref{fig:characteristic-instances} in the appendix for a matrix
representation. For any $n$ and $m$, our three characteristic instances are the
following:
\begin{description}
	\item[Indifference (IND)] models the situation where each good is equally
		valuable to each agent.  Thus, all entries of its utility matrix are
		$\nicefrac{1}{m}$.

	\item[Separability (SEP)] captures the scenario in which each agent values
		exactly one
		good, distinct from the goods that all other agents value.
Thus, its utility matrix
is a matrix with ones on the diagonal and zeros everywhere else. In particular,
if $m>n$, all but the first $n$ columns have all-zero entries.

\item[Contention (CON)]
considers one single good
valued by all agents, and all
other goods
yielding no value to any agent.
Hence, its utility matrix has a first column of ones, and is zero everywhere else.

\end{description}

In \cref{sec:explicit:theory}, the explicit map will lead us to introduce three
new characteristic instances: two variants of separability and an entirely new
instance called \emph{bicontention}.

\subsection{Real-World Instances}
\label{sec:prelim:realdata}
Three of the instance sources we consider generate instances derived from real-world preferences over goods. Two of these sources have not been previously analyzed in the fair-division literature:
\begin{description}
\item[Spliddit.]
The heart of our real-world data is a dataset~\citep{splidditdata} of all
allocation instances submitted to Spliddit as of 2022.
This dataset is particularly valuable because it represents instances that
real Spliddit users were hoping to solve.  Since most of the 3000 Spliddit
instances are small, our evaluation will focus on two combinations of $n$,
$m$ that are relatively well represented: First, we will study the setting
$n=3, m=6$, for which the number of instances (namely, 1847) is highest.  Since, for
larger dimensions, the number of Spliddit instances drops precipitously,
only 16 Spliddit instances exist for our second evaluation scenario of
$n=m=5$.
\item[Island.]
To obtain this dataset~\citep{islanddata}, \citet{bis+18} elicited additive utilities for private goods (though they were motivated by a
public-goods setting), by asking
572 crowd workers to spread 100 points between 10 items in proportion to
how much they would value these items (a map, pocket knife, compass, etc.) if
they were stranded on a deserted island.
By sampling sets of $n$ agents and $m$ goods and rescaling agents'
utilities, we simulate (hypothetical) allocation scenarios in which only those
$m$ goods stand to be allocated between those $n$ (real but fictitiously
stranded) agents.
\item[Candy.]
	Our final dataset~\citep{candydata} has a similar shape and consists of the
	additive preferences over 10
	types of snacks indicated by 48 teenagers attending a summer camp. We again obtain instances by subsampling, assuming that only
	one snack of each type is available.
\end{description}

\subsection{Distributions over Synthetic Instances}
In addition to the above instances derived from practical data, we consider
synthetic instances drawn from three types of distributions:
\begin{description}
	\item[I.i.d.]
As described in the introduction, i.i.d.\ valuations have been empirically and theoretically studied in the literature.
Each agent $i$'s utility vector is independently generated by sampling $i$'s values for the $m$ goods independently from some fixed distribution $\mathcal{D}$, and then rescaling this vector to sum up to 1.
In our experiments, we choose $\mathcal{D}$ as the uniform distribution over
$[0,1]$ and as the exponential distribution. (The exponential
distribution's rate is inconsequential since utilities are normalized.)
\item[Attributes.]
This describes a natural explanatory model of how agents' utilities arise (also used by \citet{boe-hee-szu:t:map-roommates}).
Let $d\in\naturals$ be a fixed number of \emph{attributes}.  For each good $j$, we
sample a vector $\vec{g}_j$ from $[0,1]^d$ uniformly at random (higher coordinates
indicate that the good is more desirable along an
attribute). %
For each agent $i$, we sample a \emph{priority vector}~$\vec{a}_i$ over the attributes also
from $[0,1]^d$
(higher coordinates indicate that the agent cares more about an attribute).
Then, agent $i$'s utility for good $j$ is proportional to the dot product
of~$\vec{a}_i$ and $\vec{g}_j$.
\item[Resampling.]
The former two distributions will end up only covering small areas of the
map. Hence, we introduce a third distribution, inspired by that over approval
elections~\citep{szu-fal-jan-lac-sli-sor-tal:c:map-approval}, which will cover
the range of real-world allocation instances.
For each agent, we generate a set of \emph{approved} goods over which the
agent splits the total utility of~$1$ equally.
Given two parameters $p \in
[0,1]$ and $\phi \in [0,1]$, we first choose the instance's \emph{central approval set} $V^*$
by uniformly drawing $\lfloor p \cdot m \rfloor$~goods. 
Then, we determine whether agent $i$ approves a good $j$ as follows (independently across $i,j$):
with probability $1 - \phi$, $i$ approves $j$ iff $j \in V^*$; else, $i$ approves $j$ with probability $p$.
Should this process leave $i$ without any approved goods, they approve one good uniformly at random.

\end{description}

\section{Distance-Embedding Map}
\label{sec:embeddingmap}
We create our first map by performing the following two steps, introduced by
\citet{sfs+20}.
First, we define a notion of
distance between pairs of allocation instances, and compute all pairwise
distances for a collection of instances from the sources described above.\footnote{\Cref{app:datasets} describes the number and parameters of the instances we study from each source.}
Second, we use \emph{multi-dimensional scaling}~\citep{kru:j:mds} to embed this
collection of instances into the plane in a way that approximately preserves
their pairwise distances,
which will allow us to see patterns in the similarity of instances.

\subsection{Distances between Allocation Instances}
Na\"ively, we would like to measure the distance between two instances (with
equal $n$ and $m$) through the entry-wise $\ell^1$ norm.
That is, if the instances' utility matrices are $U^1$ and $U^2$, we would
calculate their distance $||U^1 - U^2||_{1,1}$ by summing up, over all $n \cdot
m$ coordinates, the absolute difference between $U^1$'s and $U^2$'s entry at
this coordinate.
This distance is, however, not desirable, since the ordering of rows and columns
in a utility matrix is arbitrary but would greatly impact this distance.
Instead, an appropriate distance between allocation instances ought to remain
invariant when reordering the utility matrices' rows (i.e., agents) or columns
(i.e., goods).

Our \emph{valuation distance} achieves these goals of anonymity and neutrality by explicitly minimizing over all row and column permutations.
To express this without matrix notation, suppose that we have bijections
$\pi_{\mathit{agents}} \colon [1, n] \to [1, n]$ and $\pi_{\mathit{goods}}
\colon [1, m] \to [1, m]$
between, respectively, the agents and the goods of the first and second
allocation instance.
By summing up, for each agent $i$ and each good~$j$, the absolute difference
$|U^1_{i,j} - U^2_{\pi_{\mathit{agents}}(i), \pi_{\mathit{goods}}(j)}|$ between
$i$'s utility for $j$ and the utility of $i$'s matched agent
$\pi_{\mathit{agents}}(i)$ for $j$'s matched good $\pi_{\mathit{goods}}(j)$, we
calculate the entry-wise $\ell^1$ distance; the valuation distance is defined as
the minimum of this distance, taken over all matchings $\pi_{\mathit{agents}}$
and $\pi_{\mathit{goods}}$.
A nice property of the valuation distance is that two instances $U^1$ and $U^2$
have distance zero exactly if they are identical up to relabeling agents and
goods, i.e., they are \emph{isomorphic}. %

\begin{figure*}[tbh]
\centering
	\def\ratio{.5}
	\includegraphics[height=18pt]{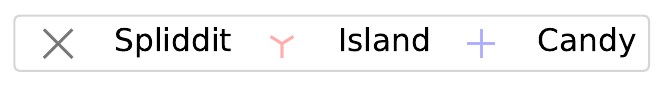}
	\includegraphics[height=18pt]{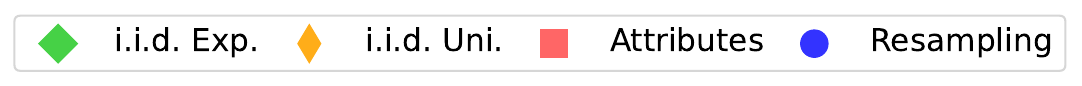}\\
	\begin{minipage}{.44\linewidth}%
	  \begin{subfigure}[t]{\ratio\linewidth}
	  	\centering
	  		\includegraphics[width=\linewidth, trim=0cm 1cm 0cm 0cm, clip]{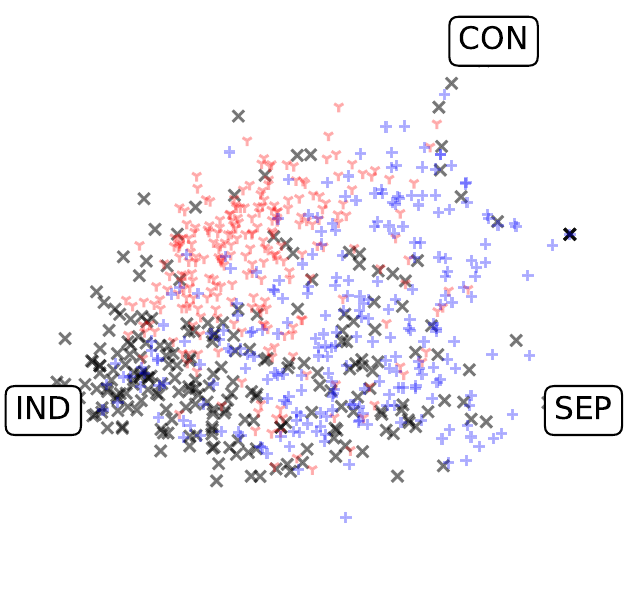}
				\caption{Real-world sources}%
     \label{fig:real36}
	  \end{subfigure}%
	  \begin{subfigure}[t]{\ratio\linewidth}
	  	\centering
	  		\includegraphics[width=\linewidth, trim=0cm 1cm 0cm 0cm, clip]{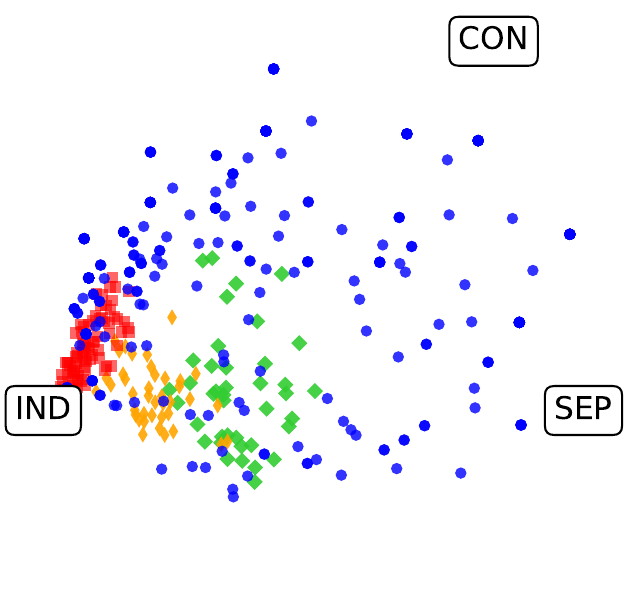}
			\caption{Synthetic sources}%
     \label{fig:synthetic36}
	  \end{subfigure}\\
	  \begin{subfigure}[t]{\ratio\linewidth}
	  	\centering
	  		\includegraphics[width=\linewidth]{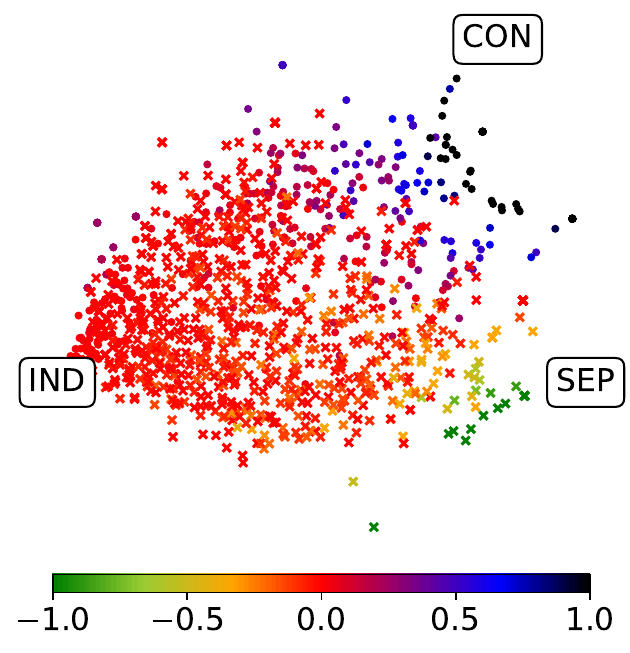}
			\caption{Minimax Envy}%
              \label{fig:envy36}
    \end{subfigure}%
	  \begin{subfigure}[t]{\ratio\linewidth}
	  	\centering
	  		\includegraphics[width=\linewidth]{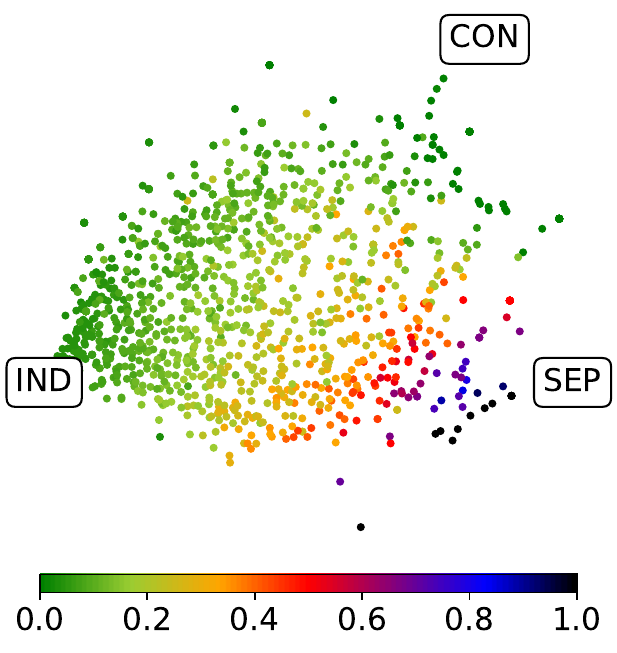}
				\caption{Max. Nash Welf.%
	  	 \vphantom{$\text{brown star}\mathop{\hat{=}}0$}}
       \label{fig:nash36}
	  \end{subfigure}%
	\end{minipage}\hfill%
	\hspace{-\threesixboxwidth}\raisebox{-14em}{\usebox{\threesixbox}}%
	\rule[-14em]{2pt}{29em}%
	\raisebox{-14em}{\usebox{\fivefivebox}}\hspace{-\fivefiveboxwidth}%
	\hfill%
	\begin{minipage}{.49\linewidth}
	  \begin{subfigure}[t]{\ratio\linewidth}
	  	\centering
	  	\includegraphics[width=\linewidth, trim=0cm 1cm 0cm 0cm, clip]{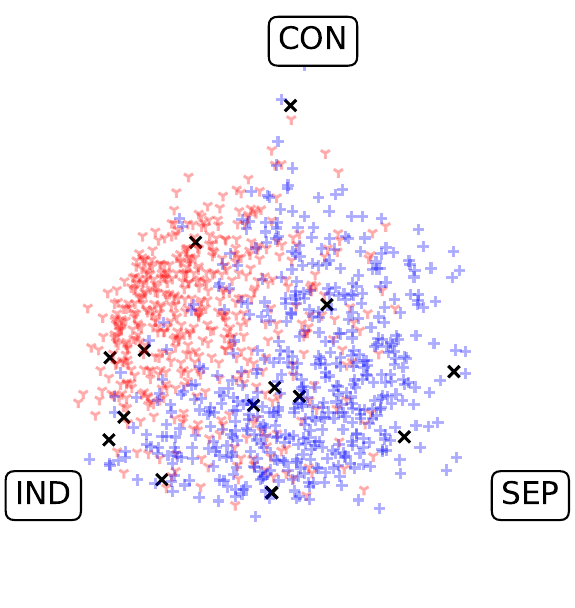}
		\caption{Real-world sources}%
    \label{fig:real55}
	  \end{subfigure}%
	  \begin{subfigure}[t]{\ratio\linewidth}
	  	\centering
	  	\includegraphics[width=\linewidth, trim=0cm 1cm 0cm 0cm, clip]{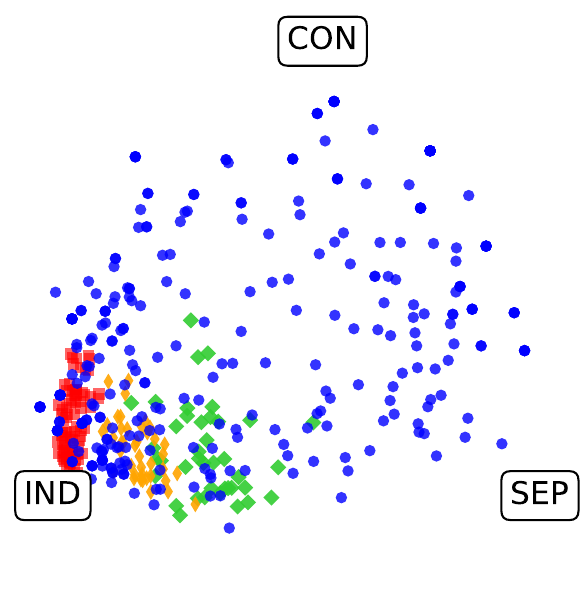}
		\caption{Synthetic sources}%
    \label{fig:synthetic55}
	  \end{subfigure}\\
	  \begin{subfigure}[t]{\ratio\linewidth}
	  	\centering
	  	\includegraphics[width=\linewidth]{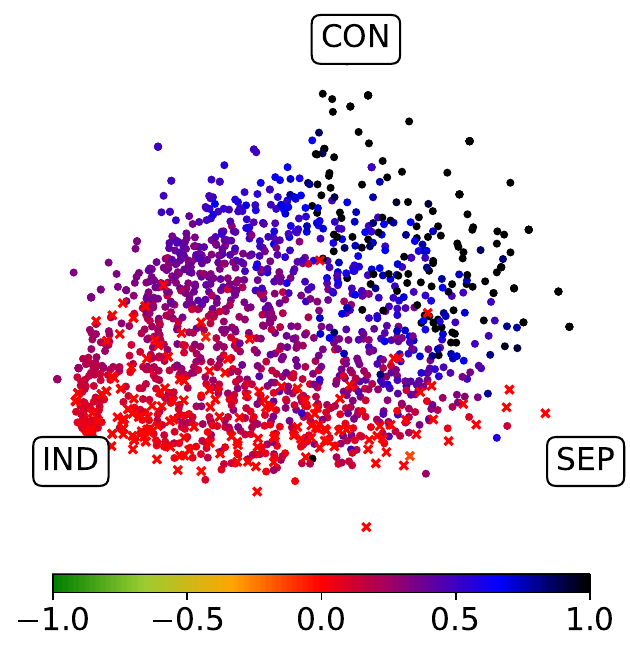}
		\caption{Minimax Envy}%
              \label{fig:envy55}
	  \end{subfigure}%
	  \begin{subfigure}[t]{\ratio\linewidth}
	  		\centering
	  		\includegraphics[width=\linewidth]{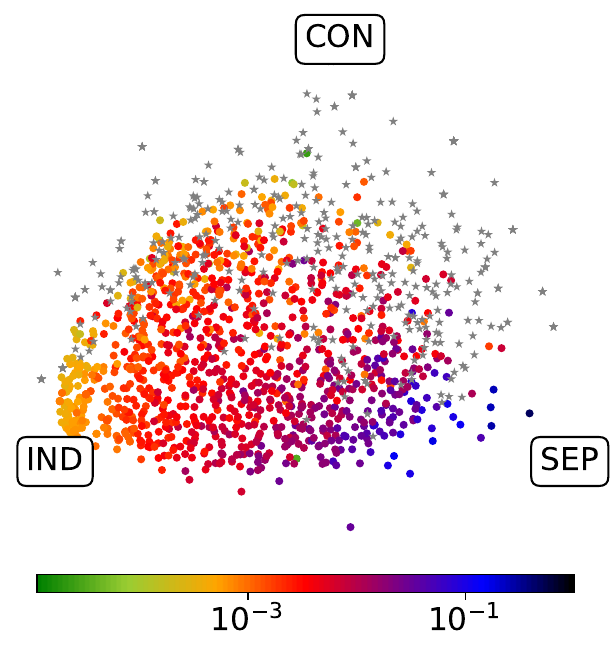}
			\caption{Max. Nash Welf. (grey star indicates 0)}%
      \label{fig:nash55}
	\end{subfigure}%
	\end{minipage}%
	\caption{Distribution of instance sources and two features on
	 the distance-embedding maps. %
 }
  \label{exp:main_maps} 
	\let\ratio\undefined
\end{figure*}

The valuation distance confirms our intuition that the
characteristic instances introduced in~\cref{sec:prelim:instances} are indeed ``extreme points''
in the space of allocation instances in the sense that, if $n=m$, the characteristic
instances are mutually equidistant at distance~$2\,(m-1)$; and this  distance is maximal (see \cref{prop:distupperbound} in \cref{sec:demand-distance}).

\subsection{Studying the Distance-Embedding Map}\label{subsec:stude}
To plot distance-embedding maps, we first generate a collection of instances
from all of our sources\,---\,as mentioned in \cref{sec:prelim:realdata}, we
consider 
two combinations of $n,m$: for $n=3, m=6$ (``$3 \times 6$'' from now on) and
for $n=m=5$ (``$5 \times 5$''); see~\cref{app:main-datasets} for details.
Then, we compute the valuation distance between all pairs of instances in
each collection,
and embed the distances
in 2D Euclidean space using multi-dimensional scaling (implemented
in scikit-learn, using default parameters).\footnote{We implemented the
	generation of our distance-embedding maps as a
	module of
\emph{mapel} (\url{https://mapel.simple.ink}), a framework
for computing maps of elections~\citep{sfs+20}, and will release this module as
open source.}
\cref{exp:main_maps} displays the resulting embeddings, where each point represents an instance, and an instance's location stays fixed within all maps of the same dimension.
Note that the characteristic instances
lie
in distinct corners of the map, reflecting
our observation at the end of the previous subsection.
Most of the map is spanned in a triangle between these three instances, making them useful points of reference.

We can also immediately make out that the instance sources are
spread in different patterns across the map.
Among the real-world sources (\cref{fig:real36,fig:real55}),
the Spliddit instances are concentrated near indifference for $3 \times 6$ instances but spread evenly across the map among $5 \times 5$ instances.
The island instances tend to vary between indifference and contention whereas the candy instances tend to lie closer to separability, which suggests that the agents' preferences over survival items are more aligned than the children's preferences over snacks.
Among the synthetic distributions (\cref{fig:synthetic36,fig:synthetic55}), the
i.i.d.\ distributions and attributes model generate concentrated clusters close
to indifference (\Cref{fig:exponstri,fig:unistri,fig:attristri,fig:rsmplstri,fig:spldstri,fig:islstri,fig:candiescompar}
in \cref{app:moreexperiments} display each distribution separately).  
The observation that these synthetic distributions do not cover the range of the
real-world data (and even of just the Spliddit data) raises concerns about the degree to which distributional-analysis results for i.i.d.\ instances (or attributes instances) can be applied to real-world fair division problems.
Since the resampling instances cover the map to a much higher degree, the
resampling distribution appears to be a more fruitful proxy for real-world
instances for future studies.\footnote{In \cref{app:moreexperiments}, we show that resampling instances continue to cover the embedding map for substantially larger instance dimensions ($n=10,m=20$), using the methodology of \cref{sec:demanddistancestillinbody}.}

Next, we discuss how several natural features of allocation instances vary
across the map, which we compute using constraint
programming.
In \cref{fig:envy36,fig:envy55}, we study to which degree the instances allow
for (almost) envy-free allocations.
Specifically, denote an allocation of all goods over the agents by $[m] = S_1
\dotcup S_2 \dotcup \cdots \dotcup S_n$, where $S_i$ denotes agent $i$'s bundle.
The \emph{minimax envy} is the minimum, over all
allocations, of the largest amount by which some agent envies another, i.e.,
$\max_{i \neq i'} u_i(S_{i'}) - u_i(S_i)$.
An instance has envy-free allocations if and only if the minimax envy is at most 0 (we highlight these instances with cross markers).
But the minimax envy gives a gradual measure of how far envy-freeness is from being achievable (or how much it can be overattained).
As we can see, an instance's position on the map is highly informative for the
minimax envy and the existence of envy-free allocations.
For $3 \times 6$ instances, envy-freeness seems to be hopeless near contention
(which is also the case for contention itself) and easy near separability. For
the rest of the map, minimax envy is close to zero, which means that almost envy-free allocations exist widely, and exactly envy-free allocations generally exist below the upper outline of the map.
$5 \times 5$ instances are less hospitable to envy-freeness: envy-free
allocations exist only near the lower border of the map, and the minimax envy
becomes higher (i.e., worse), the further up on the map an instance lies.

Finally, \cref{fig:nash36,fig:nash55} show that the maximum Nash welfare
achievable by any allocation also varies smoothly over the map, increasing the closer an instance lies to separability.
It is noteworthy that this map differs only slightly from the maximum
utilitarian welfare that can be achieved (see \cref{fig:explicit:comparison}
below). We show the distribution of various additional features
in~\cref{app:moreexperiments}.

\subsection{A Faster Distance for Large Instances}
\label{sec:demanddistancestillinbody}
Creating these distance-embedding maps required computing the valuation distance for many pairs of instances.
We were able to do this because the instance dimensions we have focused on, i.e., the dimensions that regularly appear on Spliddit, are rather small.
In general, however, computing the valuation distance is NP-hard (\cref{sec:valuation-distance}), and would be prohibitively slow to compute for, say, instances with dimensions $n=m=10$ (even with an integer linear programming solver). %

To verify that the patterns we described above extend to large (synthetic) instances, and to ready our mapping approach for a future in which larger fair-division problems might be routinely solved,
we define in \cref{sec:demand-distance} the \emph{demand
distance}, a heuristic approximation to the valuation distance.
Crucially, the demand distance between two instances can be computed in
polynomial time by finding a maximum-weight bipartite matching, which is also
fast in practice. Though the demand distance may, in principle, deviate
substantially from the valuation distance, we find that both correlate very
well, with a Pearson correlation
coefficient of at least 97\% across our dimensions (\cref{fig:distances-correlation} in
\cref{sec:valuation-distance} correlation diagrams).
Maps resulting from both distances are virtually
indistinguishable, which is demonstrated by numerous juxtaposition figures
in \cref{app:moreexperiments}.

Using this demand distance, we compute a distance-embedding map for $n=10$ and $m=20$, which we defer to \cref{app:moreexperiments} due to space limitations.\footnote{Since none of the real-world instances are this large, and since i.i.d.\ and attributes instances are even more clustered around indifference, we focus on resampling instances. In \cref{app:bigger-instances-dataset}, we give details about the instances chosen, and show how the model's parameters determine the instances location on the map.}
The broad patterns we observed in $3 \times 6$ and $5 \times 5$ instances continue to hold, which we have also confirmed for even larger instances (\cref{app:large-instances-datasets}).
Creating maps using the demand distance scales readily to larger sizes\,---\,even, say, to $n=m=100$.

\section{Explicit Maps}
\label{sec:explicit}
Generating maps through a distance embedding
entails several inherent disadvantages:
\begin{description}
	\item[Instability.] The distance-embedding map may change non-continuously as
the result of slight changes to the random seed or the set of mapped
instances (though we did not observe this, \cref{app:main-datasets}).
\item[Data dependence.] Suppose that you want to place an allocation
instance on the map to predict its properties. This would require data for all
other instances and computing pairwise distances, which would be more difficult
than directly computing your instance's properties.
\item[Theoretical intractability.] Which instances are ``most
extreme''? Where do instances from a probability distribution lie on
the map? One can answer such questions
empirically, but not
theoretically.
\end{description}

To overcome these challenges, we propose an \emph{explicit map} of fair division instances: a function $\mu$ from allocation instances to $\mathbb{R}^2$, which replicates the general layout of the distance-embedding map.
Specifically, this function maps $n \times m$ utility matrices as follows:
\[ \mu : \mathbb{R}^{n \times m} \to \mathbb{R}^2 \quad \quad U \mapsto \big(\sigma_1(U), \sigma_2(U)\big), \]
where $\sigma_1(U)$ and $\sigma_2(U)$ are the largest and second-largest singular values of the matrix $U$, respectively.
As \cref{fig:explicit:embedding} shows (on the same $5 \times 5$ map as
in~\cref{exp:main_maps}), these two %
values closely capture the vertical and horizontal ordering
of instances in our distance-embedding map, ensuring that the two maps are
closely aligned (\cref{app:fig:explicit:embedding:singvalues} in \cref{app:sec:explicit} shows the corresponding $3 \times 6$ map).
\begin{figure}\centering%
	\def\ratio{.47}
  \begin{subfigure}[t]{.45\linewidth}
		\includegraphics[width=\ratio\linewidth]{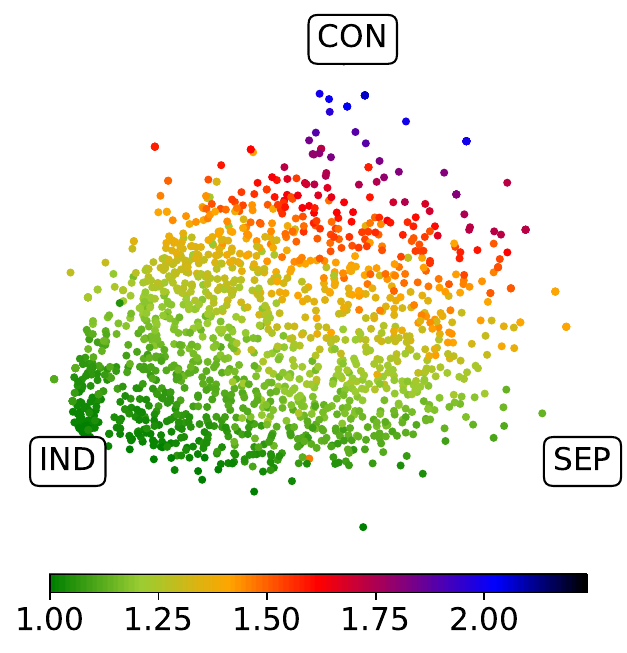}\hspace{.5em}%
	  \includegraphics[width=\ratio\linewidth]{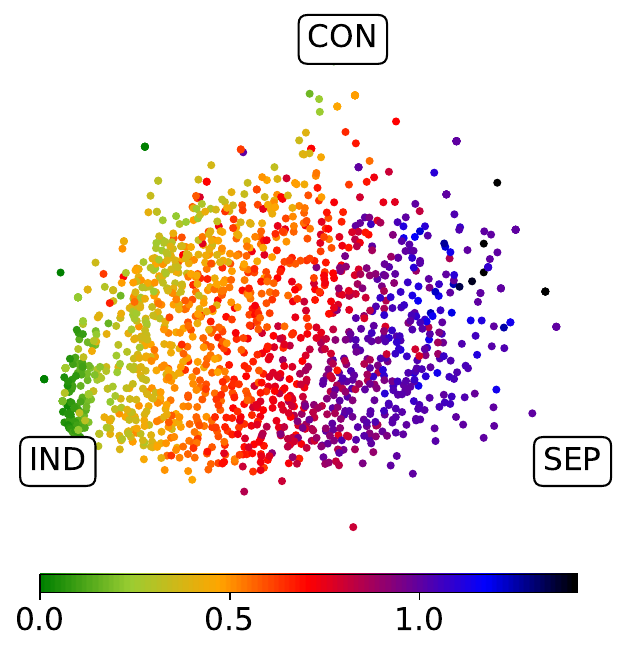}
		\caption{Distribution of $\sigma_1$ (left) and $\sigma_2$ (right) on our
		distance-embedding map for $5 \times 5$.}
    \label{fig:explicit:embedding}
	\end{subfigure}\hfill%
  \begin{subfigure}[t]{.55\linewidth}%
		\hspace{5px}%
		\includegraphics[width=0.41\linewidth, trim={0 0 0 0em}, clip]{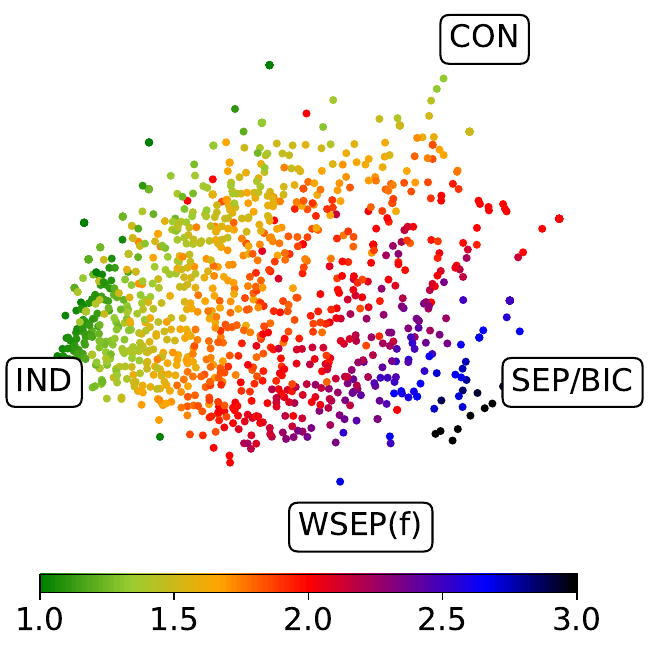}\hfill%
	  \includegraphics[width=0.54\linewidth]{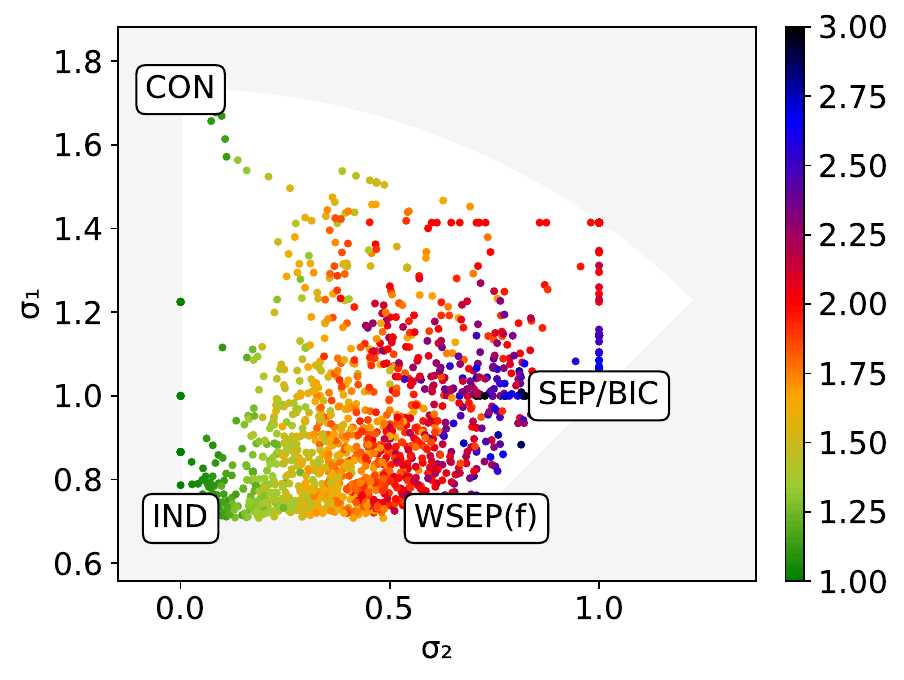}
    \caption{Distribution of max.\ utilitarian welf.\ on our distance-embedding (left) and explicit (right) map, $3 \times 6$ instances.}
	  \label{fig:explicit:comparison}
	\end{subfigure}
	\caption{Distributions of the $\sigma_1$, $\sigma_2$ values and comparison of
		the two maps we introduce.}
	\let\ratio\undefined
\end{figure}
In this section,
we show that the explicit map is similarly informative as the distance-embedding map, while being stable, data independent, and theoretically tractable by design.

\subsection{Demystifying the Singular Value Map}
\label{sec:explicit:demystifying}
We begin by recalling facts about singular values that make them suitable components for our explicit map function.
First, the singular values are invariant under permutations of rows or columns in the utility matrix, so that relabeling agents or goods will not change the map embedding. %
Second, $\sigma_1$ and $\sigma_2$ are %
1-Lipschitz continuous in the entries of the matrix, which
together with the previous point implies that two instances with small valuation distance must be placed near each other on the explicit map.
Third, adding a column of zeros, i.e., a good that no agent values, does not change the singular values, which means that instances can be naturally compared across different $m$.
Finally, implementations of efficient algorithms for computing singular values are readily available (e.g., in numpy), which makes it easy to compute a given instance's position on the map.

We now aim to give the reader an intuition for what information $\sigma_1$ and $\sigma_2$ express about an allocation instance and why.
We begin with $\sigma_1$, which can be expressed as 
\begin{equation}
\sigma_1 = \max_{\vec{v}_1 \in \mathbb{R}^m, \norm{\vec{v}_1}=1} \norm{U \, \vec{v}_1},
\label{eq:sigma1rayleigh}
\end{equation}
where $\norm{\cdot}$ is the Euclidean ($\ell^2$) norm.
Since we rarely think about utility matrices as linear functions over unitary vectors, it is instructive to pretend
that the norms in \cref{eq:sigma1rayleigh} were $\ell^1$-norms.
In this case (choosing $\vec{v}_1$ nonnegative w.l.o.g.), the $U \, \vec{v}_1$ being optimized over are the convex combination of $U$'s columns, for the coefficients given by $\vec{v}_1$.
If we were indeed maximizing the $\ell^1$-norm of $U \, \vec{v}_1$, $\sigma_1$ would be the largest column sum, or \emph{maximum demand}.
Though the $\ell^2$ norm slightly complicates the picture,\footnote{It gives an advantage to combinations of columns in which several columns have positive coefficients, and it encourages making a few coordinates of $U \, \vec{v}_1$ large rather than all.}
$\sigma_1$ and the maximum demand are very highly correlated:
across our $3 \times 6$ instances, for example, the correlation coefficient is $97\%$. %
Thus, $\sigma_1$ can be understood as good approximation of the maximum demand, up to shifting and rescaling.

To interpret the second-largest singular value $\sigma_2$, we recall how the
singular value decomposition of an $\mathbb{R}^{n \times m}$ matrix $U$ can be
used to find a low-dimensional embedding of the row vectors (in our case, the
agents' utility vectors).\footnote{See Chapter 3 by~\citet{bhk20} for a detailed
	explanation. Though singular values are closely connected to dimensionality
	reduction, our use is non-standard.  Applying value decomposition directly to
	find a 2D embedding of utility matrices would result in embeddings highly
	sensitive to row and column permutations and would thus not be fruitful.  One way to
	understand the discussion above is that we map each utility matrix to the
	square roots of the top-two eigenvalues in its principal component analysis;
	except that we do not shift column sums to zero, since this would, e.g., make
	IND and CON indistinguishable.  }
For example, the line through the origin $\thespan{\vec{v}_1}$, spanned by the argmax of \cref{eq:sigma1rayleigh}, is the best 1-dimensional space to embed the rows in, in the following sense: if we sum up, for each row $\vec{u}_i \in \mathbb{R}^m$, the squared length of its projection onto this space%
,
$\thespan{\vec{v}_1}$ maximizes this sum across all 1-dimensional subspaces.
In fact, this sum of squared projection lengths is $\sigma_1^2$, which means that $\sigma_1$ measures ``how much'' of the row vectors can be captured by a 1-dimensional embedding.
Similarly, $\sigma_2$, which can be calculated as
\[ \max_{\vec{v}_2 \in \mathbb{R}^m, \norm{\vec{v}_2}=1, \vec{v}_2 \perp \vec{v}_1} \norm{U \, \vec{v}_2}, \]
measures how much the row embedding improves when going from the optimal 1-dimensional space $\thespan{\vec{v}_1}$ to the optimal 2-dimensional space $\thespan{\vec{v}_1, \vec{v}_2}$.

Thus, as a first approximation, $\sigma_2$ measures how diverse the agents' utilities are.
It is zero if all agents have the same utility vector, and large when there are
blocks of agents that completely disagree on which goods have nonzero value.
To again find a more elementary correlate, we define an instance's
\emph{preference diversity} as the mean $\ell^2$ distance between utility
vectors, averaged over all pairs of agents in the instance. Again, we find a very high
correlation (96\% correlation coefficient for $3 \times 6$).

\subsection{Theoretical Properties of the Map}
\label{sec:explicit:theory}

\begin{figure}%
\begin{subfigure}[b]{.5\linewidth}
  \centering%
  \resizebox{.9\linewidth}{!}{\begin{tikzpicture}[
    declare function={
      n=5;
      m=6;
      z=0.25;
      eps=0.001;
      rootnm = sqrt(n/m);
      rootn = sqrt(n);
      rootnover2 = sqrt(n/2);
  },
  constraint/.style={labelblue},
  dotcircle/.style={radius=1.5pt},
  scale=1.2]
  
  \pgfmathsetmacro\WSEPx{sqrt(1/floor(m/n))}
  \pgfmathsetmacro\WSEPfx{sqrt(floor(m/n)*n/m)*rootnm}
  \pgfmathsetmacro\BICx{sqrt(floor(n/2))}
  \pgfmathsetmacro\WSEPy{\WSEPx}
  \pgfmathsetmacro\WSEPfy{rootnm}
  \pgfmathsetmacro\BICy{\BICx}
  \pgfmathsetmacro\CONx{0}
  \pgfmathsetmacro\CONy{rootn}
  \pgfmathsetmacro\INDx{0}
  \pgfmathsetmacro\INDy{rootnm}
  \pgfmathsetmacro\midcirclex{cos(67.5)*sqrt(n)}
  \pgfmathsetmacro\midcircley{sin(67.5)*sqrt(n)}
  
  \begin{axis}[
    name=main plot,
    axis lines = left,
    xmax=rootnover2+z/2,xmin=-3/4*z,
    ymax=rootn+z/2,ymin=rootnm-z,
    xtick = {0,rootnm,rootnover2},
    ytick = {rootnm,rootnover2,rootn},
    xticklabels = {0, {$\sqrt{n/m}$}, {$\sqrt{n/2}$}},
    yticklabels = {$\sqrt{n/m}$, {$\sqrt{n/2}$}, $\sqrt{n}$},
    domain=-3/4*z:rootnover2+z/2,
    samples=300,
    clip=false,
    grid=both
  ]
  
  \addplot[draw=none,name path = circle] {sqrt(n-x^2)};
  \addplot[draw=none,name path = x] {max(x,rootnm)};
  
  \addplot [shadingblue] fill between [of=circle and x, soft clip={domain=0:rootnover2}];
  
  \fill [black] (\WSEPx, \WSEPy) circle[dotcircle];
  \node [right, align=center] at (\WSEPx, \WSEPy) {\textbf{WSEP} \\ \scalebox{.8}{$\langle \sqrt{1 / \lfloor m/n \rfloor}, \sqrt{1 / \lfloor m/n \rfloor} \rangle$}};
  
  \fill [black] (\WSEPfx, \WSEPfy) circle[dotcircle];
  \node [below, align=center] at (\WSEPfx, \WSEPfy) {\textbf{WSEPf} \\ \scalebox{.8}{$\langle \sqrt{n/m}, \sqrt{\lfloor m/n \rfloor} n/m\rangle$}};
  
  \fill [black] (\BICx, \BICy) circle[dotcircle];
  \node [above, align=center] at (\BICx, \BICy) {\textbf{BIC} \\ \scalebox{.8}{$\langle \sqrt{\lfloor n/2 \rfloor}, \sqrt{\lfloor n/2 \rfloor} \rangle$}};
  
  \fill [black] (\CONx, \CONy) circle[dotcircle];
  \node [below, align=center] at (\CONx, \CONy) {\textbf{CON} \\ \scalebox{.8}{$\langle \sqrt{n}, 0 \rangle$}};
  
  \fill [black] (\INDx, \INDy) circle[dotcircle];
  \node [below, align=center] at (\INDx, \INDy) {\textbf{IND} \\ \scalebox{.8}{$\langle \sqrt{n/m}, 0 \rangle$}};
  
  \node [constraint,right] at ($(\INDx, \INDy)!0.5!(\CONx, \CONy)$) {$\sigma_2 \geq 0$};
  \node [constraint,above] at ($(\INDx, \INDy)!0.5!(\WSEPfx, \WSEPfy)$) {$\sigma_1 \geq \sqrt{n/m}$};
  \node [constraint,above left,xshift=.2cm] at ($(\BICx, \BICy)!0.5!(\WSEPx, \WSEPy)$) {$\sigma_2 \leq \sigma_1$};
  \node [constraint,below,xshift=-.65cm] at (\midcirclex, \midcircley) {$\sigma_1^2 + \sigma_2^2 \leq n$};
  
  \end{axis}
  
  \node [below right] at (main plot.right of origin) {$\sigma_2$};
  \node [left,xshift=-.8cm] at (main plot.above origin) {$\sigma_1$};
  
  \end{tikzpicture}}
\caption{Bounding inequalities of the map, and locations $\langle \sigma_1,
\sigma_2 \rangle$ of characteristic instances. %
}
\label{fig:explicit:theory}
\end{subfigure}\hfill%
\begin{subfigure}[b]{.5\linewidth}
  \includegraphics[width=\linewidth]{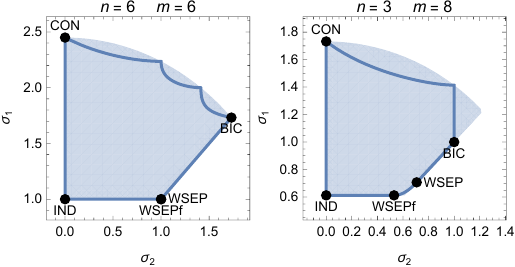}
    \caption{Explicit map for $\langle n,m \rangle = \langle 6,6 \rangle$ and $\langle 3,8 \rangle$. By \cref{thm:explicit:e,thm:explicit:s,thm:explicit:w,thm:explicit:n}, the map is contained in the shaded area. Lines trace interpolations between named instances (see \cref{app:explicit:theory}).}
    \label{fig:explicit:outlines}
\end{subfigure}%
\caption{The general shape of the explicit map (left) and those for selected
values of $n$ and~$m$ (right).}
\end{figure}
We now theoretically characterize the image of our map function $\mu$ for given dimensions $n, m$.
Our task\,---\,characterizing the combinations of singular values in stochastic
rectangular matrices\,---\,is of interest independently to our fair-division
setting, but, to our knowledge, has not previously been undertaken.
This process will give us a more precise understanding of what makes instances extreme along either dimension of our map.
\Cref{fig:explicit:theory} summarizes both the outlines of the map and the positions of characteristic instances, which can guide the reader through this section.
We orient on the page such that $\sigma_1$ grows in the ``North'' and $\sigma_2$ in the ``East'' direction, which by \cref{fig:explicit:embedding} generally aligns with how we have presented the distance-embedding map.

Whereas $\CON$ and $\IND$ still mark the left corners of our map, the other two corners lead us to new characteristic instances.
For the lower-right corner, we refine our definition of separability since $\SEP$ (with $\sigma_1 = \sigma_2 = 1$) only lies on the lower boundary if $n=m$.
If $m$ is a proper multiple of $n$, the lower-left corner is instead inhabited by \emph{wide separability}, in which every agent values $m/n$ disjoint goods, giving equal value $n/m$ to each of them.
In \cref{app:explicit:theory:wsep}, we extend wide separability to $n
\mathrel{\not|} m$ in two slightly different ways: one, $\WSEP$, always lies on
the right border while the other, $\WSEPf$ always lies on the lower border.

The final characteristic instance is \emph{bicontention} ($\BIC$); here, half of
the agents place all utility on one common good and half of the agents on a
second common good (for odd $n$, one agent places all value on a third good).
Since this instance combines highly demanded goods with sharply distinct utility
vectors, it always lies on the right border and, for even $n$, is exactly
located in the upper-right corner.

The main results of this section address all four sides of the map.
For each side, we bound the map by an inequality and show that the inequality is sharp using our characteristic instances.
For three of the sides, we give simple, necessary-and-sufficient conditions for an instance lying on the boundary.
If $n$ is even and divides $m$, as in the left subplot of \cref{fig:explicit:outlines},
our characteristic instances lie exactly in the four corner points of the map, and we can exactly trace three of the four sides by interpolating between corner instances.
If these divisibility conditions do not hold, as illustrated in~\cref{fig:explicit:theory} and the right subplot of \cref{fig:explicit:outlines},
the characteristic instances lie in the corner up to rounding terms.
Proofs of our characterizations tend to be short and cute, but are deferred to \cref{app:explicit:theory:proofs}.

\begin{restatable}[``West'']{theorem}{thmexplicitw}
\label{thm:explicit:w}
$\sigma_2$ is at least $0$.
An instance lies on this boundary iff all agents have the same utility vector.
In particular, $\IND$, $\CON$, and their convex combinations lie on this boundary.
\end{restatable}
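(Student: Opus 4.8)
The plan is to reduce the whole statement to the rank of $U$. First, note that $\sigma_2 \ge 0$ needs no argument, since singular values are nonnegative by definition; the content of the theorem is the boundary characterization. For that, I would invoke the standard fact that the number of strictly positive singular values of a matrix equals its rank. Because $U$ is row-stochastic it is nonzero (every row sums to $1$), so $\mathrm{rank}(U) \ge 1$, and therefore $\sigma_2 = 0$ holds precisely when $\mathrm{rank}(U) = 1$.

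The second step is to show that a row-stochastic matrix has rank exactly $1$ if and only if all of its rows are equal. The ``if'' direction is immediate: if all rows equal a common vector, that vector is nonzero (its entries sum to $1$), so $U$ has rank $1$. For the converse I would write a rank-$1$ matrix as $U = \vec{a}\,\vec{b}^\top$ with $\vec{a}\in\mathbb{R}^n$, $\vec{b}\in\mathbb{R}^m$, and observe that the $i$-th row sum equals $a_i \sum_{j} b_j$; since this must equal $1$ for every $i$, the scalar $\sum_j b_j$ is nonzero and $a_i = 1/\sum_j b_j$ is the same for all $i$, so the rows coincide with $a_1 \vec{b}$. The one point worth stating carefully here is exactly the degenerate case $\sum_j b_j = 0$, which row-stochasticity rules out; this is where the normalization assumption is essential (for a general nonnegative matrix, a rank-$1$ matrix need not have equal rows, e.g.\ $\vec{a}$ arbitrary). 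I do not anticipate any real obstacle beyond flagging this use of normalization.

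Finally, I would verify the named instances directly against the characterization: $\IND$ has all rows equal to $(\nicefrac1m, \dots, \nicefrac1m)$ and $\CON$ has all rows equal to $(1, 0, \dots, 0)$, so both lie on the boundary; and any convex combination $\lambda\,\IND + (1-\lambda)\,\CON$ with $\lambda \in [0,1]$ is again row-stochastic and still has all rows equal (namely to $\lambda\,(\nicefrac1m,\dots,\nicefrac1m) + (1-\lambda)\,(1,0,\dots,0)$), hence also lies on the boundary.
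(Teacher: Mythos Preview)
Your proposal is correct and follows essentially the same route as the paper: both arguments observe that $\sigma_2=0$ is equivalent to $\mathrm{rank}(U)\le 1$, and then use row-stochasticity to upgrade ``rank one'' to ``all rows identical.'' Your treatment is simply more explicit about the rank-one factorization step (and the role of the normalization in ruling out $\sum_j b_j=0$), whereas the paper condenses this to a one-line appeal to the row sums being equal.
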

\begin{restatable}[``South'']{theorem}{thmexplicits}
\label{thm:explicit:s}
$\sigma_1$ is at least $\sqrt{\nicefrac{n}{m}}$.
An instance lies on this boundary iff all columns of its utility matrix have an
equal sum (namely, $\nicefrac{n}{m}$).
In particular, $\IND$, $\WSEPf$, and their convex combinations lie on this boundary.
\end{restatable}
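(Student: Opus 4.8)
The plan is to prove all three parts from the variational characterization of $\sigma_1$ in \cref{eq:sigma1rayleigh} together with a single test vector: the normalized all-ones vector $\vec v^{*} \coloneqq \tfrac{1}{\sqrt{m}}\,\mathbf 1_m$, where $\mathbf 1_m$ and $\mathbf 1_n$ denote all-ones vectors in $\mathbb R^m$ and $\mathbb R^n$. For the lower bound, I would simply evaluate \cref{eq:sigma1rayleigh} at $\vec v^{*}$: since $U$ is row-stochastic, $U\mathbf 1_m = \mathbf 1_n$, so $U\vec v^{*} = \tfrac{1}{\sqrt m}\mathbf 1_n$ and hence $\norm{U\vec v^{*}} = \sqrt n/\sqrt m = \sqrt{n/m}$. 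As $\norm{\vec v^{*}}=1$, this yields $\sigma_1 \ge \sqrt{n/m}$.

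For the characterization, I would argue both directions separately. For the ``only if'' direction, suppose $\sigma_1 = \sqrt{n/m}$. Then $\vec v^{*}$ attains the maximum in \cref{eq:sigma1rayleigh}; since a unit vector attains this maximum precisely when it is an eigenvector of $U^{T}U$ for its largest eigenvalue $\sigma_1^{2}$ (the standard Rayleigh-quotient fact), we obtain $U^{T}U\,\vec v^{*} = \tfrac{n}{m}\,\vec v^{*}$. Rewriting the left-hand side as $U^{T}(U\vec v^{*}) = \tfrac{1}{\sqrt m}\,U^{T}\mathbf 1_n$ and comparing coordinates gives $U^{T}\mathbf 1_n = \tfrac{n}{m}\,\mathbf 1_m$, i.e.\ every column of $U$ sums to $n/m$. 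For the ``if'' direction, suppose every column of $U$ sums to $n/m$. The largest row sum of $U$ is $1$ and its largest column sum is $n/m$, so the Schur test for nonnegative matrices gives $\sigma_1 \le \sqrt{1\cdot \tfrac{n}{m}} = \sqrt{n/m}$; equivalently, $U^{T}U$ is entrywise nonnegative and each of its row sums equals $\sum_{k}\sum_{i} u_{i,j}u_{i,k} = \sum_{i} u_{i,j}\bigl(\sum_{k} u_{i,k}\bigr) = \sum_{i} u_{i,j} = \tfrac{n}{m}$, so $\sigma_1^{2} = \rho(U^{T}U) \le \tfrac{n}{m}$. Combined with the lower bound, $\sigma_1 = \sqrt{n/m}$.

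For the ``in particular'' clause it suffices, by the characterization, to check the column-sum condition: $\IND$ has all entries $\tfrac1m$ and therefore every column sum equals $\tfrac nm$; $\WSEPf$ has all column sums equal to $\tfrac nm$ by its definition in \cref{app:explicit:theory:wsep}; and both conditions ``row-stochastic'' and ``all column sums equal $\tfrac nm$'' are preserved under convex combinations (a convex combination of row-stochastic matrices with equal column sums has the same properties), so any convex combination of $\IND$ and $\WSEPf$ lies on the boundary as well.

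The only step that needs genuine care is the ``if'' direction: it is immediate that $\mathbf 1_m$ is \emph{an} eigenvector of $U^{T}U$ with eigenvalue $n/m$, but one must exclude a strictly larger eigenvalue, and this is exactly what nonnegativity of $U$ buys us (through the Schur test, or the elementary bound ``spectral radius $\le$ largest row sum'' applied to the nonnegative matrix $U^{T}U$). Everything else reduces to the two short computations above, so the proof should be brief.
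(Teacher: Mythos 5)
Your proposal is correct, and it is worth comparing to the paper's argument because the two halves of the characterization are handled differently. The lower bound is identical: both proofs plug the normalized all-ones vector into \cref{eq:sigma1rayleigh} and use row stochasticity. For the ``only if'' direction, you invoke the Rayleigh-quotient fact that a maximizing unit vector must be a top eigenvector of $U^TU$, and read off $U^T\mathbf{1}_n = \tfrac{n}{m}\mathbf{1}_m$; the paper instead runs an explicit chain $m = \cdots \leq \cdots = m$ through Cauchy--Schwarz and the operator-norm bound and extracts the same conclusion from the equality conditions. These are two phrasings of essentially the same equality analysis, and yours is arguably the more streamlined one. The more substantive difference is the ``if'' direction: you prove it explicitly, observing that $U^TU$ is nonnegative with every row sum equal to $\tfrac{n}{m}$, so $\sigma_1^2 = \rho(U^TU) \leq \tfrac{n}{m}$, which together with the lower bound forces equality. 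The paper's written proof never argues this direction -- it only verifies that $\IND$, $\WSEPf$, and their convex combinations have equal column sums and then concludes they lie on the boundary, implicitly relying on the sufficiency it did not establish (it gestures at the known square-matrix analogue for doubly stochastic matrices). So your writeup is, if anything, more complete than the paper's on the biconditional, at the cost of importing the spectral-radius-versus-row-sum bound as an external fact.
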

\begin{restatable}[``North'']{theorem}{thmexplicitn}
\label{thm:explicit:n}
$\sigma_1$ is at most $\sqrt{n - \sigma_2^2} \leq \sqrt{n}$.
An instance lies on this boundary iff each agent values a single good, and if at most two goods are valued by any agent.
In particular, $\CON$ and, if $n$ is even, $\BIC$ lie on this boundary.
\end{restatable}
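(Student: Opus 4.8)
The plan is to derive the bound directly from the relationship between the singular values of $U$ and the sum of its squared entries. First I would recall the standard identity $\sum_{k} \sigma_k(U)^2 = \sum_{i \in [n], j \in [m]} u_{i,j}^2$ (the squared Frobenius norm equals the trace of $U^\top U$), which in particular gives $\sigma_1^2 + \sigma_2^2 \le \sum_{i,j} u_{i,j}^2 = \sum_{i \in [n]} \norm{\vec{u}_i}^2$. Since each utility vector $\vec{u}_i$ is nonnegative with $\onenorm{\vec{u}_i} = 1$, we have $\norm{\vec{u}_i}^2 \le \onenorm{\vec{u}_i}^2 = 1$, so $\sigma_1^2 + \sigma_2^2 \le n$, which rearranges to $\sigma_1 \le \sqrt{n - \sigma_2^2} \le \sqrt{n}$.

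For the boundary characterization, equality $\sigma_1^2 + \sigma_2^2 = n$ forces both inequalities above to be tight. Tightness of $\norm{\vec{u}_i}^2 \le \onenorm{\vec{u}_i}^2$ for a nonnegative unit-$\ell^1$ vector holds iff $\vec{u}_i$ is a standard basis vector, i.e.\ agent $i$ places all its utility on a single good. Tightness of $\sigma_1^2 + \sigma_2^2 \le \sum_k \sigma_k^2$ holds iff $\sigma_k = 0$ for all $k \ge 3$, i.e.\ $\operatorname{rank}(U) \le 2$. Now if every row of $U$ is a standard basis vector, the row space of $U$ is spanned by exactly those basis vectors, so $\operatorname{rank}(U)$ equals the number of distinct goods that receive positive utility from some agent; hence $\operatorname{rank}(U) \le 2$ iff at most two goods are valued in total. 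Combining the two tightness conditions yields the claimed iff. Conversely, if each agent values a single good and only two goods are valued overall, then $\sum_{i,j} u_{i,j}^2 = n$ and $\operatorname{rank}(U) \le 2$, so $\sigma_1^2 + \sigma_2^2 = \sum_k \sigma_k^2 = n$, i.e.\ the instance lies on the boundary.

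Finally I would verify the named instances: $\CON$ has every row equal to $\vec{e}_1$, so it is on the boundary (indeed $\sigma_1 = \sqrt{n}$, $\sigma_2 = 0$); and for even $n$, $\BIC$ has $n/2$ rows equal to $\vec{e}_1$ and $n/2$ rows equal to $\vec{e}_2$, so each agent values a single good and only two goods are valued, placing it on the boundary (here $U^\top U$ is diagonal with two entries $n/2$, so $\sigma_1 = \sigma_2 = \sqrt{n/2}$, matching \cref{fig:explicit:theory}). I expect no real obstacle here; the only point needing mild care is the observation that for a matrix whose rows are standard basis vectors, ``rank at most $2$'' is equivalent to ``at most two distinct goods are valued,'' and noting why for odd $n$ the extra third good in $\BIC$ raises its rank to $3$ and thus moves it off this boundary, which is exactly why the statement qualifies the $\BIC$ claim with ``$n$ even.''
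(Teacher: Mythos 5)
Your proposal is correct and follows essentially the same route as the paper's proof: the Frobenius-norm identity $\sum_t \sigma_t^2 = \sum_{i,j} u_{i,j}^2$, the per-row bound (the paper writes it as $u_{i,j}^2 \le u_{i,j}$ for entries in $[0,1]$, you as $\norm{\vec{u}_i}^2 \le \onenorm{\vec{u}_i}^2$, which is the same cross-term argument), and the two tightness conditions (single-mindedness and $\operatorname{rank}(U)\le 2$) combined via the observation that the rank equals the number of distinct goods valued. The verification of $\CON$ and $\BIC$ also matches the paper's.
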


\begin{restatable}[``East'']{theorem}{thmexplicite}
\label{thm:explicit:e}
$\sigma_2$ is at most $\sigma_1$.
If\ \ $U$, after row and column permutation, has the block matrix structure
\scalebox{.8}{$\begin{pmatrix}
A & 0 & 0\\
0 & A & 0\\
0 & 0 & B
\end{pmatrix}$}
for rectangular matrices $A, B$ and $\sigma_1(A) \geq \sigma_1(B)$, this is sufficient for lying on the boundary.
(If $B$ has height 0, we set $\sigma_1(B) = 0$.)
In particular, $\WSEP$, $\BIC$, and a suitable interpolation lie on this boundary.
\end{restatable}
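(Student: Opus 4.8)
The inequality $\sigma_2 \le \sigma_1$ needs no argument, since by convention the singular values of any matrix are listed in non-increasing order; so the whole content of the theorem is the sufficient condition for the \emph{equality} $\sigma_1(U)=\sigma_2(U)$ and the three concrete witnesses. The plan is to reduce everything to one classical fact: the multiset of nonzero singular values of a block-diagonal matrix $\operatorname{diag}(M_1,\dots,M_k)$ is the union (as multisets) of the multisets of nonzero singular values of the blocks $M_i$. This is immediate from $\operatorname{diag}(M_1,\dots,M_k)^{\top}\operatorname{diag}(M_1,\dots,M_k)=\operatorname{diag}(M_1^{\top}M_1,\dots,M_k^{\top}M_k)$, whose spectrum decomposes blockwise, together with the fact that singular values are square roots of eigenvalues of this Gram matrix.

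Given this, the sufficient condition is essentially a one-liner. Since permuting rows and columns leaves singular values unchanged, I may assume $U=\operatorname{diag}(A,A,B)$ outright. Then the singular values of $U$ are two copies of those of $A$ together with those of $B$; as $A$ is a nonzero (indeed row-stochastic, if it has a row) submatrix, $\sigma_1(A)>0$ and it occurs at least twice, and because $\sigma_1(A)\ge\sigma_1(B)$ it dominates every singular value of both blocks. Hence the two largest singular values of $U$ are both $\sigma_1(A)$, i.e.\ $\sigma_1(U)=\sigma_2(U)=\sigma_1(A)$, so $U$ lies on the East boundary. The convention $\sigma_1(B):=0$ for a height-$0$ block makes the same computation cover the two-block case.

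What remains is to exhibit this block structure in the named instances. For $\WSEP$, recalling from \cref{app:explicit:theory:wsep} that its matrix is (after permutation) block-diagonal with $n$ identical $1\times\lfloor m/n\rfloor$ row blocks of value $1/\lfloor m/n\rfloor$ and $m-n\lfloor m/n\rfloor$ all-zero columns, I take $A$ to be one such row and $B$ the block-diagonal remainder, so $\sigma_1(A)=1/\sqrt{\lfloor m/n\rfloor}=\sigma_1(B)$ and the condition applies, placing $\WSEP$ at $\langle\sqrt{1/\lfloor m/n\rfloor},\sqrt{1/\lfloor m/n\rfloor}\rangle$. For $\BIC$, the matrix is block-diagonal with two $\lfloor n/2\rfloor\times 1$ all-ones blocks (the two contended goods) and, for odd $n$, one $1\times1$ block $[1]$ (plus zero columns); taking $A$ to be an all-ones block gives $\sigma_1(A)=\sqrt{\lfloor n/2\rfloor}\ge\sigma_1(B)\in\{0,1\}$, where the inequality in the odd case uses $n\ge3$, so $\BIC$ sits at $\langle\sqrt{\lfloor n/2\rfloor},\sqrt{\lfloor n/2\rfloor}\rangle$. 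The ``suitable interpolation'' between $\WSEP$ and $\BIC$ drawn in \cref{fig:explicit:outlines} is then given by an explicit one-parameter family of instances, each again of the shape $\operatorname{diag}(A,A,B)$ with $\sigma_1(A)\ge\sigma_1(B)$ (morphing from ``two agents with $\lfloor m/n\rfloor$ goods each'' toward ``two blocks of $\lfloor n/2\rfloor$ agents on a common good''); this family is deferred, like the other proofs, to \cref{app:explicit:theory}.

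The mathematics here is light; the only real care goes into boundary bookkeeping---empty blocks, all-zero columns, and the floor terms that appear when $n\nmid m$ or $n$ is odd---and into pinning down the interpolating family so that \emph{every} instance along it genuinely splits into two equal-largest diagonal blocks, rather than merely being a convex combination of the endpoints (which would not obviously preserve $\sigma_1=\sigma_2$). That design choice is the one mildly delicate step; the rest is routine.
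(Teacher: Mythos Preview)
Your argument for the sufficient condition and for $\WSEP$ and $\BIC$ is essentially identical to the paper's: both reduce to the fact that the singular values of a block-diagonal matrix are the union of the blocks' singular values (via the Gram matrix), so two copies of $A$ force $\sigma_1(A)$ to appear with multiplicity at least two and hence $\sigma_1(U)=\sigma_2(U)=\sigma_1(A)$.

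The one genuine gap is the interpolation. You correctly flag it as ``the one mildly delicate step'' but then defer it rather than construct it; the theorem claims that such an interpolation \emph{lies on the boundary}, so this is part of the proof obligation. The paper does not use a single one-parameter family morphing directly from $\WSEP$ to $\BIC$ as you sketch. Instead it proceeds in two stages through $\SEP$: first a linear interpolation from $\WSEP$ to $\SEP$ (each agent values one good at $\theta+(1-\theta)/\ell$ and $\ell-1$ goods at $(1-\theta)/\ell$), which preserves the same $\operatorname{diag}(A,A,B)$ structure and symmetry as $\WSEP$; then, from $\SEP$ to $\BIC$, a discrete sequence indexed by $r=1,\dots,\lfloor n/2\rfloor$ in which $r$ agents value good~1, $r$ agents value good~2, and the rest value separate goods (so $A$ is an $r\times 1$ all-ones column and $B$ is an identity block, with $\sigma_1(A)=\sqrt{r}\ge 1=\sigma_1(B)$), with linear interpolation between consecutive $r$ handled by an explicit block form. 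Your intuition that a naive convex combination of the endpoints would not preserve $\sigma_1=\sigma_2$ is exactly right, and the paper's detour through $\SEP$ is how this is resolved.
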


We conclude the theoretical discussion by pointing out that existing and future
results in the theory of nonnegative random matrices have implications for our
explicit map.
For example, consider a random process in which a single utility vector is drawn
from a flat Dirichlet distribution and
duplicated for all agents (thus,
$\sigma_2 = 0$).
For this distribution over instances, \citet{cfv22} recently derived that
$\mathbb{E}[\sigma_1^2] = \nicefrac{2 \, n}{(n+1)}$ as well as formulas for
$\sigma_1^2$'s higher moments.
\citet{bdh22} study a random process, in  which, for fixed integers $d_2 \geq
d_1 \geq 3$, an instance is uniformly chosen in which each agent values $d_1$
goods at value $\nicefrac{1}{d_1}$, and each good is valued by $d_2$ agents.
In this case,
$\sigma_1$ is always
$\sqrt{n/m}$, and the authors show that, as $m,n \to \infty$,
$\sigma_2$ converges to $\nicefrac{(\sqrt{d_1-1} + \sqrt{d_2-1})}{d_1}$ in
probability.

\subsection{Comparison of the Maps}
\label{sec:explicit:empirical}

Comparing the explicit map to our distance-embedding map (see, e.g.,
\cref{fig:explicit:comparison} for the largest achievable utilitarian welfare), we see that the two maps
have a similar layout and communicate similar information overall.
In \cref{app:moreexperiments}, we provide extensive diagrams showing that this similarity extends to other features, the identifiability of instance sources, and to the $5 \times 5$ instances as well.

\looseness=-1
One major difference is in how the density of instances varies across both maps.
Whereas the distance-embedding map fills the map at a rather uniform density, which helps legibility, the explicit map clusters some instances very densely (e.g., near the South boundary and the $\sigma_2 = 1$ line in \cref{fig:explicit:comparison}).
But these dense areas of the map seem to highlight meaningful clusters of similar instances, given that instance features tend to be homogeneous within these dense areas.
The shape of instances in the explicit map can similarly highlight noteworthy patterns.
For example, the straight lines at $\sigma_1 = \sqrt{2}$ and $\sigma_2 = 1$ in \cref{fig:explicit:comparison} are formed by instances in which several agents only value one good (see \cref{fig:all:fracsingleminded} in \cref{app:moreexperiments}).
The distance-embedding map makes such phenomena much harder to spot.

Hence, and because of the advantages of stability, data independence, and theoretical tractability, we see the explicit map as broadly preferable over the distance-embedding map on our data.
Nevertheless, the distance-embedding map plays a crucial role by justifying the explicit map: the relevance of the explicit map rests in large part on the fact that a general, previously established approach surfaced the two largest singular values as the most salient dimensions of difference between instances.

\section{Conclusion}
\label{sec:conclusion}
We hope that our exploration of allocation instances initiates discussions about
which
assumptions on such instances are supported by practice, and how
fair-division theory can leverage these assumptions to provide algorithms with
stronger fairness properties for the bulk of practical allocation instances.

The main limitation of our study is that\,---\,despite tapping into unconventional data sources\,---\,we were unable to test our approach on large, real-world allocation instances.
This seems rooted in a broader limitation of the practice of fair division: large allocation problems are hardly ever solved, or the preference data are not made available.
We believe that our community should strive to collect and share such datasets,
as has been recently done for election data \citep{mat-wal:c:preflib}.

\bibliographystyle{abbrvnat}
\bibliography{bib}

\clearpage

\appendix

\section*{\Large APPENDIX}
\section{Details on Datasets and Experiments}\label{app:datasets}
We combined various statistical cultures and the real-world data to construct
datasets that we focus on in the paper. The real-world datasets are available
upon request from the authors of the corresponding works cited
in~\cref{sec:prelim:realdata}, from whom we also got permission to use the data
in our study.

\subsection{Datasets $5 \times 5$ and $3 \times 6$}\label{app:main-datasets}
For the $5 \times 5$ dataset consisting of instances with $5$~agents and
$5$~goods, we generated $40$~instances according to:
the attributes models with $2$ and $5$ attributes; the resampling 
model with all values~$\{0.2, 0.4, 0.6, 0.8\}$ of parameter~$p$
and all values~$\{0.2, 0.8\}$ of parameter $\phi$. Next, we generated
$40$~instances with i.i.d. valuations taking the uniform distribution over $[0, 1]$ and the
same number of instances with i.i.d. valuations with the exponential
distribution (recall that here the distribution’s rate does not change the
outcome due to rescaling). Furthermore, we added
$500$~instances sampled (as described in the previous section) from the Island
data, $500$~instances from the Candies data, and all $16$ Spliddit instances.
Finally, we put the respective $\CON$, $\IND$, $\SEP$, $\WSEP$, and $\BIC$ instance.
Analogously, we constructed the $3 \times 6$
dataset consisting of instances with $3$~agents and $6$~goods. The only
exception being that in the $3 \times 6$
dataset, we took $250$ instances each of the
Island, Candies, and Spliddit data, which is possible since there were enough Spliddit instances of this size.

We generated each of the datasets multiple times (note that
generating real-life inspired data is a random process) and repeated all our
experiments. The obtained results were qualitatively the same.

\subsection{Dataset $10 \times 20$}\label{app:bigger-instances-dataset}

To verify that the instances coming from the resampling model cover diverse areas
of our maps, we also constructed a dataset consisting of~$n=10$~agents
and~$20$~goods. In this dataset, besides the CON, IND, and SEP instances, we
generated $4$~instances per each combination of parameters $p \in
\{0.05, 0.1, 0.2, 0.4, 0.6, 0.8\}$ and~$\phi \in \{0.05, 0.1, 0.25, 0.5, 0.75,
0.9, 0.95\}$. We did not include instances sampled from real-world distributions
as none of them have more than~$10$~goods, and cloning them could change the
structure of the instances in an unpredictable manner.
As our goal was to cover the whole
space, we also omitted the i.i.d\ and attributes models since they tend
to cluster in very specific and small areas of the map. Indeed, this behavior
can already be observed for the~$3 \times 6$ and~$5 \times 5$~datasets
in~\cref{fig:exponstri,fig:unistri,fig:attristri} in~\cref{app:moreexperiments}.
Note that numerous juxtapositions of our datasets in~\cref{app:moreexperiments}
present that the $10 \times 20$~dataset yields a similarly structured
distribution of values of natural features across the map, as compared to our
canonical~$3 \times 6$ and~$5 \times 5$ datasets. Since computing the valuation
distance for instances in the~$10 \times 20$~dataset is computationally too
demanding, for this dataset we only computed maps using the demand distance. 

In \cref{fig:resamplingparameters10x20}, we show how each value of $p$ and $\phi$ traces a shifting band on the map, which enables the distribution to cover the entire distance-embedding map.
\begin{figure*}[htb]
\centering
\includegraphics[width=.35\textwidth]{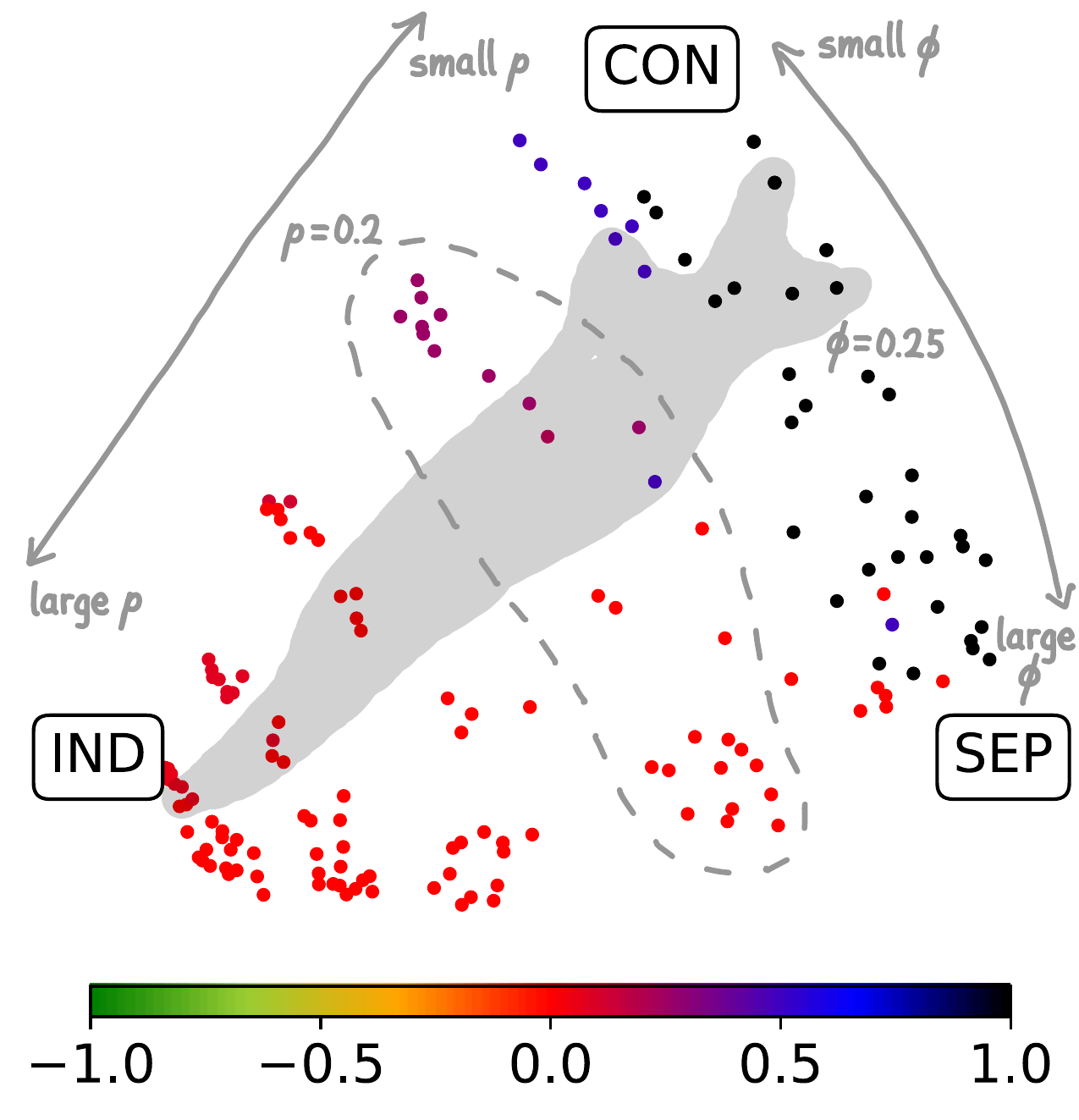}
\caption{Distance-embedding map for the demand distance, containing $10 \times 20$ instances drawn from the resampling model. Colors indicate minimax envy. Hand-written annotations show placement of resampling instances for different $p, \phi$.}
\label{fig:resamplingparameters10x20}
\end{figure*}

\subsection{Larger Instances}\label{app:large-instances-datasets}

Finally, to check scalability of the demand distance, we also conducted our
experiments on data with larger numbers of goods and agents.
We computed the maps for maximum instance sizes of $n=m=300$ on a standard
laptop computer\footnote{Specifically, we used two machines for the experiments
	we present:
	Apple\textregistered{} MacBook Pro\textregistered{}
	(Apple M2 Max, 64GB RAM) and HP\textregistered{} ProBook 650 G8 (11th Gen
	Intel\textregistered{} Core\texttrademark{}
i5-1135G7 @ 2.40GHz, 32GB RAM). We computed the embeddings on the first one and
the feature values on the second one.
In the course of the full research project we also used 
an Ubuntu 22.04.4 LTS server (Intel\textregistered{} Xeon\textregistered{}
Silver 4310 CPU @ 2.10GHz with 12 physical cores, 128GB RAM). We, however, do
not present any results obtained using this machine.}. For example, computing the demand
distance between a pair of them took, on average, $\nicefrac{1}{11}$ seconds and
$1.2$~seconds for, respectively instance sizes~$n=m=100$ and~$n=m=300$. This
gives, respectively, around $20$~minutes and slightly above $3.5$~hours to
compute maps of the datasets analogous to the above-described $10 \times 20$
one. We emphasize that the running time of our algorithm for computing the
demanding distance depends quadratically on the number of instances in the
dataset and nonlinearly only on the \emph{minimum} of~$n$ and~$m$. Hence, the
running times grow roughly linearly for larger instance sizes, as long
as~$\min(n,m) = 300$ (we verified it for instances with the number of either
agents or goods being~$1000$, which increased the running time only at most
twice). The given running times do not include computing the values of
features\,---\,studying effective and scalable ways of computing them is not a
goal of our study. 

The results were qualitatively the same, so not to overload the paper with
more pictures, we omit the computed maps.

\section{Deferred Details from
Section~\ref{sec:prelim}}\label{app:sec:prelimdetails}

\subsection*{Valuation Distance}

Consider a set~$[n]$~of agents, a set~$[m]$~of resources and some allocation
instances~$U \in \mathbb{R}_{\geq 0}^{n \times m}$ and~$U' \in \mathbb{R}_{\geq
0}^{n \times m}$.
Denote by~\allPermAgents{} and~\allPermgoods{} the sets of, respectively, all
permutations $[n] \to [n]$ and $[m] \to [m]$. Then, consider
some $\agentsperm \in \allPermAgents$ and some~$\resperm \in \allPermgoods$,
which we call, respectively, an \emph{agent matching} and a~\emph{good
matching}.
For some distance~$\delta$ on nonnegative real numbers, we let
$$
D_\delta(\task, \task', \agentsperm, \resperm) \coloneqq \sum_{i \in
[n]}\sum_{j \in [m]} \delta(u_{i,j},
u_{\agentsperm(i),\resperm(j)})
$$
and refer to~$D_\delta(\task, \task', \agentsperm, \resperm)$ as the
$\delta$-distance between~$\task$ and~$\task'$ \emph{witnessed by~$\agentsperm$
and~$\resperm$}. The $\delta$-distance between~$\task$ and~$\task'$, denoted
by~$d_{\delta}(\task, \task')$, is then the minimal $\delta$-distance
between~$\task$ and~$\task'$ witnessed over all pair of matchings; formally:
$$
d_\delta(\task, \task')
\coloneqq \min_{\agentsperm' \in \allPermAgents, \resperm' \in
\allPermgoods} D_\sigma(\task, \task', \agentsperm', \resperm'). 
$$

We are now ready to formally define our \emph{valuation distance} that,
intuitively, is the smallest sum of difference in agents valuations of the goods
achievable over all possible matchings of agents and goods.
\begin{definition}
	Given two allocation instances~$\task \in \mathbb{R}_{\geq 0}^{n \times m}$
	and~$\task' \in \mathbb{R}_{\geq 0}^{n \times m}$ with $n$~agents and $m$~goods,
	its \emph{valuation distance} $\idealdist(\task, \task')$ is:
	\begin{align*}
		\idealdist(\task, \task')& \coloneqq d_{\ell_1}(\task, \task') \coloneqq\\
														 & \min_{\agentsperm, \resperm} \sum_{i \in [n]}
														 \sum_{j \in [m]}
														 \left| u_{i,j} - u'_{\agentsperm(i),\resperm(j))}\right|. 
		\end{align*}
\end{definition}

It is easy to see that that the valuation distance is isomorphic.  Naturally, if
the tasks are isomorphic, then there exists some pair of agent and good
matchings that witness distance~$0$. On the other hand, if there is no such
pair, there is no possibility that the valuation distance is~$0$. The property
of being an isomorphic distance, however, comes at a cost of computational
intractability.

\subsection{Valuation Distance Hardness}\label{sec:valuation-distance}

\begin{theorem}
	Given two task allocations~$\task \in \mathbb{R}_{\geq 0}^{n \times m}$
	and~$\task' \in \mathbb{R}_{\geq 0}^{n \times m}$ and an real number~$d$,
	deciding whether~$\idealdist(\task, \task') \leq d$ is~\nphard{}.
 \label{thm:valuationdistnphard}
\end{theorem}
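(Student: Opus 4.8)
The plan is to reduce from the NP-complete problem \textsc{Balanced Complete Bipartite Subgraph}: given a bipartite graph $G = (V_1 \dotcup V_2, E)$ and an integer $k \le \min(|V_1|,|V_2|)$, decide whether $G$ contains $k$ vertices on each side inducing a complete bipartite graph $K_{k,k}$. Given such an instance, I would first (by possibly transposing $G$ and padding it with isolated vertices) assume $|V_1| = n \le m = |V_2|$ with $n \ge 2$, which preserves the answer. I then let $U \in \{0,1\}^{n\times m}$ be the biadjacency matrix of $G$, and let $U' \in \{0,1\}^{n\times m}$ be the biadjacency matrix of the \emph{template} graph consisting of a $K_{k,k}$ on the first $k$ left-vertices $L^\star$ and the first $k$ right-vertices $R^\star$, with all remaining $n-k$ left-vertices and $m-k$ right-vertices isolated. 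Finally, if $|E| < k^2$ I output a fixed NO-instance; otherwise I set the threshold $d := |E| - k^2 \ge 0$. This is clearly polynomial-time, and produces matrices in $\mathbb{R}_{\ge 0}^{n\times m}$ as the theorem permits (they need not be row-stochastic).

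The heart of the argument is a single counting identity. Fix a row permutation $\agentsperm$ and a column permutation $\resperm$, and put $W_1 := \agentsperm^{-1}(L^\star)$ and $W_2 := \resperm^{-1}(R^\star)$, so $|W_1| = |W_2| = k$. Since both matrices are $0/1$, the witnessed distance $D_{\ell_1}(U, U', \agentsperm, \resperm)$ just counts positions where $U$ and the permuted $U'$ disagree. The permuted $U'$ is supported exactly on $W_1 \times W_2$ (the isolated template vertices contribute nothing, however they are aligned), so with $c := |E \cap (W_1 \times W_2)|$ one obtains $D_{\ell_1}(U, U', \agentsperm, \resperm) = (k^2 - c) + (|E| - c) = |E| + k^2 - 2c$. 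As $\agentsperm,\resperm$ range over all permutations, $W_1$ and $W_2$ range \emph{independently} over all $k$-subsets of $V_1$ and $V_2$, hence $\idealdist(U, U') = |E| + k^2 - 2\,\max_{W_1, W_2} |E \cap (W_1 \times W_2)|$. This inner maximum equals $k^2$ precisely when $G$ contains a $K_{k,k}$ and is at most $k^2 - 1$ otherwise, so $\idealdist(U, U') \le d$ iff $G$ has the desired biclique. This establishes correctness of the reduction and therefore the theorem.

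The step I expect to need the most care is the observation that optimizing over the row and column permutations \emph{separately} mirrors exactly the \emph{independent} choice of the two sides $W_1, W_2$ of a biclique; this is precisely why biadjacency matrices (rather than, say, adjacency matrices of general graphs, where the two index sets coincide and one would have to force a single permutation on both) are the natural target. The remaining ingredients are routine: the bookkeeping of the $\ell_1$ distance between $0/1$ matrices, the fact that template edges live only among $L^\star \times R^\star$ so that $c$ really is $|E \cap (W_1\times W_2)|$, the boundary case $|E| < k^2$, and citing that \textsc{Balanced Complete Bipartite Subgraph} is \npcomplete{}. If one wanted the hard instances to be (row-)stochastic for consistency with the rest of the paper, I would normalize each nonzero row and replace the template's zero rows by a negligible uniform perturbation, adjusting $d$ accordingly; but since the theorem is stated for arbitrary nonnegative matrices, the plain $0/1$ construction already suffices.
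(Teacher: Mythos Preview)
Your reduction from \textsc{Balanced Complete Bipartite Subgraph} is correct: the counting identity $D_{\ell_1}(U,U',\agentsperm,\resperm)=|E|+k^2-2c$ is exactly right, and the observation that the two permutations independently select the two sides of a candidate biclique is the key structural point. This is a genuinely different route from the paper's proof, which reduces instead from the $d_{\mathrm{Spear}}$\textsc{-Isomorphic Distance} problem for ordinal elections: it encodes voters as agents and candidates as goods, sets $u_{i,j}$ proportional to the position of candidate $c_j$ in voter $v_i$'s ranking, and observes that the valuation distance equals (up to a global scaling factor) the Spearman isomorphic distance. Your approach is arguably more self-contained, since it appeals to a classical Garey--Johnson problem rather than to a hardness result from the map-of-elections literature; conversely, the paper's reduction stays entirely within the row-stochastic regime that the rest of the paper adopts, whereas your $0/1$ biadjacency matrices do not have unit row sums (and your template $U'$ even has all-zero rows).

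On that last point: the theorem statement literally says $\mathbb{R}_{\geq 0}^{n\times m}$, so your construction proves what is written. But the paper's standing convention is that allocation instances are row-stochastic, and your one-sentence sketch of normalizing rows and perturbing the zero rows is not quite enough\,---\,once rows of $U$ are scaled by their (varying) degrees, the clean identity $|E|+k^2-2c$ collapses and the threshold would need a more delicate argument. If you want the result for row-stochastic instances specifically, a cleaner patch is to append an extra ``slack'' column and set its entry in each row so that all rows of both $U$ and $U'$ sum to the same constant before normalizing; the slack column then contributes a fixed, permutation-independent amount to the distance (provided you force it to be matched to itself, e.g., by making its entries large), and your biclique analysis goes through unchanged on the original columns.
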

\begin{proof}
	We give a polynomial-time
	many-one reduction from an \nphard{} problem
	$d_{\textrm{Spear}}$\textsc{-Isomorphic Distance}. In this problem we
	are given two ordinal elections~$E = (C, V)$ and~$E'=(C', V')$ such
	that $|C| = |C'|$ and~$|V| = |V'|$ and an integer~$k$. 
    Assuming that
	for some voter~$a$ and candidate~$b$, where both $a$ and~$b$ are part
	of the same election~$E$, we denote by~$\textrm{pos}^{E}_{a}(b)$ the position
	of candidate~$b$ according to the ranking of~$a$, we ask whether there
	exist two permutations~$\rho \colon C \rightarrow C'$ and~$\phi \colon
	V \rightarrow V'$ such that 
	$$
	D(\rho, \phi) \coloneqq  \sum_{v \in V}
	\sum_{c \in C} \left| \textrm{pos}^{E}_v(c) -
	\textrm{pos}^{E'}_{\phi(v)}(\rho(c)) \right| \leq k. 
	$$

	Given the instance $I$ of $d_{\textrm{Spear}}$\textsc{-Isomorphic
	Distance} as described above, our reduction constructs an instance~$I'$
	of our problem as follows.  We first construct allocation instance~$\task$ using
	election~$E$ from the original instance. Allocation instance~$\task$ consists
	of $n \coloneq |V|$~agents $[n]$ representing voters and $m \coloneq
	|C|$~goods~$[m]$. Thus, $\task$ is a matrix of dimension $n \times m$.
	Taking a normalizing factor~$F = 1 + 2 + \ldots + |C| = \binom {|C|}
	{2}$, for each voter~$v_i \in V$ and candidate~$c_j \in C$, we set the
	corresponding agent $i$'s utility for good~$j$ to be $u_{i,j} =
	\nicefrac{\textrm{pos}^{E}_{v_i}(c_j)}{F}$. It can be easily verified that the
	values of the utility function of each agent in~$\task$ (that is, the values
	of each row of~$\task$) sum to~$1$. We obtain instance~$I'$, by analogously
	constructing allocation instance~$\task'$ using election~$E'$ and setting the
	distance~$d$ in question (regarding instance~$I'$) to~$d \coloneq
	\nicefrac{k}{F}$.

	We show that for each pair of permutations~$\rho \colon C \rightarrow
	C'$ and~$\phi \colon V \rightarrow V'$ such that $D(\rho, \phi)  \leq k$,
	there are two permutations \agentsperm{} and~\resperm{} such that witness that
	$\idealdist(\task, \task') \leq d$. Since we also show that the opposite
	direction is true, we obtain the reduction's correctness. 

	Suppose that we have $\rho$ and~$\phi$ that meet the above assumption.
	Consider the following~$\agentsperm$ and~$\resperm$. For each voter~$v_i
	\in V$ and candidate~$c_j \in C$, let~$\agentsperm(i) = i'$ such that
	$\phi(v_i)= v'_{i'}$
	and~$\resperm(j) = j'$ such that $\phi(c_j)=c'_{j'}$. In words, permutation~\agentsperm{}
	maps agents exactly as permutation~$\phi$ maps their respective voters,
	and so does permutation~\resperm{} with respect to goods and
	candidates. Now, in the series of algebraic transformations, let us
	analyze the relation of $D(\rho, \phi)$ and $d$:
	\begin{align*}
	\nicefrac{D(\rho, \phi)}{F} = \frac{\sum_{v_i \in V} \sum_{c_j \in
		C} \left| \textrm{pos}^{E}_{v_i}(c_j) - \textrm{pos}^{E'}_{\phi(v_i)}(\rho(c_j))
	\right|}{F} = \\
	\sum_{v_i \in V} \sum_{c_j \in C} \left|
	\frac{\textrm{pos}^{E}_{v_i}(c_j)}{F} - \frac{\textrm{pos}^{E'}_{\phi(v_i)}(\rho(c_j))}{F}
	\right| = \\
	\sum_{v_i \in V} \sum_{c_j \in C} \left| u_{i,j} -
	u'_{\agentsperm(i),\resperm(j)} \right| = \\
	\sum_{i \in [n]} \sum_{j \in [m]}
	|u_{i,j} - u'_{\agentsperm(i),\resperm(j)}| = d.
  \end{align*} 
So,
clearly, if $D(\rho, \phi)  \leq k$, then $\idealdist(\task, \task')$ witnessed
by~$\agentsperm$ and~$\resperm$ is smaller than~$\nicefrac{k}{F} = d$. On the
other hand, if there exist~$\agentsperm$ and~$\resperm$ that witness
$\idealdist(\task, \task') \leq d$, then one can construct $\rho$ and~$\phi$
for which $D(\rho, \phi)  \leq dF = k$.
\end{proof}

\begin{figure}
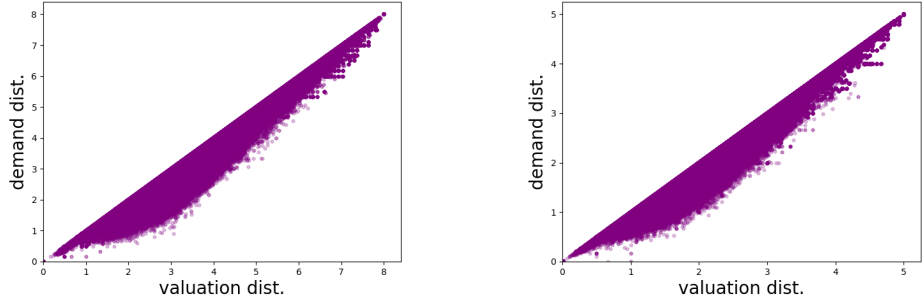

  \includegraphics[width=.48\linewidth]{img/correlation\_5x5\_large\_1\_ijcai.png}
  \includegraphics[width=.48\linewidth]{img/correlation\_3x6\_large\_2\_ijcai.png}
	\caption{Correlation between our distances for the $5 \times 5$ dataset (left)
		and the $3 \times 6$ dataset (right).}
  \label{fig:distances-correlation}
\end{figure}

The computational hardness of the task of computing the valuation distance
comes from the fact that one seeks an optimal value depending on two matchings
simultaneously. It turns out that this intuitive understanding can be well
supported by a formal claim. We show that for a given either the agent matching
or the good matching, the optimal value of the distance as witnessed by the
given matching can be computed in polynomial-time. 

\begin{theorem}
	Given two allocation instances~$\task{}$, $\task'{}$, a (fixed) agent
	matching~$\agentsperm$, a real number~$d$, deciding whether~$\idealdist(\task,
	\task')$ as witnessed by~$\agentsperm$ is at most $d$ is polynomial-time
	solvable.  The same holds for the case of a given good matching.
\end{theorem}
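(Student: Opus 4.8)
The plan is to fix the given agent matching $\agentsperm$ and reduce the remaining optimization over good matchings to a single instance of the assignment problem (i.e., minimum-weight bipartite perfect matching), which is polynomial-time solvable; the good-matching case is then completely symmetric.

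The first step is to observe that, with $\agentsperm$ fixed, the objective $\sum_{i \in [n]} \sum_{j \in [m]} |u_{i,j} - u'_{\agentsperm(i),\resperm(j)}|$ is separable over the goods: it equals $\sum_{j \in [m]} c(j, \resperm(j))$, where I define the cost of identifying good $j$ of $\task$ with good $j'$ of $\task'$ as $c(j,j') \coloneqq \sum_{i \in [n]} |u_{i,j} - u'_{\agentsperm(i),j'}|$, using the already-fixed identification of agents. All $m^2$ costs $c(j,j')$ can be precomputed in $O(nm^2)$ time. The distance witnessed by $\agentsperm$ is then exactly $\min_{\resperm \in \allPermgoods} \sum_{j \in [m]} c(j,\resperm(j))$, i.e., the value of a minimum-weight perfect matching in the complete bipartite graph on the two copies of $[m]$ with edge weights $c(\cdot,\cdot)$. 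This value can be computed in $O(m^3)$ time by the Hungarian algorithm (or via a min-cost-flow formulation), and comparing it with $d$ answers the decision question. For the good-matching variant, I would instead fix $\resperm$, set $c'(i,i') \coloneqq \sum_{j \in [m]} |u_{i,j} - u'_{i',\resperm(j)}|$, and solve the analogous assignment problem on the $n$ agents of each instance.

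I do not expect a genuine obstacle here: the entire content is the observation that pinning down one of the two matchings decouples the problem so that only the other matching must be optimized, after which classical combinatorial-optimization results apply directly. The one conceptual point worth stressing is the contrast with the full valuation distance: \cref{thm:valuationdistnphard} shows hardness precisely because neither matching can be committed to in advance, so the two minimizations genuinely interact, whereas fixing either one makes the residual problem a plain assignment problem.
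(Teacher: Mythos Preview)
Your proposal is correct and follows essentially the same approach as the paper: fix the agent matching, observe that the remaining objective decomposes as $\sum_{j} c(j,\resperm(j))$ for precomputable costs $c(j,j')=\sum_{i}|u_{i,j}-u'_{\agentsperm(i),j'}|$, and solve the resulting minimum-weight bipartite perfect matching. Your write-up is slightly more explicit about running times and the symmetric good-matching case, but the argument is the same.
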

\begin{proof}
	Let us fix numbers~$n$ and~$m$ of, respectively, agents and goods.  For two
	allocation instances~$\task \in \mathbb{R}_{\geq 0}^{n \times m}$ and $\task'
	\in \mathbb{R}_{\geq 0}^{n \times m}$
	and an agent matching~$\agentsperm\colon [n] \to [n]$, we give a
	polynomial-time algorithm that computes a good matching~$\resperm\colon [m]
	\to [m]$ minimizing
	$$
	D(\resperm) \coloneqq  \sum_{i \in [n]} \sum_{j \in [m]} \left|
	u_{i,j} - u_{\agentsperm(i),\resperm(j)}\right|.
	$$
	In words, the algorithm computes the minimal achievable distance as witnessed
	by the given agent matching~$\agentsperm$.

	The algorithm constructs a complete bipartite weighted graph~$G$ consisting
	of vertices~$x_1, x_2, \ldots, x_{m}$ of one partition and
	consisting of vertices~$y_1, y_2, \ldots, y_{m}$ of the
	other partition. For each pair~$i \in [m]$, $j \in
	[m]$, the weight~$w(\{x_i,y_j\})$ of edge~$\{x_i, x_j\}$
	is equal to~$\sum_{\ell \in [n]}\left|u_{\ell,i} -
	u_{\agentsperm(\ell),j}\right|$. Finally, the algorithm looks for a
	minimum weight perfect matching (which always exists) in~$G$.

	Let~$M$ be some perfect matching in~$G$. Clearly, this perfect matching
	corresponds to exactly one good matching~$\resperm' \colon [n]\to[n]$. Let us
	now compute the weight~$w(M)$ of~$M$:
	\begin{align*} 
		w(M) =& \sum_{\{x_i,y_j\} \in M}\sum_{\ell \in [n]}\left|u_{\ell, i} -
		u_{\agentsperm(\ell),j}\right| = \\
					&\sum_{\ell \in [n]}\sum_{\ell' \in [m]}\left|u_{\ell, \ell'} -
					u_{\agentsperm(\ell), \resperm'(\ell')}\right| = D(\resperm').
	\end{align*}
	Since our algorithm finds the minimum-weight matching, the correctness
	follows.

	The algorithm runs in polynomial time because finding a minimum-weight
	matching is well-known polynomial-time solvable task and building	the
	bipartite graph is quadratic with respect to the number of goods (which is
	polynomially bounded in the input size). The proof for the case of a given
	good matching is analogous.
\end{proof}

\begin{figure*}
	\begin{minipage}{.33\textwidth}%
    \begin{align*}
    \IND_m & \coloneqq {\scriptsize\begin{bmatrix}
        \nicefrac{1}{m} & \nicefrac{1}{m} & \cdots & \nicefrac{1}{m}\\
        \nicefrac{1}{m} & \nicefrac{1}{m} & \cdots & \nicefrac{1}{m}\\
        \vdots & \vdots & \ddots & \vdots\\
        \nicefrac{1}{m} & \nicefrac{1}{m} & \cdots & \nicefrac{1}{m}
        \end{bmatrix}}
   \end{align*}%
	\end{minipage}\hfill%
	\begin{minipage}{.3\textwidth}
    \begin{align*}
    \SEP_m & \coloneqq {\scriptsize\begin{bmatrix}
    1 & 0 & \cdots & 0\\
    0 & 1 & \cdots & 0\\
    \vdots & \vdots & \ddots & \vdots\\
    0 & 0 & \cdots & 1
    \end{bmatrix}}
   \end{align*}
	\end{minipage}\hfill%
	\begin{minipage}{.3\textwidth}
    \begin{align*}
    \CON_m & \coloneqq {\scriptsize\begin{bmatrix}
    1 & 0 & \cdots & 0\\
    1 & 0 & \cdots & 0\\
    \vdots & \vdots & \ddots & \vdots\\
    1 & 0 & \cdots & 0
    \end{bmatrix}}
   \end{align*}
	\end{minipage}
	\caption{Matrices representing characteristic instances: Indifference (IND),
	Separability (SEP), and Contention (CON).}
  \label{fig:characteristic-instances}
\end{figure*}

\subsection{Demand Distance and Distances Properties}\label{sec:demand-distance}
In this section, we formally introduce the~\emph{demand
distance} and we show the properties of both the demand and the valuation
distances.

\subsubsection*{Demand Distances}
Here, for each good of both instances, we build a \emph{demand
vector} containing the utility values that the good receives from different
agents, sorted in decreasing order.
We then find a mapping of vectors from one instance to the other that minimizes
the sum of $\ell_1$ distances of the mapped pairs. Hence, we obtain the
following formal definition.

\begin{definition}\label{def:demand-distance}
	Let $U^1$ and~$U^2$ be two allocation instances with $n$~agents and
	$m$~goods. The \emph{demand vector}~$\sordem_{U}(j)$ of good~$j \in
	[m]$ of instance~$U$ is the vector $(u_{1,j}, u_{2, j}, \ldots, u_{n, j})$
	sorted in descending order. Denoting by~$\Pi_\mathit{goods}$ all permutations of
	$[m]$, the \emph{demand distance} $\idealdist(U^1, U^2)$ of~$U^1$ and~$U^2$ is
	\begin{equation*}
		\min_{\resperm \in \Pi_\mathit{goods}} \sum_{j
		\in [m]} ||\sordem_{U^1}(j) - \sordem_{U^2}(\resperm(j))||_{1}.
  \end{equation*}
\end{definition}
Due to the fact that this definition optimizes over only a single permutation, 
the demand distance can be computed in polynomial-time by finding a minimum
matching in a weighted bipartite graph representing the contributions to the
distance from matchings between each pair of agents.

\begin{theorem}\label{thm:demand-poly}
	Given two task allocations~$\task{}$ and~$\task'{}$ and a real
	number~$d$, deciding whether it holds that~$\demdist(\task, \task') \leq d$
	is polynomial-time solvable.
\end{theorem}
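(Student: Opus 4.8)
The plan is to show that computing the demand distance reduces to finding a minimum-weight perfect matching in a bipartite graph, which is a classical polynomial-time solvable problem. This mirrors the proof already given above for the case of a fixed agent matching in the valuation distance, with the crucial simplification that the demand distance optimizes over only a \emph{single} permutation (over goods), so no outer loop over agent matchings is needed.

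Concretely, given instances \task{} and \task'{} with $n$ agents and $m$ goods, I would first precompute, for every good $j$ of \task{} and every good $j'$ of \task'{}, the demand vectors $\sordem_{\task}(j)$ and $\sordem_{\task'}(j')$ (each obtained by sorting a column in descending order, which takes $O(n \log n)$ time) and then the weight $w(j,j') \coloneqq \onenorm{\sordem_{\task}(j) - \sordem_{\task'}(j')}$. I would build a complete bipartite graph $G$ with one side $\{x_1, \dots, x_m\}$ representing the goods of \task{} and the other side $\{y_1, \dots, y_m\}$ representing the goods of \task'{}, putting weight $w(j,j')$ on edge $\{x_j, y_{j'}\}$. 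There are $m^2$ edges, each computable in $O(n \log n)$ time, so constructing $G$ takes time polynomial in the input size.

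The key observation is that every perfect matching $M$ in $G$ corresponds to exactly one good matching $\resperm \in \Pi_\mathit{goods}$, and by the definition of the edge weights, $w(M) = \sum_{\{x_j, y_{j'}\} \in M} \onenorm{\sordem_{\task}(j) - \sordem_{\task'}(j')} = \sum_{j \in [m]} \onenorm{\sordem_{\task}(j) - \sordem_{\task'}(\resperm(j))}$, which is precisely the quantity that \cref{def:demand-distance} minimizes over all $\resperm$. Hence a minimum-weight perfect matching in $G$ yields $\demdist(\task, \task')$, and one simply compares this value against $d$. Running a standard minimum-weight perfect matching algorithm (e.g., the Hungarian algorithm) on $G$ takes time polynomial in $m$, so the whole procedure runs in polynomial time.

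There is no real obstacle here; the statement is essentially a direct consequence of the single-permutation structure of the definition. The only point requiring a line of care is verifying the correspondence between perfect matchings and permutations (a bijection, since $G$ is balanced bipartite and complete) together with the weight identity above\,---\,but this is the same bookkeeping already carried out in the preceding theorem's proof, and it goes through verbatim with goods in place of the fixed-agent-matching setup.
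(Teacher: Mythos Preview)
Your proposal is correct and follows essentially the same approach as the paper: both reduce the computation of the demand distance to a minimum-weight perfect matching in the complete bipartite graph on goods with edge weights given by the $\ell_1$ distance between sorted demand vectors, and both conclude by invoking a polynomial-time matching algorithm. The paper's proof is slightly less explicit about running times for the preprocessing, but the argument is otherwise identical.
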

\begin{proof}
	We give an algorithm that first constructs a weighted bipartite graph
	(representing the task of computing the demand distance) and then computes its
	minimum weight perfect matching, which represents the optimal good
	matching.

	To be specific, our algorithm proceeds as follows. For each $r \in \goods
	\cup \goods'$, it first computes \sordem(r). Then, it constructs a
	bipartite graph~$G$ with one partition consisting of the
	goods~$\goods$ and the other one of the goods~$\goods'$. For
	every pair~$(r, r') \in \goods \times \goods'$, the algorithm adds an
	edge~$\{r, r'\}$ to~$G$ and sets its weight~$w(r, r') \coloneqq
	d_{\ell_1}(\sordem(r), \sordem(r'))$. Finally, the algorithm finds a minimum
	weight perfect matching, say $M$, of~$G$. 

	Since~$M$ is a perfect matching (and $|\goods| = |\goods'|$), it is
	clear that~$M$ represents a good matching~$\resperm$ such that for
	each~$\{r, r'\} \in M$, $\resperm(r) = \resperm(r')$. Hence, the total
	weight~$w(M)$ of~$M$ can be expressed as
	\begin{align*}
		w(M) \coloneqq& \sum_{\{r, r'\} \in M} d_{\ell_1}(\sordem(r),
		\sordem(r')) = \\
									&\sum_{r \in \goods} d_{\ell_1}(\sordem(r),
		\sordem(\resperm(r))). 
	\end{align*}
	As a result, a minimum weight perfect matching in graph~$G$ yields a good
	matching that witnesses the demand distance and the weight of this matching
	is exactly the requested demand distance.

	Computing a minimum weight perfect matching is polynomial-time solvable. Thus,
	our algorithm also runs in polynomial time.
\end{proof}

Note that the procedure described in the proof of \Cref{thm:demand-poly} is
constructive and, in fact, solves the optimization variant of the problem of
computing the demand distance between two allocation tasks.

The improvement in the running time of computing the demand distance in
comparison to that of the valuation distance comes at a cost. The demand
distance ignores information about the identity of agents, which can lead two
non-isomorphic instances to be at
distance $0$ from each other (see~\cref{fig:dist-vs-pseudodist}).

\subsubsection*{Maximal Values of Valuation and Demand Distances}
We show the upper-bound on the values of both distances in the following
technical proposition.

\newcommand{\distsupperbound}{\ensuremath{2n-\frac{2n}{m}}}
\begin{proposition}
\label{prop:distupperbound}
  Let $U$ and~$U'$ be two allocations instances
  with $n$~agents and	$m$~goods.
  Then, the valuation distance $\idealdist(U, U')$
  and the demand distance $\demdist(U, U')$
	are at most $\distsupperbound$. 
\end{proposition}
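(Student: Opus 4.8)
The plan is to prove the bound for the valuation distance first, and then derive the demand-distance bound from it. The key observation is that both distances are defined as a minimum over matchings, so it suffices to exhibit \emph{one} pair of matchings (resp.\ one good matching) whose witnessed distance is at most $\distsupperbound$; in fact I will show that the trivial (identity) matchings already do the job, by a pointwise argument on each row. Fix any two row-stochastic matrices $U, U' \in \mathbb{R}_{\geq 0}^{n \times m}$. For the valuation distance, consider the identity agent matching and identity good matching, so that $\idealdist(U, U') \leq \sum_{i \in [n]} \sum_{j \in [m]} |u_{i,j} - u'_{i,j}| = \sum_{i \in [n]} \onenorm{\vec{u}_i - \vec{u}'_i}$. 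Thus it suffices to bound $\onenorm{\vec{u}_i - \vec{u}'_i}$ for a single pair of probability vectors.

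The main step is therefore the following elementary claim: if $\vec{x}, \vec{y} \in \mathbb{R}_{\geq 0}^m$ are two nonnegative vectors each summing to $1$, then $\onenorm{\vec{x} - \vec{y}} \leq 2 - \nicefrac{2}{m}$. I would prove this by splitting the coordinates into the set $P$ where $x_j \geq y_j$ and the set $N$ where $x_j < y_j$; since both vectors sum to $1$, the positive and negative parts of $\vec{x} - \vec{y}$ have equal total mass, so $\onenorm{\vec{x}-\vec{y}} = 2 \sum_{j \in P}(x_j - y_j) \leq 2 \sum_{j \in P} x_j \leq 2$. To get the sharper $2 - \nicefrac{2}{m}$, note that $P$ cannot be all of $[m]$ unless $\vec{x} = \vec{y}$ (in which case the distance is $0$), so $P \subsetneq [m]$; and the complement of $P$ carries mass at least\dots\ here one has to be slightly careful. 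The clean way: $\onenorm{\vec{x}-\vec{y}} = 2\sum_{j\in P}(x_j-y_j)$, and since $\sum_{j\in P} y_j \geq 0$ we need the reverse bound on $\sum_{j \in P} x_j$. Because $\vec{y}$ sums to $1$ and has at most $m-1$ coordinates outside\dots\ Actually the tight extremal configuration is $\vec{x} = \vec{e}_1$ (all mass on one coordinate) and $\vec{y} = \nicefrac{1}{m}\cdot\vec{1}$ (uniform), which gives $\onenorm{\vec{x}-\vec{y}} = (1 - \nicefrac1m) + (m-1)\cdot\nicefrac1m = 2 - \nicefrac2m$; so I would argue that the maximum of $\onenorm{\vec{x}-\vec{y}}$ over the product of two simplices is attained at a pair of vertices/extreme points by linearity in each argument, reducing to the finite check over $0/1$-supported vertices of one simplex against the other, where the worst case is two distinct standard basis vectors giving $2$ — which \emph{exceeds} $2 - \nicefrac2m$. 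So the correct reading of the proposition must be that it bounds a \emph{different} quantity, or that the maximum claimed is in fact $2n(1 - \nicefrac1m)$ coming from $n$ rows each contributing at most $2(1-\nicefrac1m)$ \emph{on average after the optimal matching} — i.e., the matching is what saves a factor. I would therefore not use the identity matching but instead argue: for the optimal good matching, $\sordem$ vectors line up so that the all-mass-on-one-good column of one instance is matched to the all-mass-on-one-good column of the other, collapsing the "$2$" case. Concretely, in the demand-distance formulation, $\sum_j \onenorm{\sordem_{U}(j) - \sordem_{U'}(\resperm(j))}$ with $\resperm$ chosen to match the largest-demand column of $U$ to that of $U'$; then telescoping $\ell^1$ differences of sorted vectors and using that all columns of each instance together sum to $n$ gives the $\distsupperbound$ bound. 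The same sorted-column bookkeeping, combined with the fact (stated earlier) that demand distance lower-bounds valuation distance in a controlled way, transfers the bound to $\idealdist$.

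The main obstacle, as the above groping indicates, is pinning down exactly \emph{why} the factor is $2n(1-\nicefrac1m)$ rather than the naive $2n$: the saving comes entirely from the freedom to permute columns (goods), which lets us align each instance's "spent" utility mass rather than comparing mismatched columns. I expect the cleanest route is to work with the demand-distance definition throughout: bound $\sum_{j} \onenorm{\sordem_{U^1}(j) - \sordem_{U^2}(\resperm(j))}$ for a well-chosen $\resperm$, using that $\sum_j \sum_i u^1_{i,j} = \sum_j \sum_i u^2_{i,j} = n$ and a convexity/rearrangement argument to show the worst case is one instance being $\mathrm{IND}$ (mass maximally spread) and the other being $\mathrm{SEP}$ or $\mathrm{CON}$ (mass maximally concentrated), matching the "mutually equidistant at distance $2(m-1)$ when $n = m$" remark made right before the proposition in the body. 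Once the demand-distance bound is established at $\distsupperbound$, the valuation-distance bound follows since the identity-matching witness for $\idealdist$ can be rearranged column-by-column to dominate the demand-distance witness, or more directly by noting the extremal instances realizing the bound are the characteristic instances, for which the two distances coincide.
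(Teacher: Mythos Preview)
Your proposal contains a genuine gap and a directional error.

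First, the directional error: you propose to bound the demand distance and then ``transfer the bound to $\idealdist$.'' But the inequality goes the wrong way: one has $\demdist(U,U') \leq \idealdist(U,U')$ (sorting within each column can only decrease the $\ell^1$ distance, by a rearrangement inequality), so an upper bound on $\demdist$ tells you nothing about $\idealdist$. The paper proceeds in the opposite order\,---\,it bounds $\idealdist$ first and deduces the bound for $\demdist$ from $\demdist \leq \idealdist$.

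Second, and more importantly, you never give an actual argument. You correctly diagnose that the identity matching does \emph{not} work (two rows $\vec{e}_1$ and $\vec{e}_2$ are at $\ell^1$ distance $2$, not $2-\nicefrac{2}{m}$), and you correctly identify that the saving must come from the column permutation. But your attempt to exhibit a single clever $\resperm$ (``match the largest-demand column to the largest-demand column'') is never carried out, and your fallback (``the worst case is IND versus SEP/CON, for which the distances coincide'') is circular: knowing which instances attain the maximum is not the same as proving the maximum is what you claim.

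The idea you are missing is to exploit the \emph{minimum} in the definition rather than to construct a witness. Assume for contradiction that $\idealdist(U,U') > 2n - \nicefrac{2n}{m}$, and take the optimal matchings to be the identities. Rewriting $|u_{i,j}-u'_{i,j}| = u_{i,j}+u'_{i,j} - 2\min(u_{i,j},u'_{i,j})$ and summing gives $\sum_{i,j}\min(u_{i,j},u'_{i,j}) < \nicefrac{n}{m}$. Because the identity is optimal, the same inequality holds with $u'_{i,j}$ replaced by $u'_{i,\sigma(j)}$ for \emph{every} column permutation $\sigma$. Now use $xy \leq \min(x,y)$ for $x,y\in[0,1]$ and sum over the $m$ cyclic shifts $\sigma^{(k)}$: the left side becomes $\sum_{i,j} u_{i,j}\cdot\big(\sum_k u'_{i,\sigma^{(k)}(j)}\big) = \sum_{i,j} u_{i,j}\cdot 1 = n$, while the right side is $m\cdot\nicefrac{n}{m} = n$, a contradiction. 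This averaging-over-permutations trick is the missing idea; trying to pick one good $\resperm$ is the hard direction and is not needed.
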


For the proof of this proposition, we will first require the following lemma:
\begin{lemma}[Rearrangement Inequality]
\label{lem:rearrangement-inequality}
Let~$\vec{x}$ and~$\vec{y}$ be two size~$n$ vectors whose entries are
sorted non-increasingly. Then, for every permutation~$\sigma$ of~$[n]$,
it holds that
$$\sum_{i \in [n]} \min (x_i,y_i) \geq \sum_{i \in [n]} \min (x_i,y_{\sigma(i)}).$$
\end{lemma}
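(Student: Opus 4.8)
The plan is to prove the inequality by an exchange (``bubble-sort'') argument. Write $S(\sigma) \coloneqq \sum_{i\in[n]} \min\!\big(x_i, y_{\sigma(i)}\big)$ for a permutation $\sigma$ of $[n]$; the claim is exactly that $S(\mathrm{id}) \ge S(\sigma)$ for every $\sigma$. First I would observe that if $\sigma \neq \mathrm{id}$, then $\sigma$ must have an \emph{adjacent} inversion, i.e.\ an index $i$ with $\sigma(i) > \sigma(i+1)$ (otherwise $\sigma(1) < \cdots < \sigma(n)$ forces $\sigma = \mathrm{id}$). Let $\sigma'$ agree with $\sigma$ except that it swaps the values at $i$ and $i+1$. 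Only the $i$-th and $(i+1)$-th summands of $S$ change, so it suffices to show $S(\sigma') \ge S(\sigma)$, which after setting $a \coloneqq x_i$, $b \coloneqq x_{i+1}$ (so $a \ge b$, as $\vec x$ is non-increasing) and $c \coloneqq y_{\sigma(i+1)}$, $d \coloneqq y_{\sigma(i)}$ (so $c \ge d$, as $\vec y$ is non-increasing and $\sigma(i) > \sigma(i+1)$) becomes the elementary two-variable inequality $(\star)$: \emph{for $a \ge b$ and $c \ge d$, $\ \min(a,c) + \min(b,d) \ge \min(a,d) + \min(b,c)$.}

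\textbf{Establishing $(\star)$.} I would dispatch this by a short case distinction. If $b \ge c$, then $a \ge b \ge c \ge d$ and both sides equal $c + d$. If $b < c$, I would split on whether $a \ge c$ or $a < c$, and further on whether $b \ge d$ or $b < d$; in each of these finitely many branches both minima on the left-hand side can be evaluated explicitly, and the inequality follows from either $a \ge b$ or $c \ge d$. As an alternative that avoids the bookkeeping entirely, I could use the layer-cake identity $\min(u,v) = \int_0^\infty \mathbf{1}[u > t]\,\mathbf{1}[v > t]\,dt$ (valid for the nonnegative entries at hand): for each threshold $t$, the sets $\{i : x_i > t\}$ and $\{i : y_i > t\}$ are both initial segments of $[n]$ because the vectors are sorted, so the number of indices $i$ with $x_i > t$ \emph{and} $y_{\sigma(i)} > t$ is maximised (over $\sigma$) by the identity; integrating over $t$ then yields the whole lemma in one stroke.

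\textbf{Wrapping up.} Each such swap strictly decreases the number of inversions of $\sigma$ while not decreasing $S$. Iterating, after finitely many swaps I reach the identity permutation, having only (weakly) increased $S$ along the way; hence $S(\mathrm{id}) = \sum_{i\in[n]} \min(x_i, y_i) \ge S(\sigma)$, as desired.

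\textbf{Main obstacle.} There is no genuine difficulty here: the only mildly tedious point is the case analysis for $(\star)$, which is purely mechanical, and the layer-cake argument bypasses even that.
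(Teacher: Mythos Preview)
Your proposal is correct. Your primary argument is essentially the same exchange idea as the paper's, with one cosmetic difference: the paper sorts by \emph{selection} (at each step it swaps the smallest remaining $y$-value into the last position, verifies the two-term inequality for that non-adjacent swap, and recurses on the first $n-1$ positions), whereas you sort by \emph{bubbling} adjacent inversions and track the inversion count. Both reduce to the same two-element comparison $(\star)$, so neither buys anything over the other. Your layer-cake alternative, by contrast, is a genuinely different and slicker route: it avoids the case analysis entirely by observing that for each threshold $t$ the indicator sets are initial segments, so their overlap is maximised by the identity; just note that the lemma as stated does not assume nonnegative entries, so you should remark that one may first shift both vectors by a common constant (which shifts both sides of the inequality equally) before invoking the identity $\min(u,v)=\int_0^\infty \mathbf{1}[u>t]\,\mathbf{1}[v>t]\,dt$.
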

\begin{proof}
 The proof works recursively as follows:
 Assume that the smallest entry of~$\vec{y}$
 would not be in the last position~$n$, but in the position~$i^*$.
 Consider two cases.
 
 First, $x_n\le y_{i^*}$.
 Since be definition $y_n\ge y_{i^*}$,
 swapping~$y_{i^*}$ and~$y_{n}$ cannot decrease the overall sum.
 
 Second, $x_n>y_{i^*}$.
 Then, also $x_{i^*}>y_{i^*}$ (by~$\vec{x}$ being non-increasing).
 That is, swapping~$y_{i^*}$ and~$y_{n}$
 leads the minimum in position~$n$ to become~$y_{i^*}$
 (the previous minimum of position~${i^*}$).
 Yet, the minimum in position~${i^*}$ is now
 at least~$\min (y_n,x_{i^*})$, which must
 be at least $\min (y_{i^*},x_n)$, since~$y_n \ge y_{i^*}$
 and~$x_{i^*} \ge x_n$.

 The last position is correct, that is, remove these entries and recurse.
\end{proof}
\begin{proof}[Proof of Proposition~\ref{prop:distupperbound}]
 We first show the proof for the valuation distance.
 It is an adaption of the proof of Lemma~2 in the full
 version~(arXiv:2205.00492 [cs.GT]) of~\citet{boe-fal-nie-szu-was:c:metrics}.

 Assume towards a contradiction that we have two allocations instances~$U$ and~$U'$
 with $\idealdist(U,U') > \distsupperbound$.
 Assume w.l.o.g.\ that
 $\idealdist(U,U') = \sum_{i \in [n],j \in [m]}
		\left| u_{i,j} - u'_{i,j} \right|$.
 otherwise we could permute rows and column (relabel the goods and agents)
 of one of the allocation instances.

 Observe that
 \begin{align}
 \label{eq:distupperbound:1}
    \idealdist(U,U') 
        &= \sum_{i \in [n],j \in [m]} \left| u_{i,j} - u'_{i,j} \right| \notag\\
        &= \sum_{i \in [n],j \in [m]} \big(\max(u_{i,j},u'_{i,j}) - \min(u_{i,j},u'_{i,j})\big) \notag\\
        &= \sum_{i \in [n],j \in [m]} (u_{i,j} + u'_{i,j}) - 2 \sum_{i \in [n],j \in [m]} \min (u_{i,j},u'_{i,j}) \notag\\
        &= \sum_{i \in [n]} 2 - 2 \sum_{i \in [n],j \in [m]]} \min (u_{i,j},u'_{i,j}) \notag\\
        &= 2n - 2 \sum_{i \in [n],j \in [m]} \min (u_{i,j},u'_{i,j}).
 \end{align}

If $\idealdist(U,U') > \distsupperbound$, then it must hold that:
\begin{equation}
\label{eq:distupperbound:2}
    \sum_{i \in [n],j \in [m]]} \min (u_{i,j},u'_{i,j}) < n/m.
\end{equation}

For each permutation~$\sigma$ of~$[m]$,
we have $\idealdist(U,U') \le \sum_{i \in [n],j \in [m]} \left| u_{i,j} - u'_{i,\sigma(j)} \right|$
(this being incorrect would violate our assumption that
$\idealdist(U,U') = \sum_{i \in [n],j \in [m]}
		\left| u_{i,j} - u'_{i,j} \right|$.
Consequently, for every permutation~$\sigma$ of~$[m]$, and reasoning
analogously to Eq.~\eqref{eq:distupperbound:1} and Eq.~\eqref{eq:distupperbound:2}, we get
\[
    \sum_{i \in [n],j \in [m]} \min (u_{i,j},u'_{i,\sigma(j)}) < n/m.
\]

Observe that if $x, y \in [0,1]$,
then it holds that $x \cdot y \le \min(x,y)$.
Since for each~$i \in [n]$ and~$,j \in [m]$,
we have $u_{i,j},u'_{i,j} \in [0,1]$,
for each each permutation~$\sigma$ of~$[m]$,
it holds that:
\begin{equation}
\label{eq:distupperbound:3}
    \sum_{i \in [n],j \in [m]} u_{i,j} \cdot u'_{i,\sigma(j)} < n/m.
\end{equation}

We define a family~$\Psi:=\{\sigma^{(k)} \mid k \in [m]\}$ of permutations
using functions~$\sigma^{(k)}(j):={(j + k - 1 \mod m)+1}$.
By summing up Eq.~\eqref{eq:distupperbound:3} (on both sides)
for each permutation from~$\Psi$ we obtain:
\[
    \sum_{\sigma \in \Psi} \sum_{i \in [n],j \in [m]} u_{i,j} \cdot u'_{i,\sigma(j)}
    < n/m \cdot |\Psi|,
\]
which can be rearranged to
\begin{equation}
\label{eq:distupperbound:4}
    \sum_{i \in [n],j \in [m]} u_{i,j} \cdot \sum_{\sigma \in \Psi}(u'_{i,\sigma(j)})
    < n/m \cdot |\Psi|.
\end{equation}

Since $\sum_{j \in [m]} u_{i,j} = \sum_{j \in [m]} u'_{i,j} = 1, \forall i \in [n]$, it holds that:
\[
 \sum_{\sigma \in \Psi}(u'_{i,\sigma(j)}) = \sum_{k \in [m]}(u'_{i,\sigma^{(k)}(j)})
  = \sum_{\ell \in [m]} u'_{i,\ell} = 1.
\]
Hence (and with $|\Psi|=m$), from Eq.~\eqref{eq:distupperbound:4} we get
\[
    \sum_{i \in [n],j \in [m]} u_{i,j} < n,
\]
which contradicts the fact that $\sum_{j \in [m]} u_{i,j} = \sum_{j \in [m]} u'_{i,j} = 1, \forall i \in [n]$.
Hence, we have $\idealdist(U,U') \le \distsupperbound$.

To see that also $\demdist(U,U') \le \distsupperbound$, we upper-bound
$\demdist(U,U')\leq\idealdist(U,U')$.

Assume towards a contradiction that the demand distance between
two instances would be greater than the valuation distance.
Let matrices $V=\sordem_{U}(1)\cdots\sordem_{U}(m)$
and $V'=\sordem_{U'}(\sigma(1))\cdots\sordem_{U'}(\sigma(m))$
be the matrices resulting from the column-wise concatenation
of the demand vectors of~$U$ and~$U'$, respectively, using some permutation~$\sigma$ of the columns.

Recall Eq.~\eqref{eq:distupperbound:1}.
Since the demand distance is upper-bounded by the
entry-wise sum of $\ell^1$-distances between~$V$ and~$V'$,
the same reasoning as above holds.
Thus,
 \begin{align}
 \label{eq:distupperbound:5}
    \sum_{i \in [n],j \in [m]} \min (v_{i,j},v'_{i,j}) <
    \sum_{i \in [n],j \in [m]} \min (u_{i,j},u'_{i,j}),
 \end{align}
 Note that, Eq.~\ref{eq:distupperbound:5} holds for any column permutation
 used to define~$V'$, so in particular also when we use the same
 which we used for the valuation distance.
 In other words, we can assume that, up to permutation of the entries,
 $U$~and~$V$ as well as $U'$~and~$V'$ have the same column vectors.

 For Eq.~\ref{eq:distupperbound:5} to hold, it would need to hold that 
 \begin{align}
 \label{eq:distupperbound:6}
    \sum_{i \in [n]} \min (v_{i,j^*},v'_{i,j^*}) <
    \sum_{i \in [n]} \min (u_{i,j^*},u'_{i,j^*})
 \end{align}
 for some column~$j^* \in m$.
 Due to the rearrangement inequality
 (Lemma~\ref{lem:rearrangement-inequality}), however,
 we know that
 $\sum_{i \in [n]} \min (v_{i,j^*},v'_{i,j^*}) \geq
  \sum_{i \in [n]} \min (u_{i,j^*},u'_{i,j^*})$;
 a contradiction to Eq.\ref{eq:distupperbound:6}.
\end{proof}

\begin{figure}
	\begin{minipage}[t]{.4\linewidth}
  	\begin{tabular}{r|cccc}
			\multicolumn{5}{c}{allocation instance~\task}\\\toprule
  		$u_1$ & 2 & 4 & 6 & 8 \\
  		$u_2$ & 3 & 3 & 6 & 8 \\
  		$u_3$ & 6 & 8 & 6 & 0 \\
  		$u_4$ & 8 & 6 & 0 & 6 
  	\end{tabular}

		\bigskip

  	\begin{tabular}{r|cccc}
			\multicolumn{5}{c}{Demand vectors of~\task}\\\toprule
  		\phantom{$a_1$} & 8 & 8 & 6 & 8 \\
  		& 6 & 6 & 6 & 8 \\
  		& 3 & 4 & 6 & 6 \\
  		& 2 & 3 & 0 & 0 
  	\end{tabular}
  \end{minipage}\hfill%
	\begin{minipage}[t]{.4\linewidth}
  	\begin{tabular}{r|cccc}
			\multicolumn{5}{c}{allocation instance~$\task'$}\\\toprule
  		$u'_1$ & 2 & 4 & 6 & 8 \\
  		$u'_2$ & 3 & 3 & 6 & 8 \\
  		$u'_3$ & 6 & 8 & 0 & 6 \\
  		$u'_4$ & 8 & 6 & 6 & 0 
  	\end{tabular}

		\bigskip

  	\begin{tabular}{r|cccc}
			\multicolumn{5}{c}{Demand vectors of~$\task'$}\\\toprule
  		\phantom{$a'_1$} & 8 & 8 & 6 & 8 \\
  		& 6 & 6 & 6 & 8 \\
  		& 3 & 4 & 6 & 6 \\
  		& 2 & 3 & 0 & 0 
  	\end{tabular}
  \end{minipage}
	\caption{Allocation instances demonstrating a zero demand distance (note the
	demand vectors and apply the identity matching) but a non-negative valuation
	distance (verify via an exhaustive check).}	
  \label{fig:dist-vs-pseudodist}
\end{figure}

\section{Deferred Details from Section~\ref{sec:explicit}}
\label{app:sec:explicit}
\begin{figure}\centering%
	\includegraphics[width=0.4\linewidth]{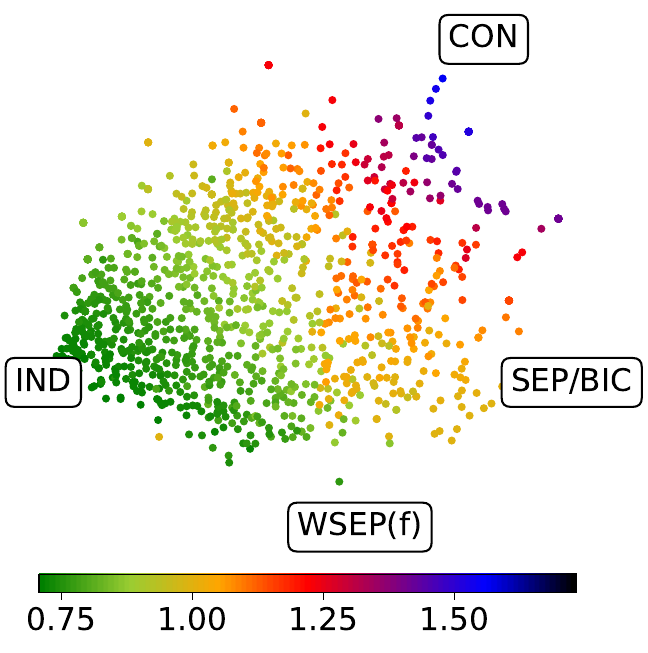}
	\includegraphics[width=0.4\linewidth]{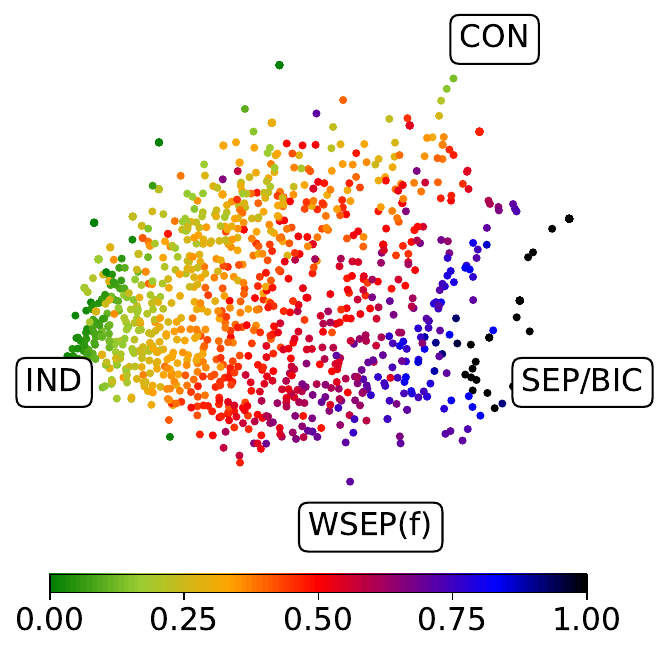}
	\caption{Distributions of the $\sigma_1$ (left) and $\sigma_2$ (right) values on our distance-embedding map using the
	valuation distance for the $3\times 6$ instances.}
	\label{app:fig:explicit:embedding:singvalues}
\end{figure}
An analogous map to the map presented in \cref{fig:explicit:embedding} can be seen in Figure \ref{app:fig:explicit:embedding:singvalues}, in which the distribution of $\sigma_1$ and $\sigma_2$ on our distance-embedding map is shown for the $3\times 6$ instances.
\subsection{Deferred Details from Section~\ref{sec:explicit:theory}}
\label{app:explicit:theory}

\subsubsection{Wide Separability}
\label{app:explicit:theory:wsep}
We propose two ways of generalizing this instance to $m$ not divisible by $n$, setting $\ell \coloneqq \lfloor m/n\rfloor$:
In the first variant, $\WSEP$, each agent values $\ell$ goods at $1/\ell$, and thus $m \bmod n$ goods have no value for any agent.
In the second variant, $\WSEPf$, each agent values the $\ell$ goods that no other agent values at $n/m$, and the final $m \bmod n$ goods have a value of $\frac{1-\ell\,n/m}{m \bmod n}$ for every agent.
$\WSEP$ always lies on the East and $\WSEPf$ on the South border; if $n \mathrel{|} m$, they coincide, meeting in the South-East corner.

\subsection{Deferred Proofs}
\label{app:explicit:theory:proofs}
\thmexplicitw*
\begin{proof}
Singular values are always nonnegative real numbers.
It is well-known that the second singular value is 0 if and only if the matrix has rank 1 (or zero), i.e., if all rows are linearly dependent.
Since all row sums are 1, this is equivalent to all rows being identical.
Since IND and CON each have only identical rows, they have $\sigma_2=0$.
The same property is inherited by their convex combinations, which, by the continuity of the singular values, must trace the entire boundary between IND and CON.
\end{proof}

\thmexplicits*
\begin{proof}
By \cref{eq:sigma1rayleigh},
\begin{align*}
\sigma_1 &= \max_{\vec{v}_1 \in \mathbb{R}^m, \norm{\vec{v}_1}=1} \norm{U \, \vec{v}_1} \\
&\geq \left\lVert U \, \begin{psmallmatrix}1/\sqrt{m}\\1/\sqrt{m}\\\vdots\\1/\sqrt{m}\end{psmallmatrix} \right\rVert \\
&= \left\lVert \begin{psmallmatrix}1/\sqrt{m}\\1/\sqrt{m}\\\vdots\\1/\sqrt{m}\end{psmallmatrix} \right\rVert \tag*{(by row stochasticity)} \\
&= \sqrt{n/m}.
\end{align*}
Note that the vector being multiplied with $U$ in the second row has dimension $m$ (thus norm $1$), but the vector in the third row has dimension $n$.

We now show that, whenever this inequality is tight, all column sums must be $n/m$.
This extends a widely known proof (presentation adapted
from~\citet{mathstackstochastic}) showing that, among square matrices, a
row-stochastic matrix has $\sigma_1 = 1$ if and only if it is doubly stochastic, i.e., its
column sums are also equal to one.
In the following, we set $\mathbf{1}_t$ to denote the vector $\begin{psmallmatrix}
    1\\\vdots\\1
\end{psmallmatrix} \in \mathbb{R}^t$, and denote the vector dot product by $\langle \cdot, \cdot \rangle$.
\begin{align*}
m &= \frac{m}{n} \langle \mathbf{1}_n, \mathbf{1}_n \rangle \\
&= \frac{m}{n} \langle \mathbf{1}_n, U \, \mathbf{e}_m \rangle \tag*{(by row stochasticity)} \\
&= \left\langle U^T \left(\frac{m}{n} \mathbf{1}_n\right), \mathbf{1}_m \right\rangle \\
&\leq \left\lVert U^T \left(\frac{m}{n} \mathbf{1}_n\right) \right\rVert \cdot \norm{\mathbf{1}_m} \tag*{(Cauchy-Schwartz)} \\
&\leq \sigma_1(U^T) \, \frac{m}{n} \, \norm{\mathbf{1}_n} \, \norm{\mathbf{1}_m} \tag*{(property of Operator Norm)} \\
&= \sigma_1(U) \, \frac{m}{n} \, \norm{\mathbf{1}_n} \, \norm{\mathbf{1}_m} \tag*{\text{($\sigma_1(A) = \sigma_1(A^T)$)}}\\
&= \sqrt{\frac{n}{m}} \, \frac{m}{n} \, \norm{\mathbf{1}_n} \, \norm{\mathbf{1}_m} \tag*{(by assumption)}\\
&= \sqrt{\frac{n}{m}} \, \frac{m}{n} \, \sqrt{n} \, \sqrt{m} \\
&= m.
\end{align*}
Since both ends of the inequality chain are equal, all terms along the chain must be equal. Since the Cauchy-Schwartz step was an equality, we know that $U^T \, (m/n \, \mathbf{1}_n)$ is a scalar multiple of $\mathbf{1}_m$; since $\norm{U^T \, (m/n \, \mathbf{1}_n)} \, \norm{1_m} = m$, we now that $\norm{U^T \, (m/n \, \mathbf{1}_n)} = \sqrt{m}$; finally, we now that $U^T \, (m/n \, \mathbf{1}_n)$ is nonnegative.
Taking these facts together, we conclude that $U^T \, (m/n \, \mathbf{1}_n) = \mathbf{1}_n$, which means that all column sums of $U$ are equal to $n/m$.

It is easy to see that $\IND$, $\WSEPf$, and their linear interpolations all have column sums of $n/m$, which concludes the claim.
\end{proof}

\thmexplicitn*
\begin{proof}
Setting $\sigma_1, \sigma_2, \dots, \sigma_r$ for the singular values of some matrix, it is well known that
$\sum_{t=1}^r \sigma_t^2$ equals the square of the matrix' Frobenius norm, i.e., equals the sum of squares across the entries of the matrix. That is, for a matrix $U$,
\[ \sum_{t=1}^r \sigma_t^2 = \sum_{i=1}^{n} \sum_{j=1}^m u_{i,j}^2. \]
Since all entries of our matrix are between 0 and 1, the $i$th row's
contribution to the right-hand side is $\sum_{j=1}^m u_{i,j}^2 \leq \sum_{j=1}^m
u_{i,j} = 1$, and this inequality is tight if and only if agent $i$ values one item at 1
and all others at 0.
It follows that
\[ \sigma_1^2 + \sigma_2^2 \leq \sum_{t=1}^r \sigma_t^2 \leq n, \]
where the inequality is tight exactly if and only if (a) all agents single-mindedly value a single good (which makes the second inequality tight) and (b) all $\sigma_t$ for $t \geq 3$ are zero.
Part (b) is the case if and only if $U$ has rank at most 2. Assuming part (a), $U$'s rank is exactly the number of distinct goods which some agent values single-mindedly.
Taking square, we obtain the desired inequality with its necessary-and-sufficient condition.

Clearly, this condition is satisfied by $\IND$ and by $\BIC$ if $n$ is even.
Note that there is a natural way to interpolate between these two, where one good is single-mindedly valued by $t$ agents and a second good by $n-t$ agents.
Clearly, each of these points lies on the boundary. If one linearly interpolates between successive values of $t$, the interpolations have one agent who values two items, which removes this interpolation point from the boundary.
But of course this operation does not move the instance far from the boundary.
\Cref{fig:explicit:outlines} shows this interpolation as the blue line following the upper boundary of the map.
\end{proof}

\thmexplicite*
\begin{proof}
$\sigma_2 \leq \sigma_1$ holds by definition of the singular values.

Let $U$ be a utility matrix with the block matrix structure from the theorem
statement (since the singular values are invariant to row and column
permutations, it suffices to consider such matrices directly).
One important property of singular values we have not used yet is that the singular values of a matrix $U$ are the square roots of the eigenvalues of the Gram matrix $U^T U$ (or, equivalently, for $U U^T$).

Given the block matrix structure,
\[ \begin{pmatrix}
A & 0 & 0\\
0 & A & 0\\
0 & 0 & B
\end{pmatrix} \, \begin{pmatrix}
A & 0 & 0\\
0 & A & 0\\
0 & 0 & B
\end{pmatrix}^T = \begin{pmatrix}
A A^T & 0 & 0\\
0 & A A^T & 0\\
0 & 0 & B B^T
\end{pmatrix}.\]
It is well known that the eigenvalues of such a block diagonal matrix are simply the eigenvalues of the blocks $A A^T$, $A A^T$, and $B B^T$ combined (with multiplicity preserved), which means that the singular values of $U$ are just the singular values of $A$, $A$, and $B$ combined. In particular, all singular values of $A$ will appear in $U$ at least twice. By $\sigma_1(A) \geq \sigma_1(B)$, the largest singular value is one of these duplicated singular values, which implies that $\sigma_1(U) = \sigma_2(U) = \sigma_1(A)$.

After reshuffling the columns, $\WSEP$ looks as follows (setting again $\ell \coloneqq \lfloor m / n \rfloor$):\\
\resizebox{\linewidth}{!}{$\left(\begin{array}{ccc|ccc|ccccccc}
1/\ell & \cdots & 1/\ell & 0 & 0 & 0 & 0 & 0 & 0 & 0& 0 & 0 & 0\\ \hline
0 & 0 & 0 & 1/\ell & \cdots & 1/\ell & 0 & 0 & 0 & 0 &0 &0 & 0 \\ \hline
0 & 0 & 0 & 0 & 0 & 0 & 1/\ell & \cdots & 1/\ell & 0 & 0 & \cdots & 0  \\ 
 &  &  &  &  &  &  & &  &  \ddots &   \\ 
0 & 0 & 0 & 0 & 0 & 0 & 0 & 0 & 0 & 0 & 1/\ell & \cdots & 1/\ell \\ 
\end{array}\right)$}
where the lines indicate the division of the blocks.
The symmetry of the instance ensures that (unless $B$ has height 0) $\sigma_1(B) = \sigma_1(A)$, so $\WSEP$ lies on this boundary.

For $\BIC$, the situation is even simpler:
\[\left(\begin{array}{c|c|cccc}
1&0&0&0&\cdots&0 \\
\vdots&0&0&0&\cdots&0 \\
1&0&0&0&\cdots&0 \\ \hline
0&1&0&0&\cdots&0 \\
0&\vdots&0&0&\cdots&0 \\
0&1&0&0&\cdots&0 \\ \hline
0&0&1& 0 & \cdots & 0
\end{array}\right)\]
In this case, the singular values are easy to calculate: $\sigma_1(A) = \sqrt{\lfloor n/2 \rfloor}$ which is at least $\sigma_1(B) = 1$.

Interpolating between both matrices without leaving the boundary is not straightforward.
For this, we first linearly interpolate from $\WSEP$ to $\SEP$.
For some $0 < \theta < 1$, this means that each agent approves one good at $\theta + (1-\theta)/\ell$ and $\ell-1$ goods at $(1-\theta)/\ell$.
For this interpolation, the block structure remains the same as the one discussed for $\WSEP$ and preserves the same symmetry, which is why the interpolation stays on the boundary.

The interpolation from $\SEP$ to $\BIC$ proceeds in discrete steps as follows: For $1 \leq r \leq \lfloor n / 2 \rfloor$, $r$ many agents only value the first good, $r$ agents value only the second good, and the remaining agents value each a separate good.
One verifies that this recovers $\SEP$ for $r=1$ and $\BIC$ for $r= \lfloor n/2 \rfloor$, and that each of these stages can be represented in the block matrix shape, where $A$ is a column of $r$ ones, and $B$ is the identity matrix with possibly zero columns attached to its right.
Then, $\sigma_1(A) = \sqrt{r}$ which is at least $\sigma_1(B) = 1$, which means that this interpolation step lies on the boundary. By linearly interpolating between successive steps, one obtains (after reordering) matrices of the shape
\[\left(\begin{array}{cc|cc|cccc}
1&0&0&0&0&0&\cdots&0 \\
\vdots&0&0&0&0&0&\cdots&0 \\
1&0&0&0&0&0&\cdots&0 \\ 
\theta & 1-\theta&0&0&0&0&\cdots&0 \\ \hline
0&0&1&0&0&0&\cdots&0 \\
0&0&\vdots&0&0&0&\cdots&0 \\
0&0&1&0&0&0&\cdots&0 \\ 
0&0&\theta&1-\theta&0&0&\cdots&0 \\ \hline
0&0&0&0&1& 0 & \cdots & 0\\
0&0&0&0&0&1 & \cdots & 0 \\
0&0&0&0&0& 0 & \ddots & 0
\end{array}\right),\]
which one verifies also lie on the boundary. As a result, we can continuously\footnote{While we reordered the matrix in between, this is just for exposition.} interpolate from $\WSEP$ to $\BIC$ while staying on the right boundary. It follows that this interpolation traces the entire right boundary between $\WSEP$ and $\BIC$.
\end{proof}

\section{Additional Experimental Results}
\label{app:moreexperiments}
In this section, we expand upon \cref{subsec:stude,sec:explicit:empirical} by
(i) presenting the remaining comparisons of the different maps for the different features, including the demand distance defined in \cref{sec:demand-distance},
(ii) introducing new features, and
(iii) disaggregating plots of the distributions of instance sources.

A comparison of the distance-embedding maps and the explicit map regarding the
minimax envy, the maximum Nash welfare, and the maximum utilitarian welfare can be seen in
\cref{fig:all:mmenvy,fig:all:nash,fig:all:welf}, respectively;
these show that the distance-embedding and explicit maps provide similar
information regarding these features.

The same observation holds for some other features:
\cref{fig:all:exef} shows whether an instance permits an envy-free
allocation---an information that can also be derived from the minimax envy---,
while \cref{fig:all:exefp} shows whether an instance allows an envy-free and Pareto-efficient allocation.
Interestingly, the maps of the two features look identical---indeed they are for $5\times 5$ and $10\times 20$, and only differ for 15 instances for $3\times 6$.
We have also investigated whether an instance fulfills the maximin share (MMS) criterion, which requires each agent to receive a bundle with a utility no less than the maximum, over all allocations, of the utility of the bundle with the lowest utility for the agent.
We omit the corresponding maps, as each instance of our two instance set satisfies this criterion.
Additionally, we investigate the fraction of the proportional share that
can be guaranteed, i.e. the largest $\alpha$ so that $u_i(S_i) \geq \alpha\cdot \frac{u_i([m])}{n}$ for each $i \in [n]$, where $S_i$ denotes the bundle of goods given to agent~$i$.
Across our maps shown in Figure \cref{fig:all:propshare}, this feature follows a similar pattern
as the maximum Nash welfare, which is intuitive since it is (up to scaling) the maximum
achievable egalitarian welfare.
Furthermore, we consider the sum, over all agents, of the maximal envies, i.e. $\sum_{i \in [n]}\max_{i' \neq i} u_i(S_{i'}) - u_i(S_i)$, which can be seen in \cref{fig:all:sumabs} and which shows a similar color gradient to utilitarian welfare, but reversed: the sum of the maximal envies (smoothly) decreases if an instance is closer to separability.

While the previous features are based on allocations, we also consider features that can be computed solely from the utility matrix.
\cref{fig:all:maxd} and \cref{fig:all:prefdiv} show the ``maximum demand'' and ``preference diversity'', respectively, which are introduced in Section~\ref{sec:explicit:demystifying}: These results support the correlation between the features and the singular values mentioned in the latter section.
\cref{fig:all:fracsingleminded} shows the fraction of agents who are single-minded, i.e., who value only one item: More than half of the map is covered by instances in which at most $20\%$ of the agents are single-minded.

\begin{figure*}\centering
	\def\firsttwo{.29}
	\begin{subfigure}[t]{\firsttwo\linewidth}
		\centering
		\includegraphics[width=\linewidth]{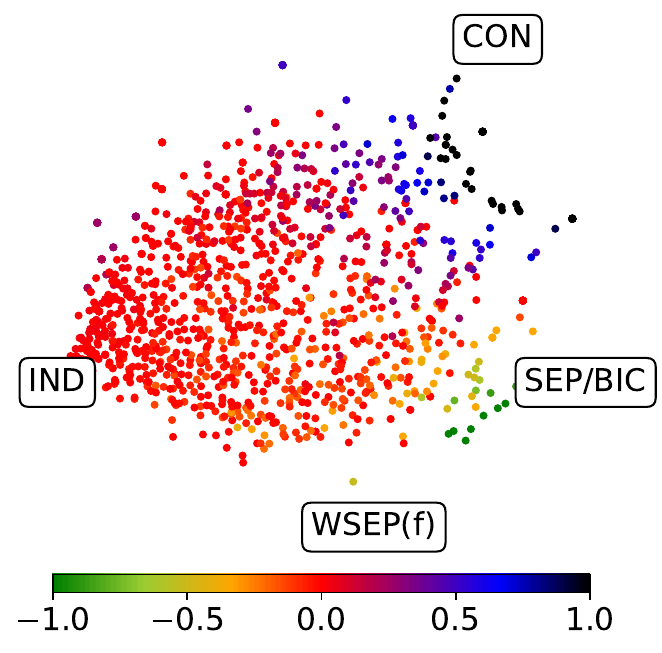}
		\caption{$3\times 6$, valuation distance}
	\end{subfigure}\hfill%
	\begin{subfigure}[t]{\firsttwo\linewidth}
		\centering
		\includegraphics[width=\linewidth]{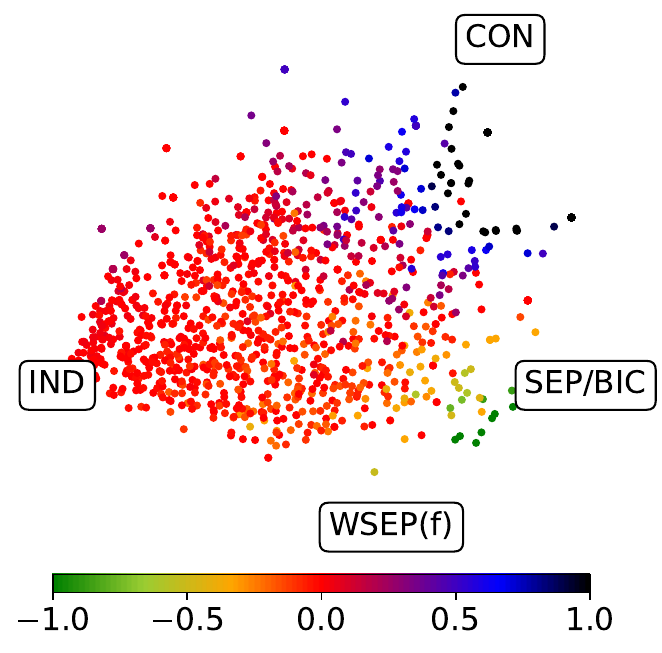}
		\caption{$3\times 6$, demand distance}
	\end{subfigure}\hfill%
	\begin{subfigure}[t]{\fpeval{(0.96 - \firsttwo - \firsttwo)*\linewidth}pt}
		\centering
		\includegraphics[width=\linewidth]{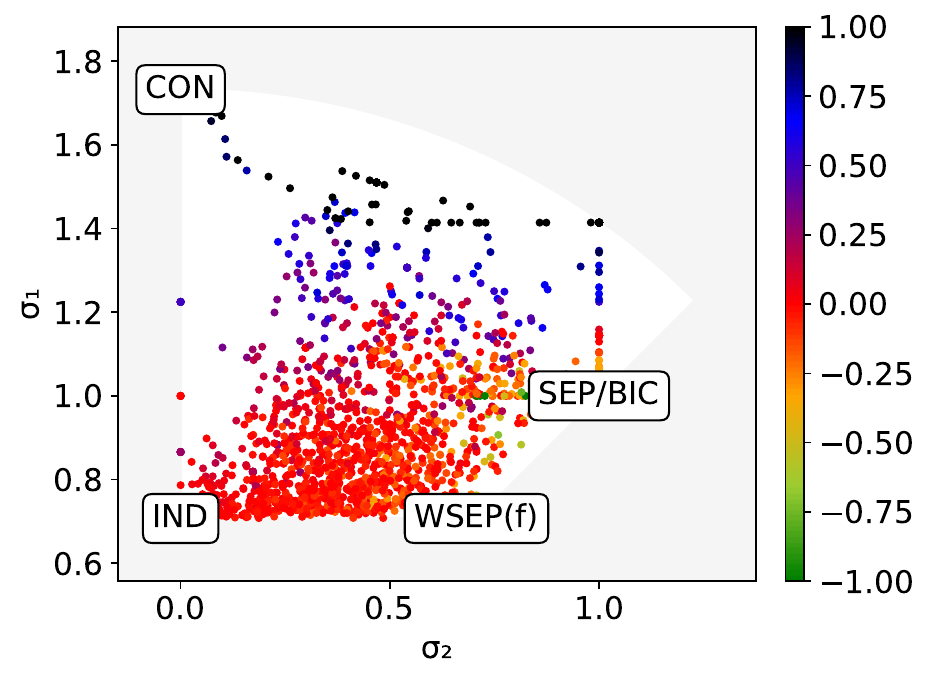}
		\caption{$3\times 6$}
	\end{subfigure}\\
	\begin{subfigure}[t]{\firsttwo\linewidth}
		\centering
		\includegraphics[width=\linewidth]{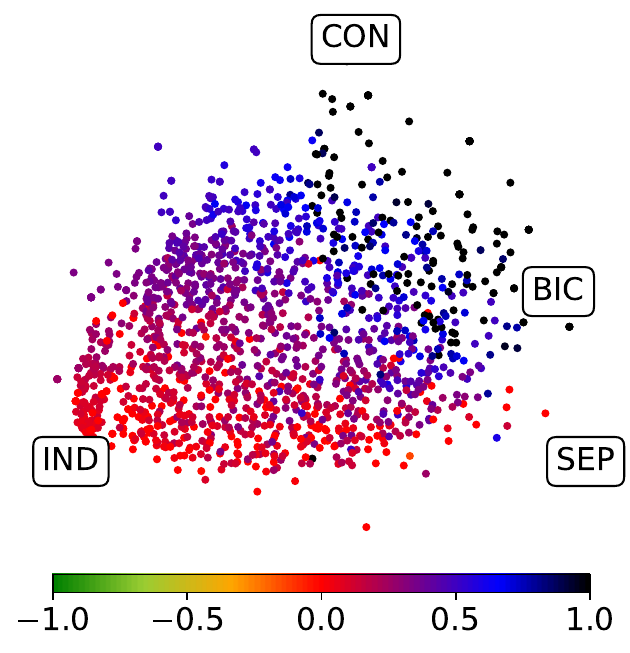}
		\caption{$5\times 5$, valuation distance}
	\end{subfigure}\hfill%
	\begin{subfigure}[t]{\firsttwo\linewidth}
		\centering
		\includegraphics[width=\linewidth]{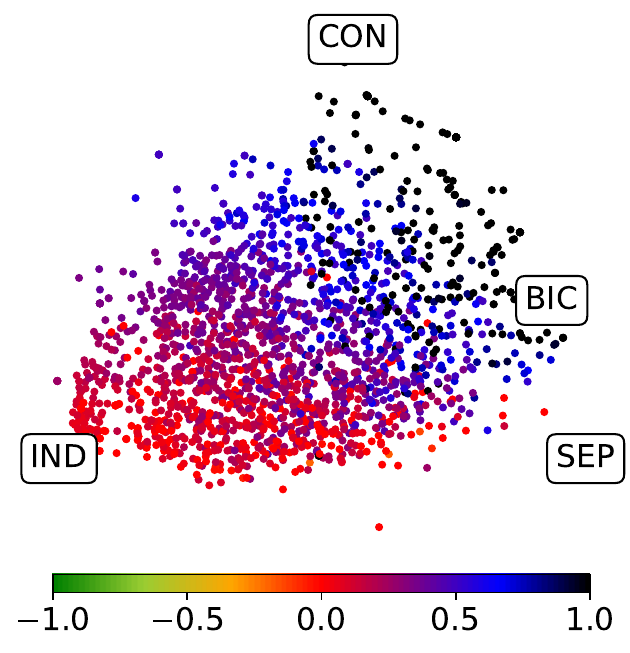}
		\caption{$5\times 5$, demand distance}
	\end{subfigure}\hfill%
	\begin{subfigure}[t]{\fpeval{(0.96 - \firsttwo - \firsttwo)*\linewidth}pt}
		\centering
		\includegraphics[width=\linewidth]{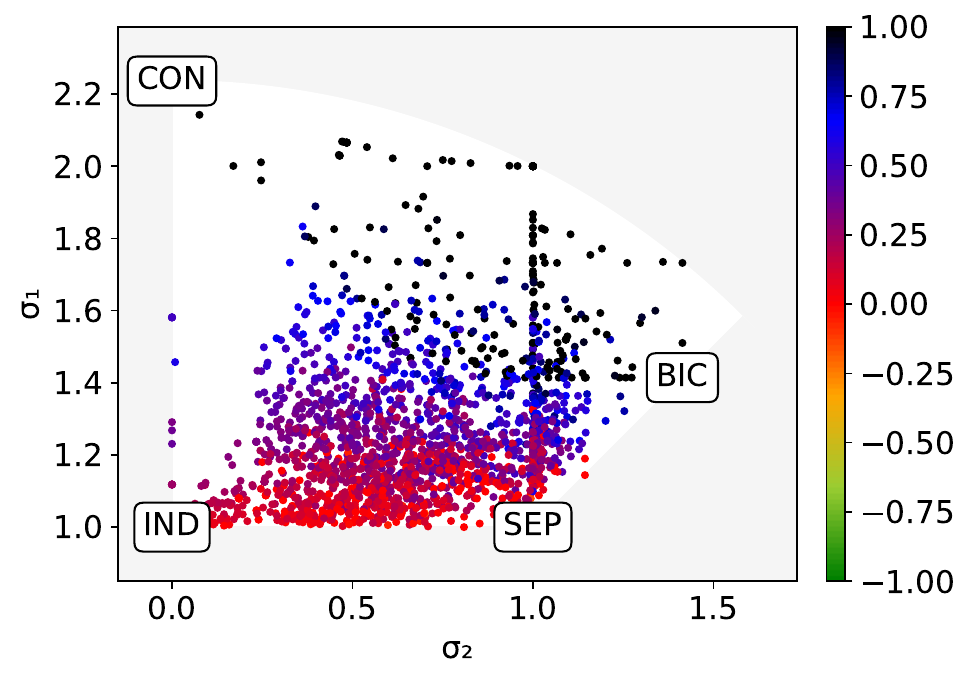}
		\caption{$5\times 5$}
	\end{subfigure}\\
	\begin{subfigure}[t]{\firsttwo\linewidth}
		\quad
	\end{subfigure}\hfill%
	\begin{subfigure}[t]{\firsttwo\linewidth}
		\centering
		\includegraphics[width=\linewidth]{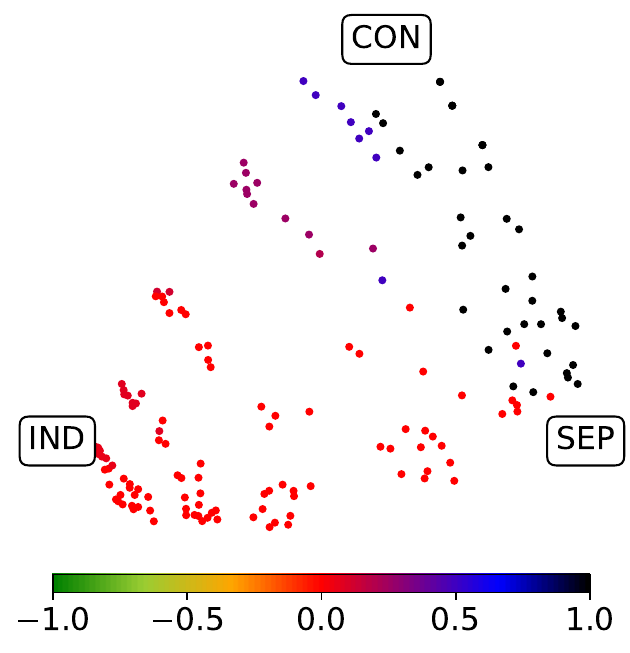}
		\caption{$10\times 20$, demand distance}
	\end{subfigure}\hfill%
	\begin{subfigure}[t]{\fpeval{(0.96 - \firsttwo - \firsttwo)*\linewidth}pt}
		\centering
		\includegraphics[width=\linewidth]{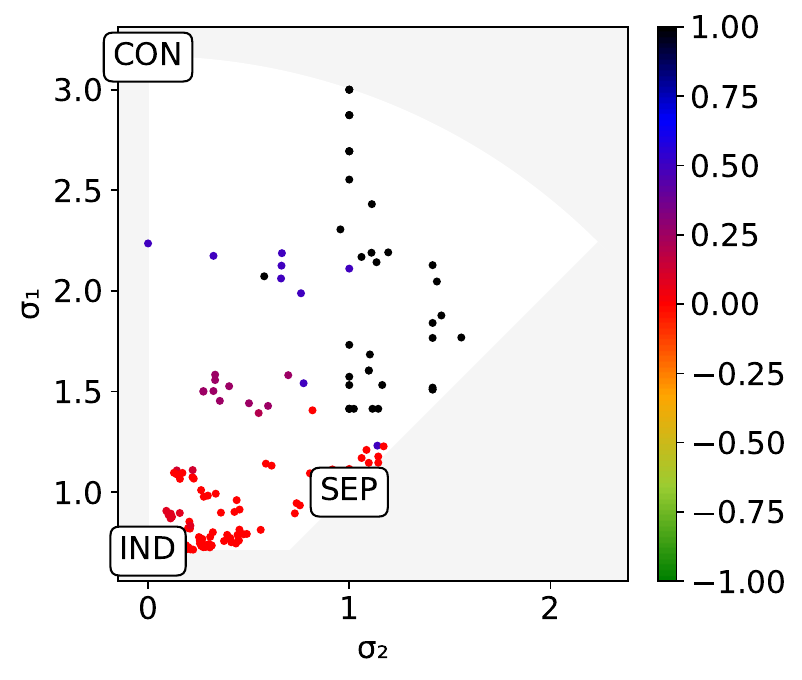}
		\caption{$10\times 20$}
	\end{subfigure}\\
	\caption{Distribution of the minimax envy on our distance-embedding map using
	the valuation distance (first column), using the demand distance (second
  column), and on our explicit map (third column). Computing the valuation
  distance for instances of the~$10 \times 20$~dataset was computationally too
  demanding, hence the blank space in the third row.}
  \label{fig:all:mmenvy}
	\let\firsttwo\undefined
\end{figure*}

\begin{figure*}\centering
	\def\firsttwo{.29}
	\begin{subfigure}[t]{\firsttwo\linewidth}
		\centering
		\includegraphics[width=\linewidth]{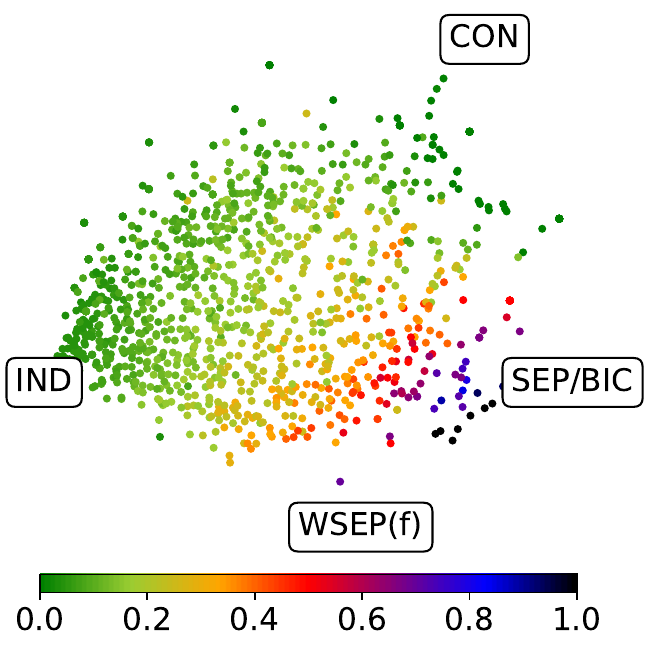}
		\caption{$3\times 6$, valuation distance}
	\end{subfigure}\hfill%
	\begin{subfigure}[t]{\firsttwo\linewidth}
		\centering
		\includegraphics[width=\linewidth]{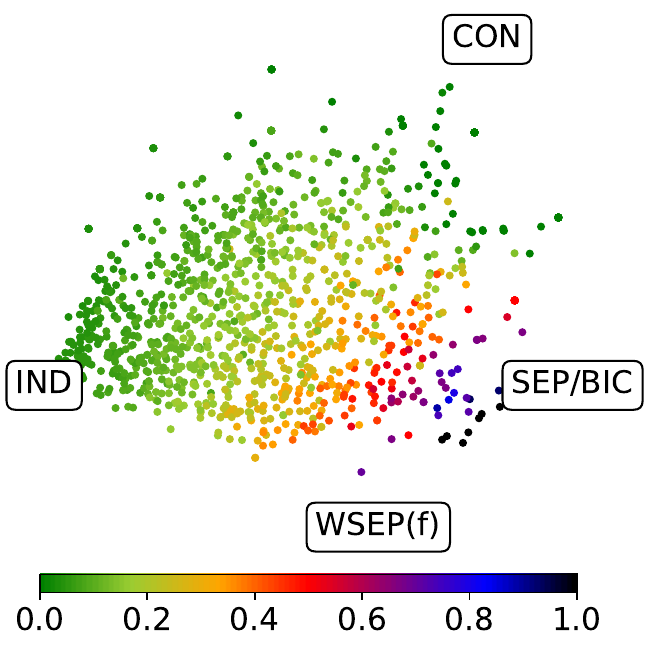}
		\caption{$3\times 6$, demand distance}
	\end{subfigure}\hfill%
	\begin{subfigure}[t]{\fpeval{(0.96 - \firsttwo - \firsttwo)*\linewidth}pt}
		\centering
		\includegraphics[width=\linewidth]{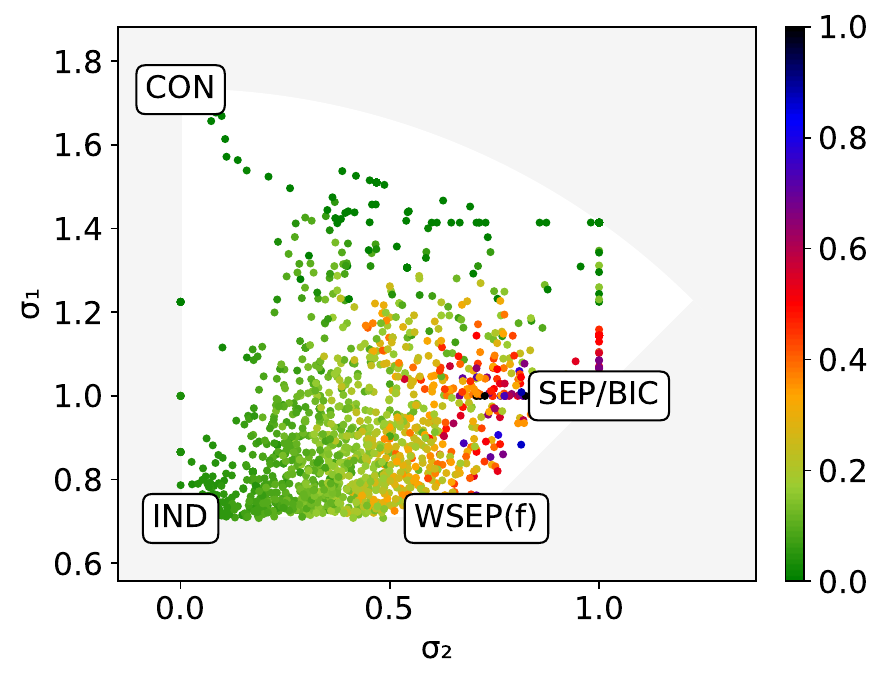}
		\caption{$3\times 6$}
	\end{subfigure}\\
	\begin{subfigure}[t]{\firsttwo\linewidth}
		\centering
		\includegraphics[width=\linewidth]{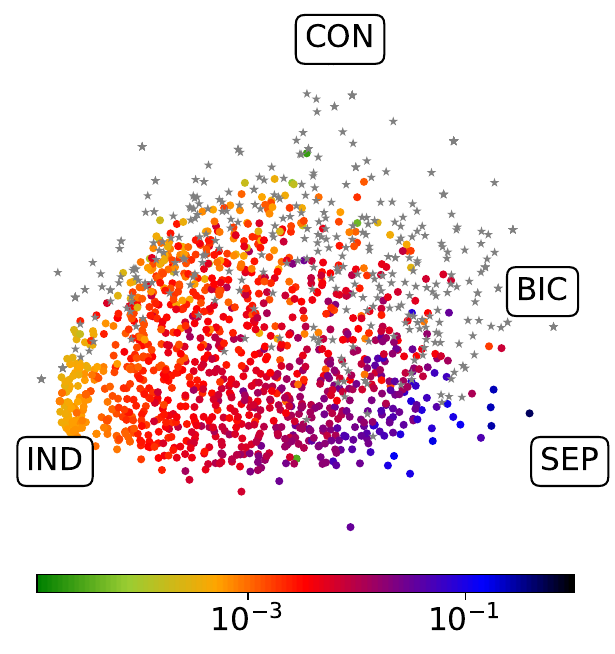}
		\caption{$5\times 5$, valuation distance}
	\end{subfigure}\hfill%
	\begin{subfigure}[t]{\firsttwo\linewidth}
		\centering
		\includegraphics[width=\linewidth]{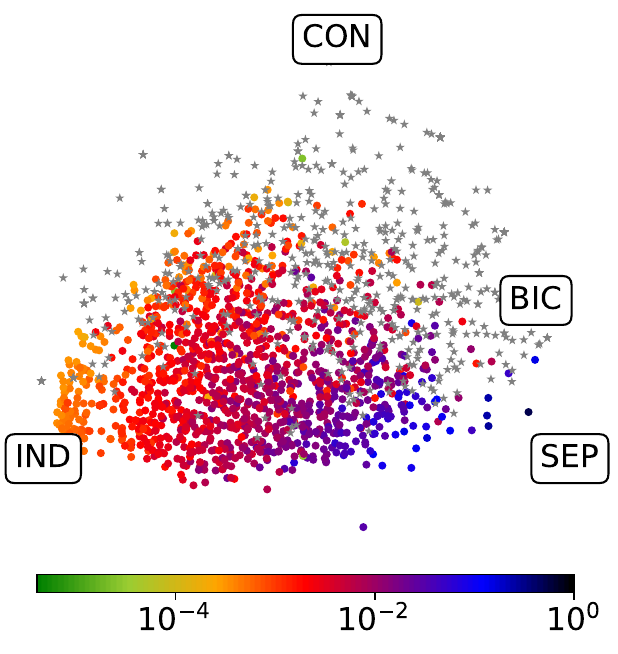}
		\caption{$5\times 5$, demand distance}
	\end{subfigure}\hfill%
	\begin{subfigure}[t]{\fpeval{(0.96 - \firsttwo - \firsttwo)*\linewidth}pt}
		\centering
		\includegraphics[width=\linewidth]{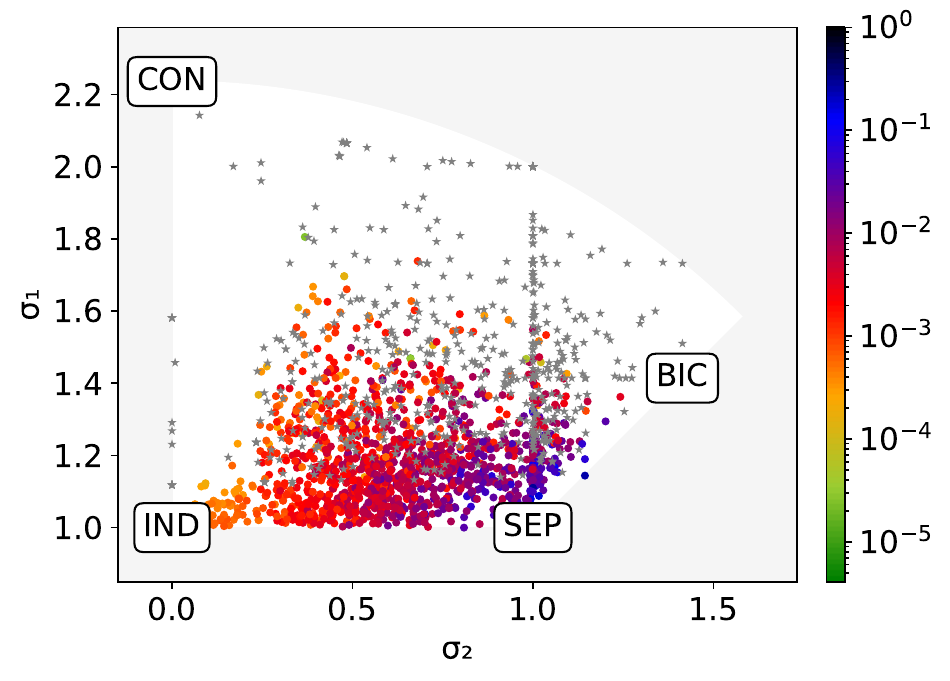}
		\caption{$5\times 5$}
	\end{subfigure}\\
	\begin{subfigure}[t]{\firsttwo\linewidth}
		\quad
	\end{subfigure}\hfill%
	\begin{subfigure}[t]{\firsttwo\linewidth}
		\centering
		\includegraphics[width=\linewidth]{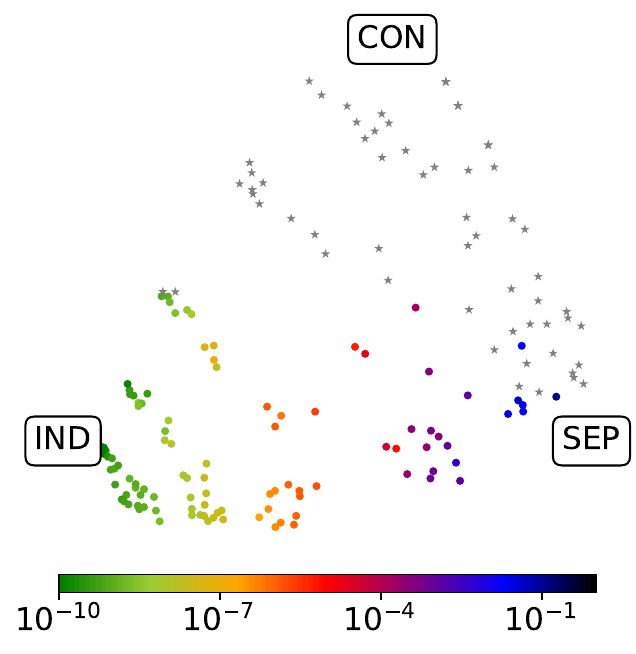}
		\caption{$10\times 20$, demand distance}
	\end{subfigure}\hfill%
	\begin{subfigure}[t]{\fpeval{(0.96 - \firsttwo - \firsttwo)*\linewidth}pt}
		\centering
		\includegraphics[width=\linewidth]{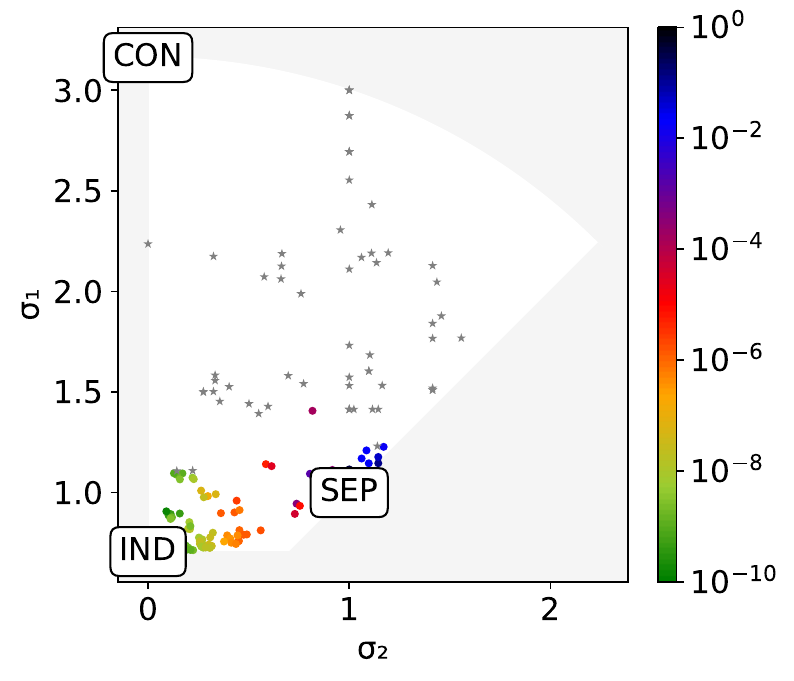}
		\caption{$10\times 20$}
	\end{subfigure}\\
	\caption{Distribution of the maximal Nash welfare on our distance-embedding
	map using the valuation distance (first column), using the demand distance
(second column), and on our explicit map (third column). Grey stars represent
the value $0$ in the $5\times 5$ and $10\times 20$ maps. For the $10\times 20$
instances, we multiply the utilities by $1000$ and round them naturally while
computing the allocation to avoid excessive runtimes. Computing the valuation
distance for instances of the~$10 \times 20$~dataset was computationally too
demanding, hence the blank space in the third row.}
  \label{fig:all:nash} 
	\let\firsttwo\undefined
\end{figure*}

\begin{figure*}\centering
	\def\firsttwo{.29}
	\begin{subfigure}[t]{\firsttwo\linewidth}
		\centering
		\includegraphics[width=\linewidth]{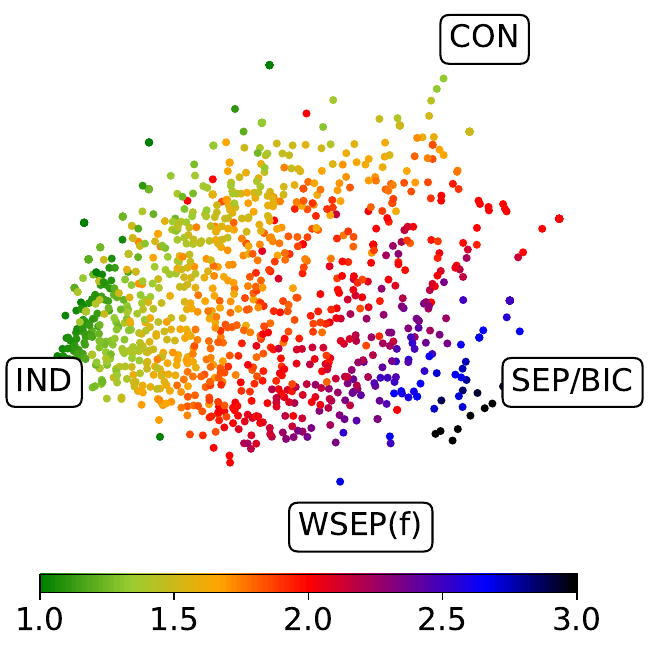}
		\caption{$3\times 6$, valuation distance}
	\end{subfigure}\hfill%
	\begin{subfigure}[t]{\firsttwo\linewidth}
		\centering
		\includegraphics[width=\linewidth]{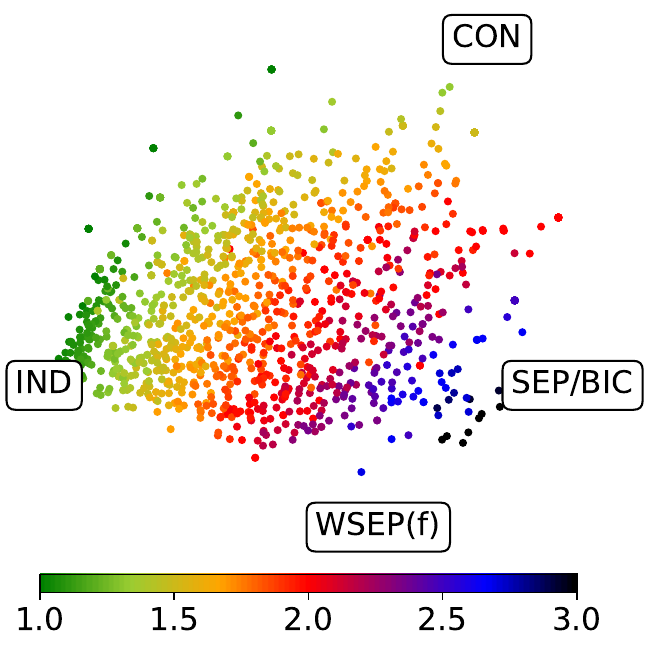}
		\caption{$3\times 6$, demand distance}
	\end{subfigure}\hfill%
	\begin{subfigure}[t]{\fpeval{(0.96 - \firsttwo - \firsttwo)*\linewidth}pt}
		\centering
		\includegraphics[width=\linewidth]{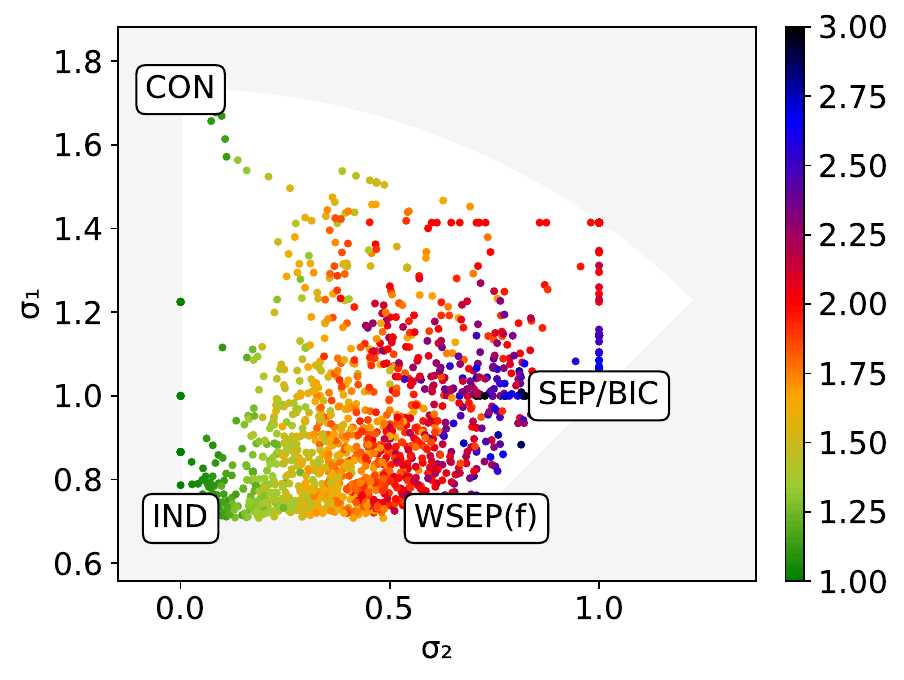}
		\caption{$3\times 6$}
	\end{subfigure}\\
	\begin{subfigure}[t]{\firsttwo\linewidth}
		\centering
		\includegraphics[width=\linewidth]{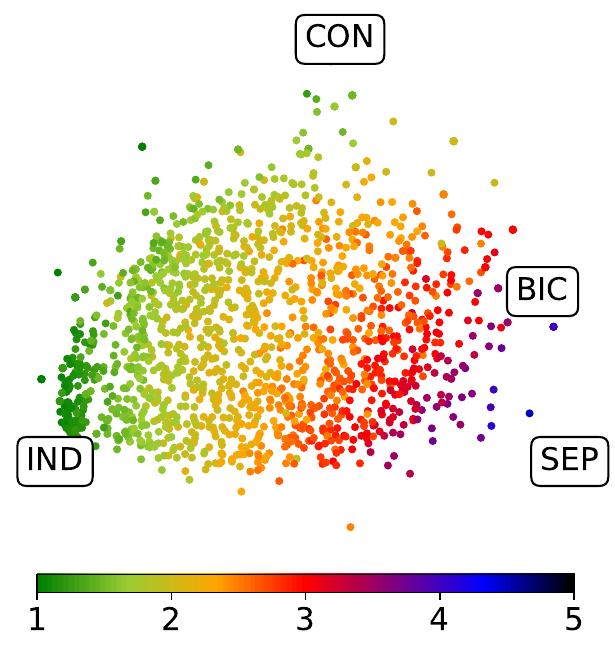}
		\caption{$5\times 5$, valuation distance}
	\end{subfigure}\hfill%
	\begin{subfigure}[t]{\firsttwo\linewidth}
		\centering
		\includegraphics[width=\linewidth]{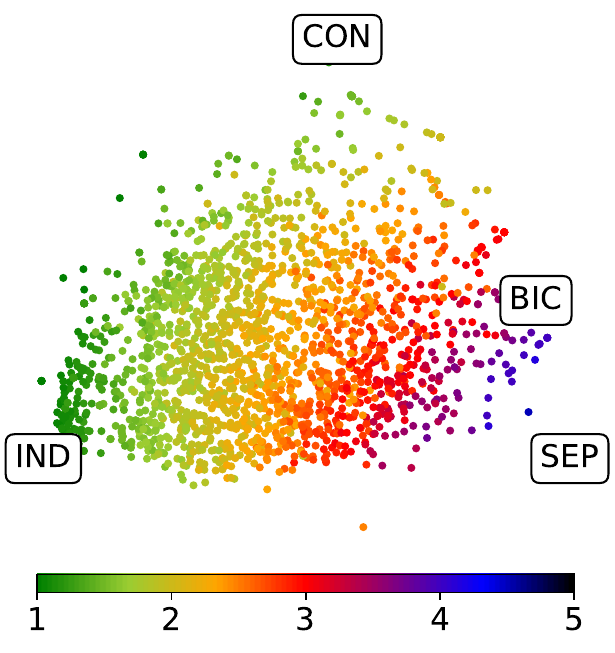}
		\caption{$5\times 5$, demand distance}
	\end{subfigure}\hfill%
	\begin{subfigure}[t]{\fpeval{(0.96 - \firsttwo - \firsttwo)*\linewidth}pt}
		\centering
		\includegraphics[width=\linewidth]{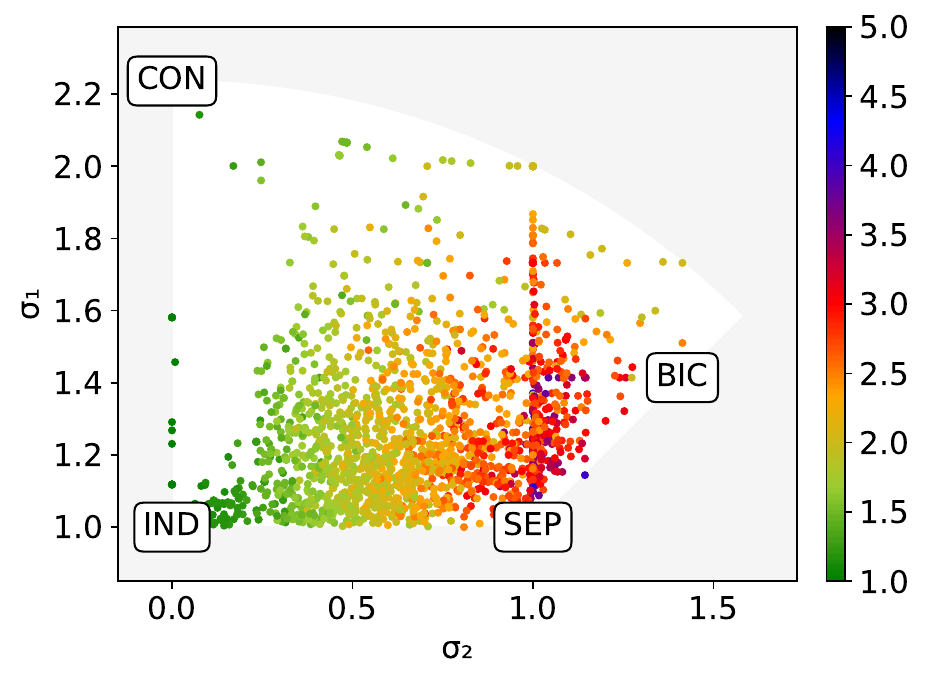}
		\caption{$5\times 5$}
	\end{subfigure}\\
	\begin{subfigure}[t]{\firsttwo\linewidth}
		\quad
	\end{subfigure}\hfill%
	\begin{subfigure}[t]{\firsttwo\linewidth}
		\centering
		\includegraphics[width=\linewidth]{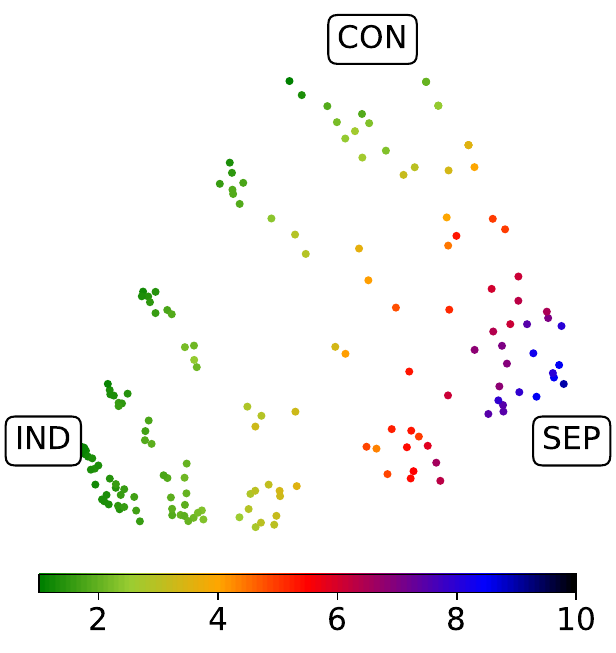}
		\caption{$10\times 20$, demand distance}
	\end{subfigure}\hfill%
	\begin{subfigure}[t]{.32\linewidth}
		\centering
		\includegraphics[width=\linewidth]{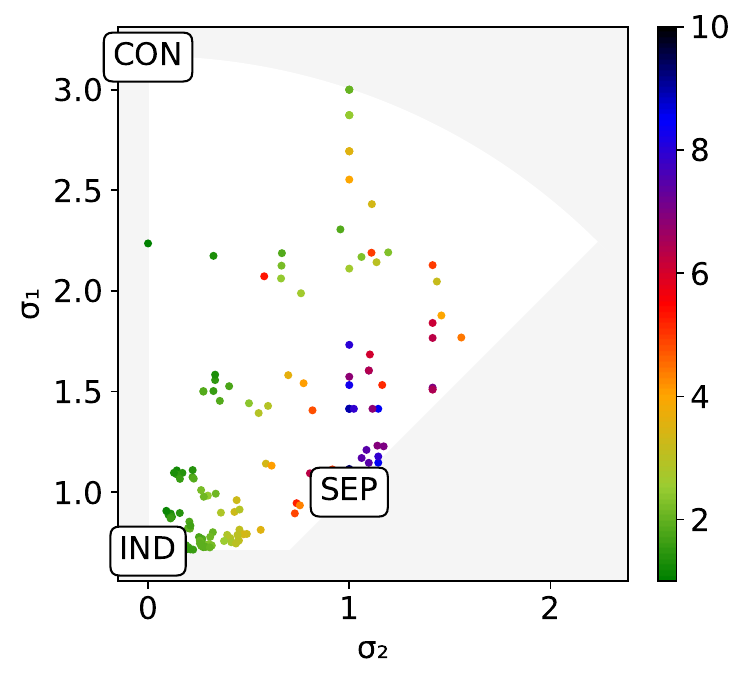}
		\caption{$10\times 20$}
	\end{subfigure}\\
	\caption{Distribution of the maximum utilitarian welfare on our
	distance-embedding map using the valuation distance (first column), using the
  demand distance (second column), and on our explicit map (third column).
  Computing the valuation distance for instances of the~$10 \times 20$~dataset was
  computationally too demanding, hence the blank space in the third row.}
  \label{fig:all:welf} 
	\let\firsttwo\undefined
\end{figure*}

\begin{figure*}\centering
	\def\firsttwo{.29}
	\begin{subfigure}[t]{\firsttwo\linewidth}
		\centering
		\includegraphics[width=\linewidth]{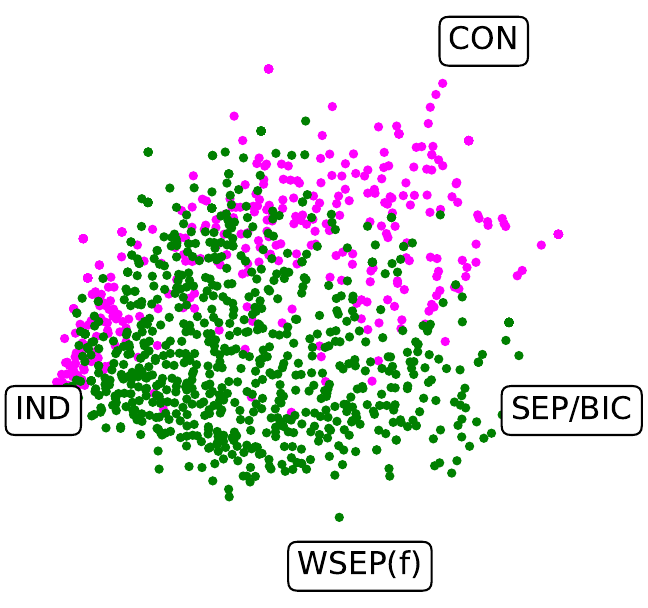}
		\caption{$3\times 6$, valuation distance}
	\end{subfigure}\hfill%
	\begin{subfigure}[t]{\firsttwo\linewidth}
		\centering
		\includegraphics[width=\linewidth]{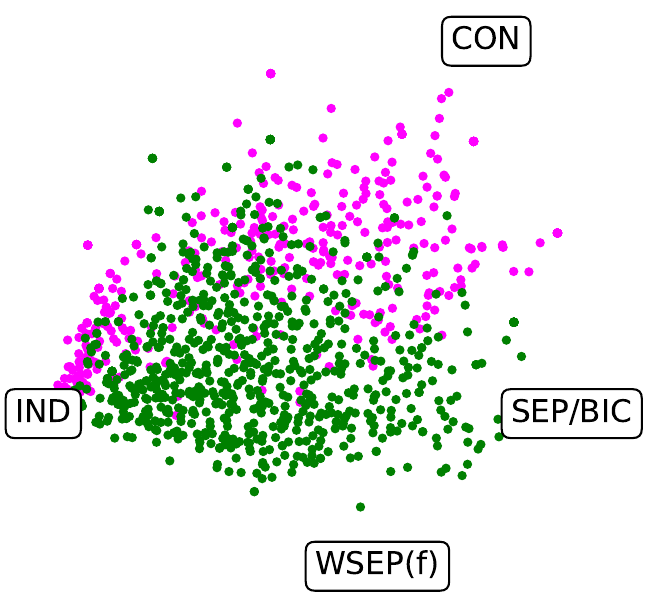}
		\caption{$3\times 6$, demand distance}
	\end{subfigure}\hfill%
	\begin{subfigure}[t]{\fpeval{(0.96 - \firsttwo - \firsttwo)*\linewidth}pt}
		\centering
		\includegraphics[width=\linewidth]{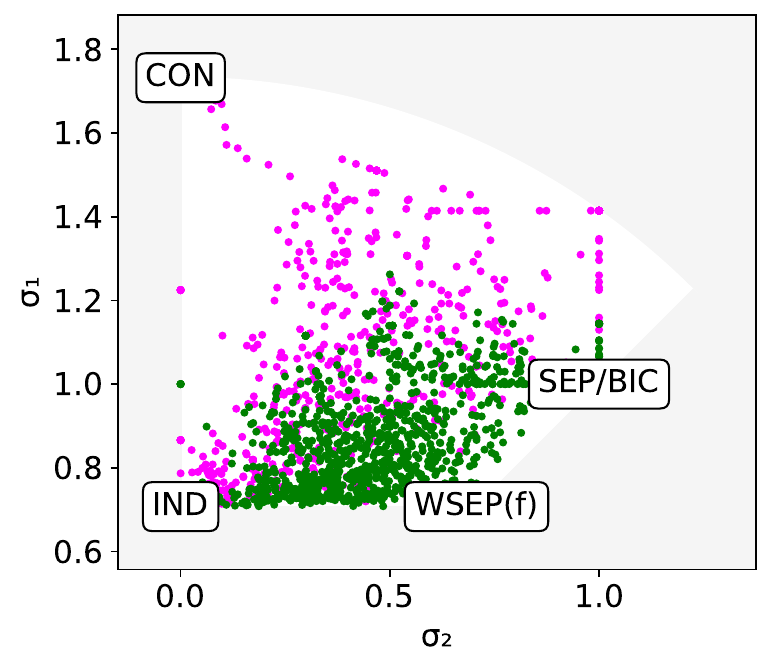}
		\caption{$3\times 6$}
	\end{subfigure}\\
	\begin{subfigure}[t]{\firsttwo\linewidth}
		\centering
		\includegraphics[width=\linewidth]{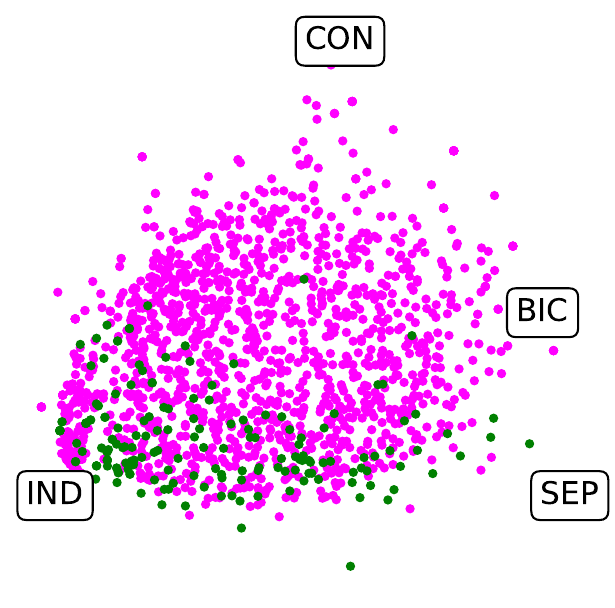}
		\caption{$5\times 5$, valuation distance}
	\end{subfigure}\hfill%
	\begin{subfigure}[t]{\firsttwo\linewidth}
		\centering
		\includegraphics[width=\linewidth]{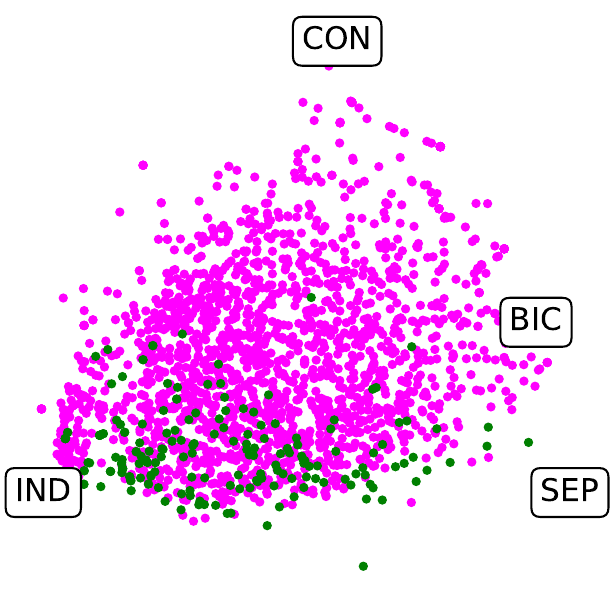}
		\caption{$5\times 5$, demand distance}
	\end{subfigure}\hfill%
	\begin{subfigure}[t]{\fpeval{(0.96 - \firsttwo - \firsttwo)*\linewidth}pt}
		\centering
		\includegraphics[width=\linewidth]{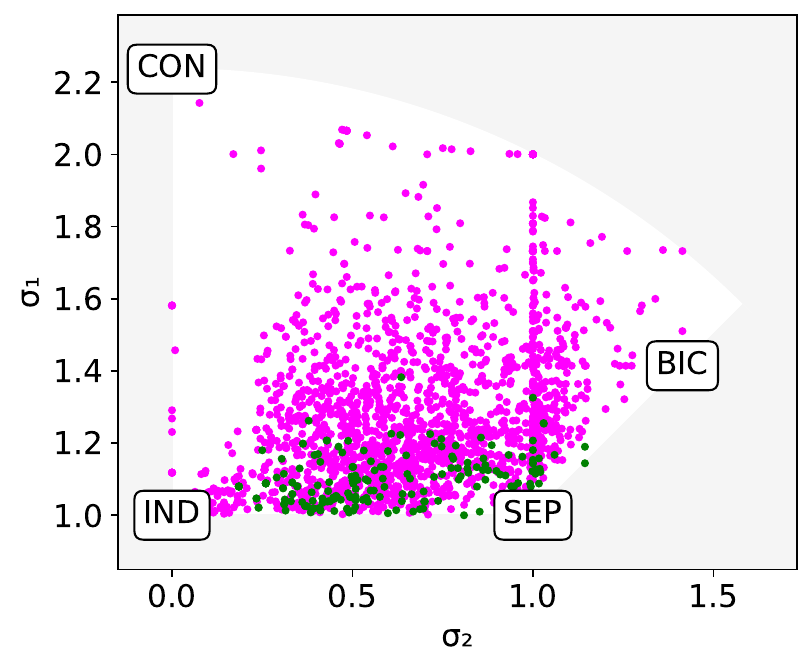}
		\caption{$5\times 5$}
	\end{subfigure}\\
	\begin{subfigure}[t]{\firsttwo\linewidth}
		\quad
	\end{subfigure}\hfill%
	\begin{subfigure}[t]{\firsttwo\linewidth}
		\centering
		\caption{$10\times 20$, demand distance}
	\end{subfigure}\hfill%
	\begin{subfigure}[t]{\fpeval{(0.96 - \firsttwo - \firsttwo)*\linewidth}pt}
		\centering
		\includegraphics[width=\linewidth]{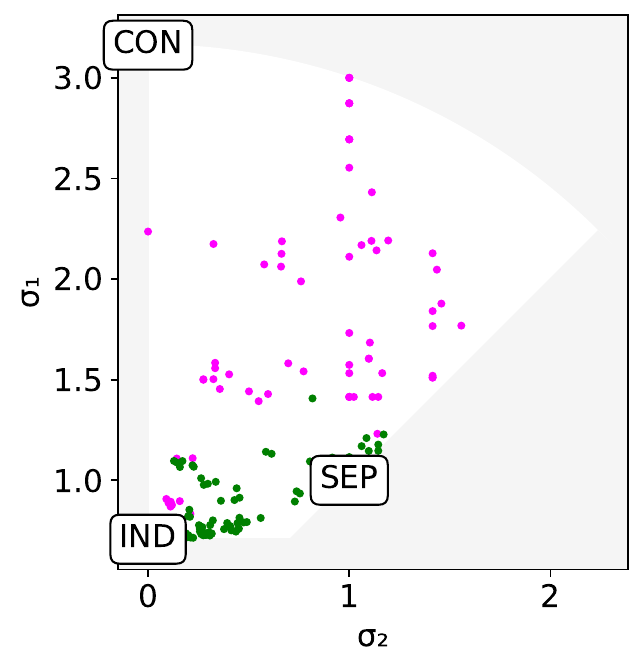}
		\caption{$10\times 20$}
	\end{subfigure}\\
	\caption{Distribution of the existence of an envy-free allocation on our
	distance-embedding map using the valuation distance (first column), using the
demand distance (second column), and on our explicit map (third column). Green
points indicate that an envy-free allocation exists. Computing the valuation
distance for instances of the~$10 \times 20$~dataset was computationally too
demanding, hence the blank space in the third row.}
  \label{fig:all:exef} 
	\let\firsttwo\undefined
\end{figure*}

\begin{figure*}\centering
	\def\firsttwo{.29}
	\begin{subfigure}[t]{\firsttwo\linewidth}
		\centering
		\includegraphics[width=\linewidth]{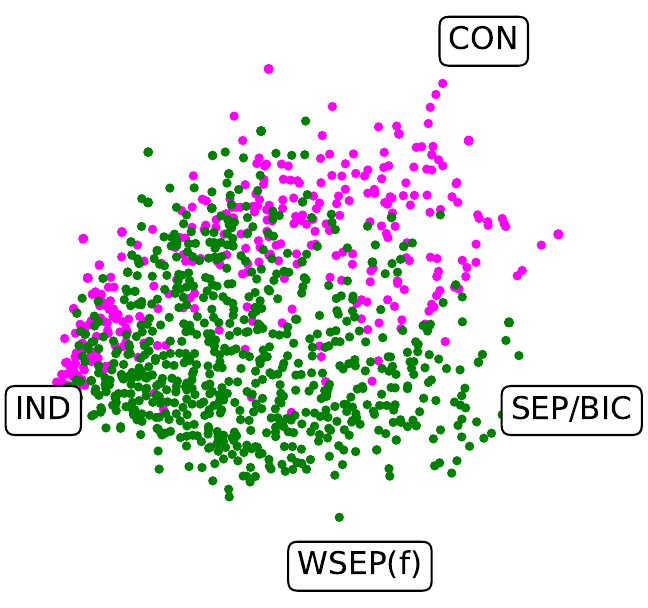}
		\caption{$3\times 6$, valuation distance}
	\end{subfigure}\hfill%
	\begin{subfigure}[t]{\firsttwo\linewidth}
		\centering
		\includegraphics[width=\linewidth]{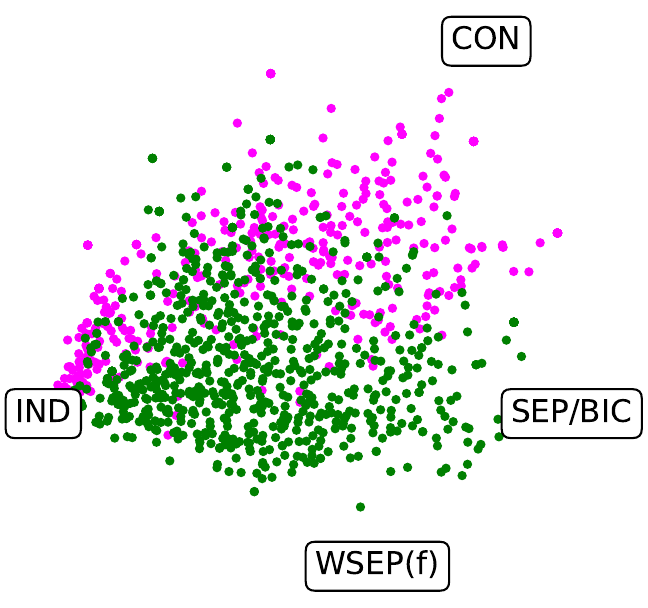}
		\caption{$3\times 6$, demand distance}
	\end{subfigure}\hfill%
	\begin{subfigure}[t]{\fpeval{(0.96 - \firsttwo - \firsttwo)*\linewidth}pt}
		\centering
		\includegraphics[width=\linewidth]{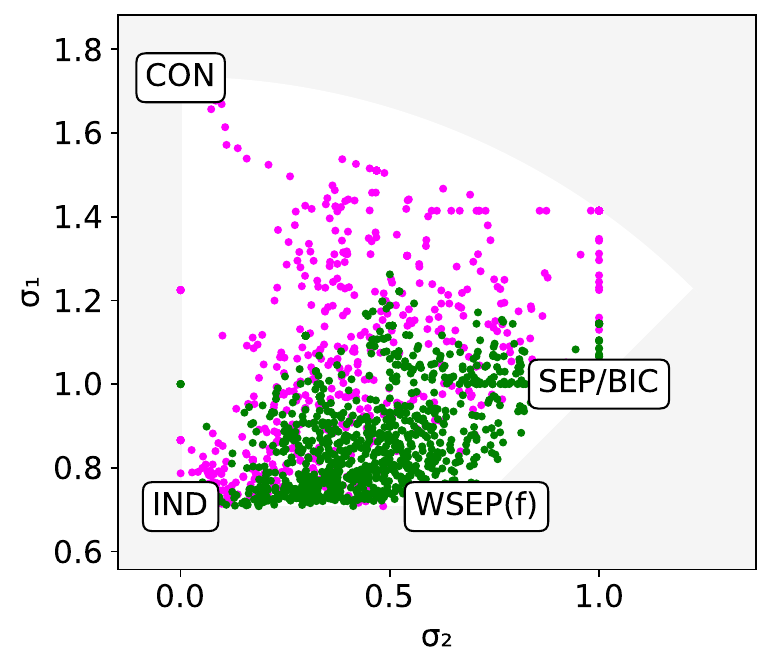}
		\caption{$3\times 6$}
	\end{subfigure}\\
	\begin{subfigure}[t]{\firsttwo\linewidth}
		\centering
		\includegraphics[width=\linewidth]{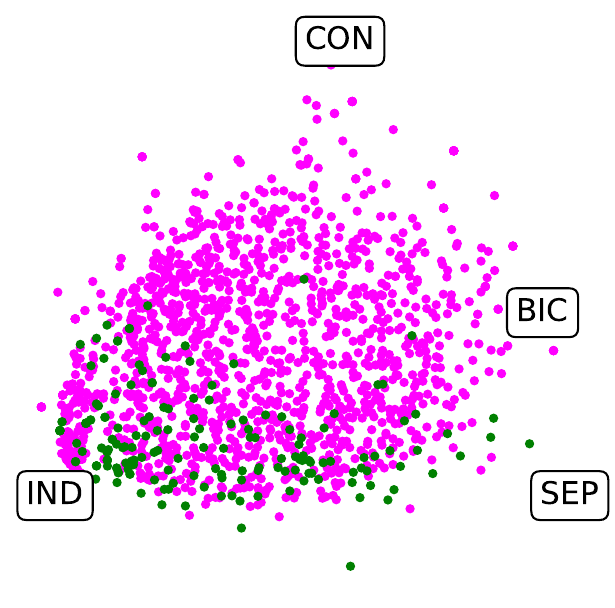}
		\caption{$5\times 5$, valuation distance}
	\end{subfigure}\hfill%
	\begin{subfigure}[t]{\firsttwo\linewidth}
		\centering
		\includegraphics[width=\linewidth]{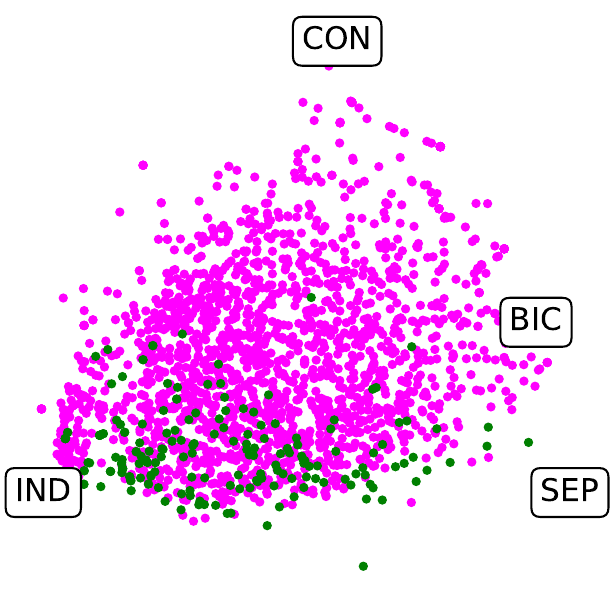}
		\caption{$5\times 5$, demand distance}
	\end{subfigure}\hfill%
	\begin{subfigure}[t]{\fpeval{(0.96 - \firsttwo - \firsttwo)*\linewidth}pt}
		\centering
		\includegraphics[width=\linewidth]{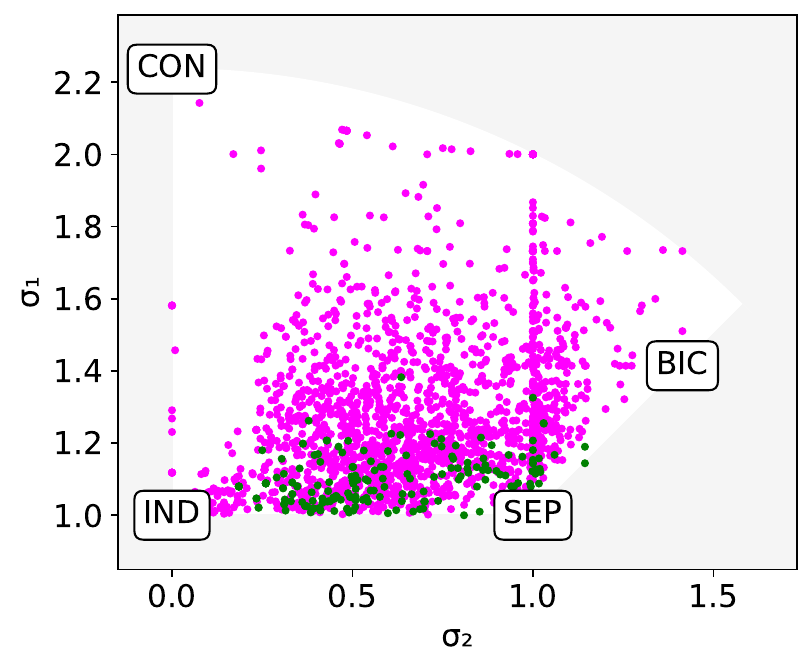}
		\caption{$5\times 5$}
	\end{subfigure}\\
	\begin{subfigure}[t]{\firsttwo\linewidth}
		\quad
	\end{subfigure}\hfill%
	\begin{subfigure}[t]{\firsttwo\linewidth}
		\centering
		\includegraphics[width=\linewidth]{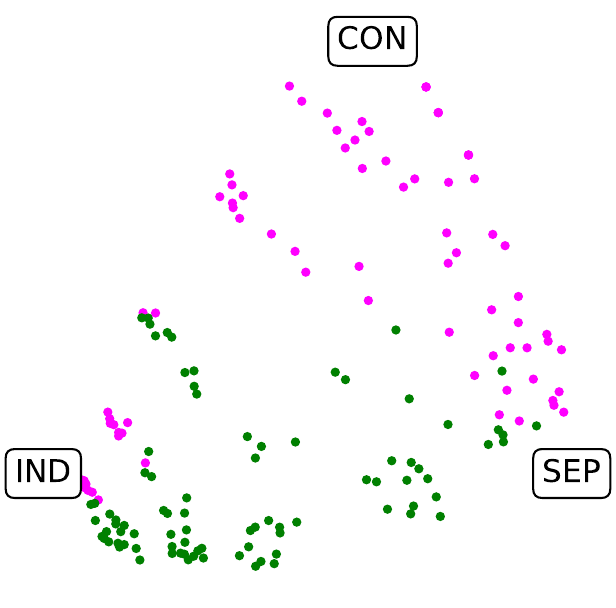}
		\caption{$10\times 20$, demand distance}
	\end{subfigure}\hfill%
	\begin{subfigure}[t]{\fpeval{(0.96 - \firsttwo - \firsttwo)*\linewidth}pt}
		\centering
		\includegraphics[width=\linewidth]{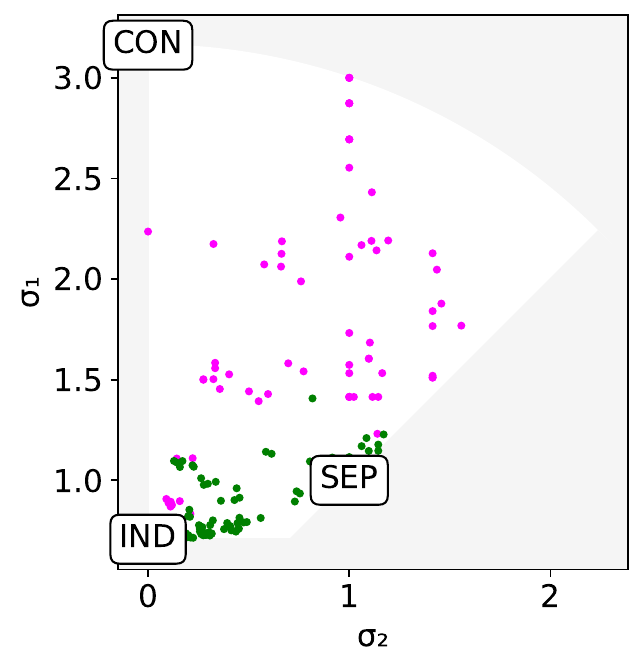}
		\caption{$10\times 20$}
	\end{subfigure}\\
	\caption{Distribution of the existence of an envy-free and Pareto-efficient
	allocation on our distance-embedding map using the valuation distance (first
column), using the demand distance (second column), and on our explicit map
(third column). Green points indicate that an envy-free and Pareto-efficient
allocation exists. Computing the valuation distance for instances of the~$10
\times 20$~dataset was computationally too demanding, hence the blank space in
the third row.}
  \label{fig:all:exefp} 
	\let\firsttwo\undefined
\end{figure*}

\begin{figure*}\centering
	\def\firsttwo{.29}
	\begin{subfigure}[t]{\firsttwo\linewidth}
		\centering
		\includegraphics[width=\linewidth]{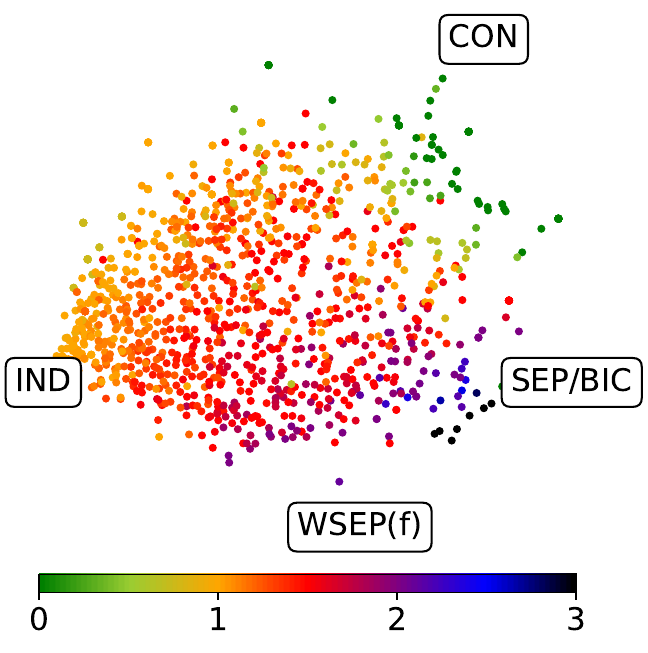}
		\caption{$3\times 6$, valuation distance}
	\end{subfigure}\hfill%
	\begin{subfigure}[t]{\firsttwo\linewidth}
		\centering
		\includegraphics[width=\linewidth]{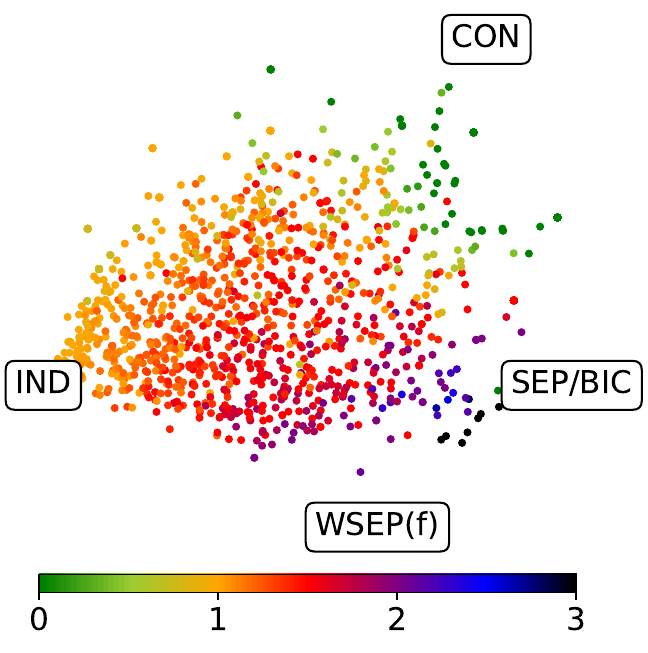}
		\caption{$3\times 6$, demand distance}
	\end{subfigure}\hfill%
	\begin{subfigure}[t]{\fpeval{(0.96 - \firsttwo - \firsttwo)*\linewidth}pt}
		\centering
		\includegraphics[width=\linewidth]{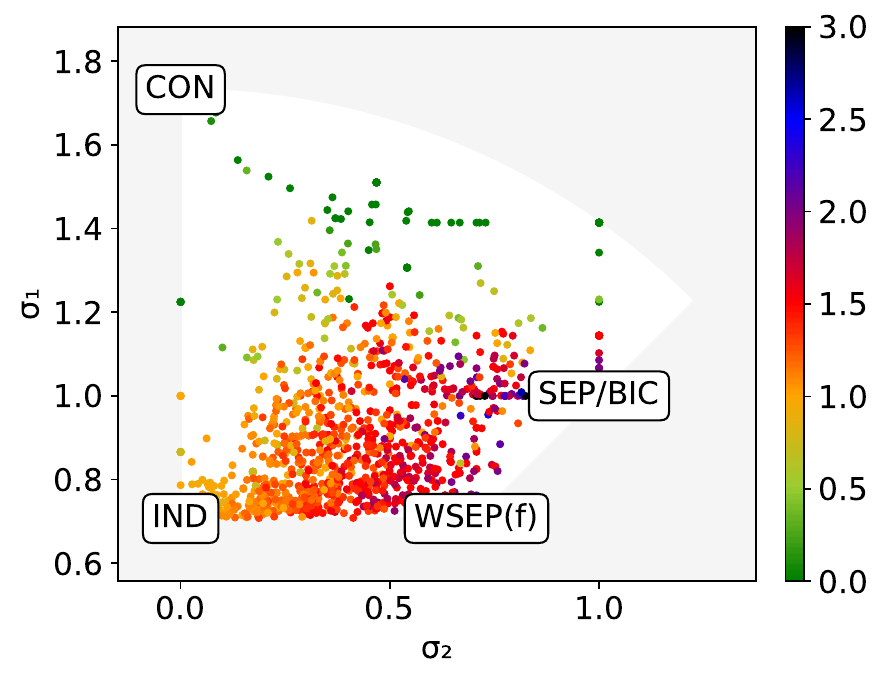}
		\caption{$3\times 6$}
	\end{subfigure}\\
	\begin{subfigure}[t]{\firsttwo\linewidth}
		\centering
		\includegraphics[width=\linewidth]{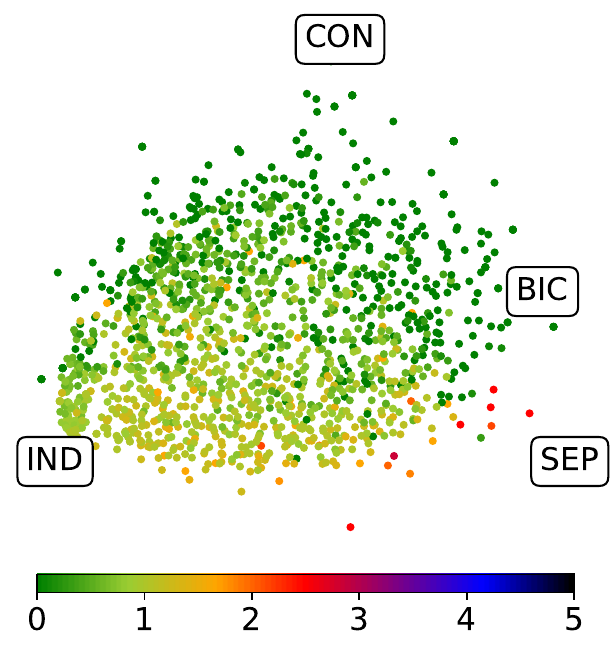}
		\caption{$5\times 5$, valuation distance}
	\end{subfigure}\hfill%
	\begin{subfigure}[t]{\firsttwo\linewidth}
		\centering
		\includegraphics[width=\linewidth]{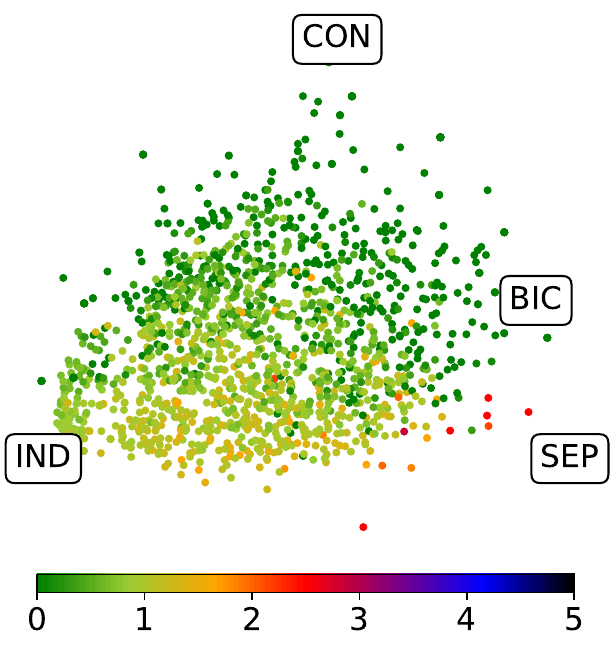}
		\caption{$5\times 5$, demand distance}
	\end{subfigure}\hfill%
	\begin{subfigure}[t]{\fpeval{(0.96 - \firsttwo - \firsttwo)*\linewidth}pt}
		\centering
		\includegraphics[width=\linewidth]{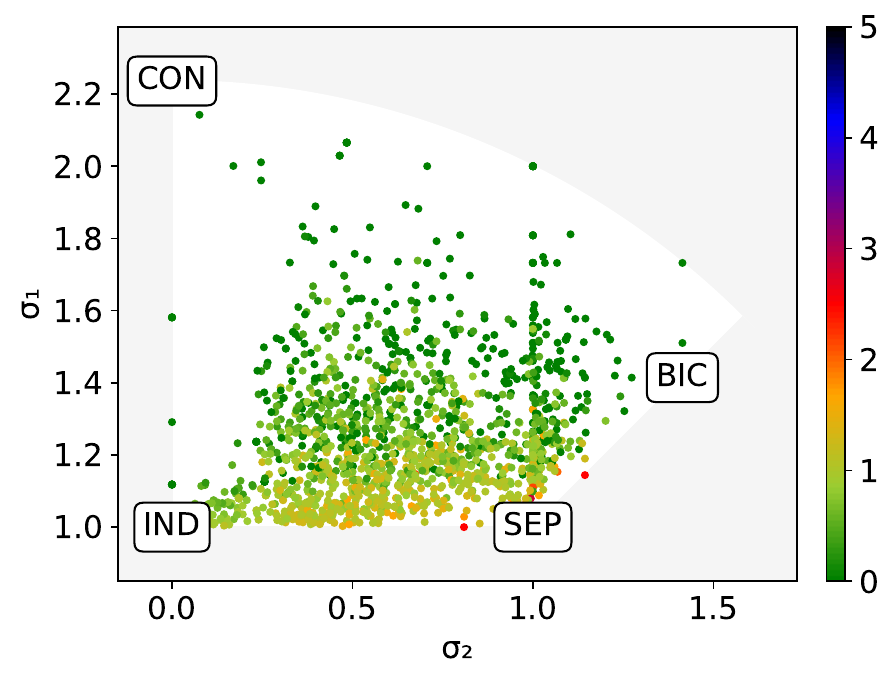}
		\caption{$5\times 5$}
	\end{subfigure}\\
	\begin{subfigure}[t]{\firsttwo\linewidth}
		\quad
	\end{subfigure}\hfill%
	\begin{subfigure}[t]{\firsttwo\linewidth}
		\centering
		\includegraphics[width=\linewidth]{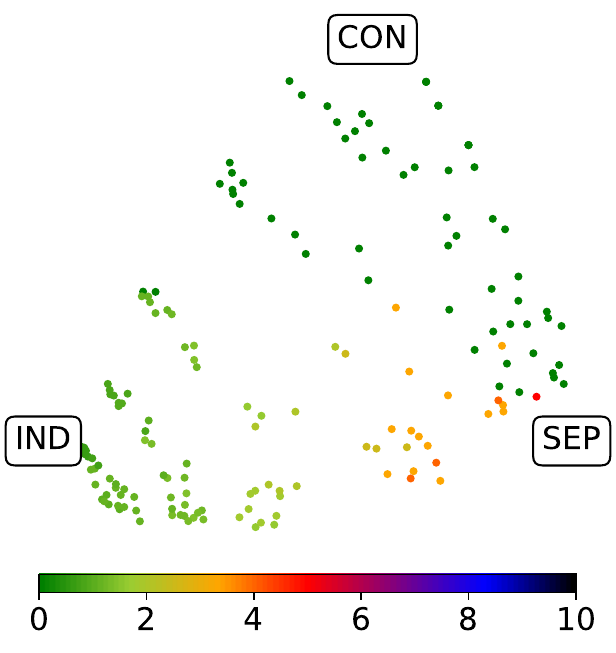}
		\caption{$10\times 20$, demand distance}
	\end{subfigure}\hfill%
	\begin{subfigure}[t]{\fpeval{(0.96 - \firsttwo - \firsttwo)*\linewidth}pt}
		\centering
		\includegraphics[width=\linewidth]{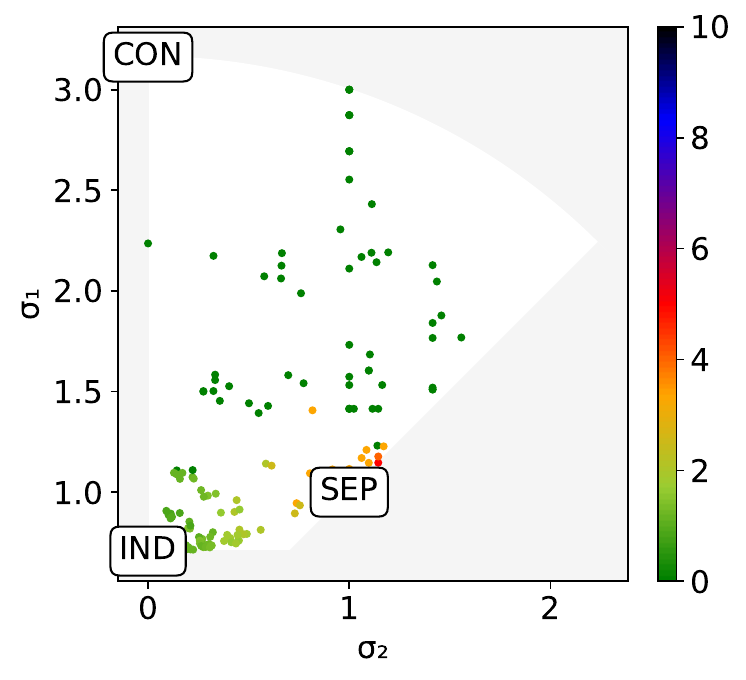}
		\caption{$10\times 20$}
	\end{subfigure}\\
  \caption{Distribution of fraction of the proportional share that
	can be guaranteed at each instance on our distance-embedding map using the
valuation distance (first column), using the demand distance (second column),
and on our explicit map (third column). Computing the valuation distance for
instances of the~$10 \times 20$~dataset was computationally too demanding, hence
the blank space in the third row.}
  \label{fig:all:propshare} 
	\let\firsttwo\undefined
\end{figure*}

\begin{figure*}\centering
	\def\firsttwo{.29}
	\begin{subfigure}[t]{\firsttwo\linewidth}
		\centering
		\includegraphics[width=\linewidth]{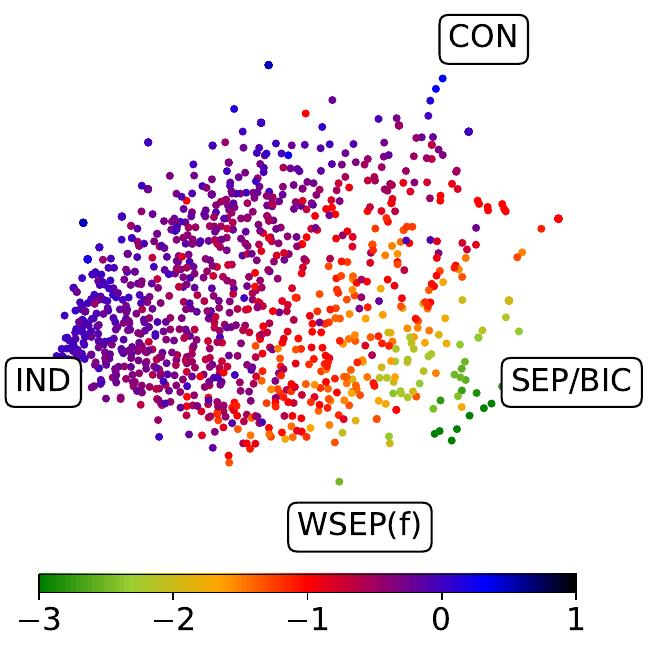}
		\caption{$3\times 6$, valuation distance}
	\end{subfigure}\hfill%
	\begin{subfigure}[t]{\firsttwo\linewidth}
		\centering
		\includegraphics[width=\linewidth]{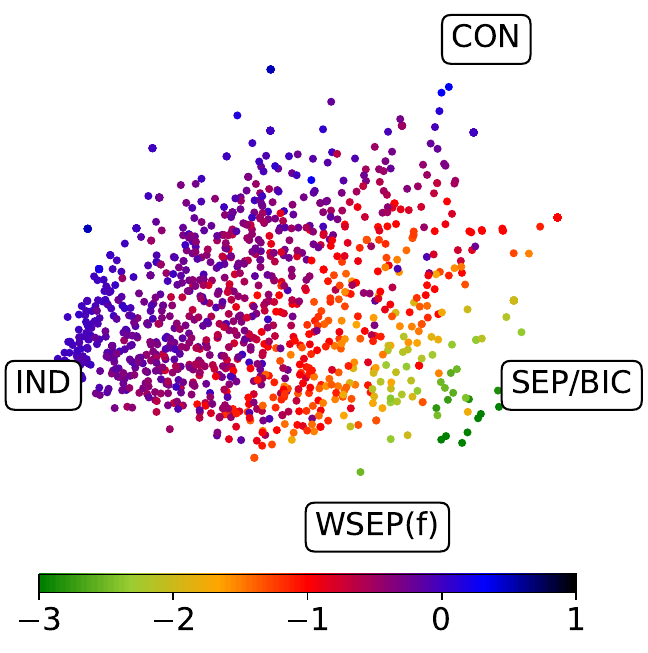}
		\caption{$3\times 6$, demand distance}
	\end{subfigure}\hfill%
	\begin{subfigure}[t]{\fpeval{(0.96 - \firsttwo - \firsttwo)*\linewidth}pt}
		\centering
		\includegraphics[width=\linewidth]{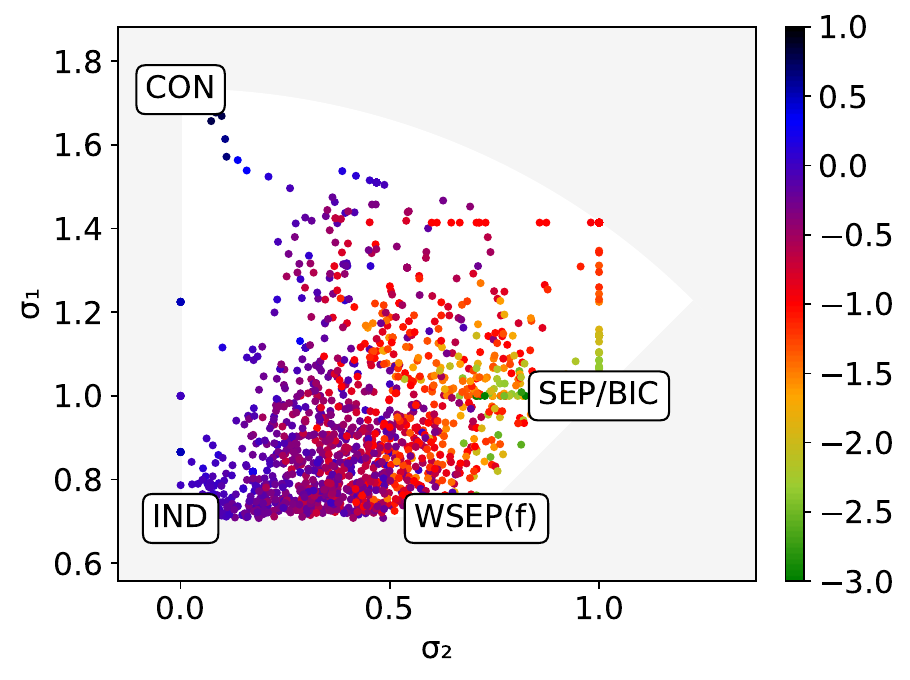}
		\caption{$3\times 6$}
	\end{subfigure}\\
	\begin{subfigure}[t]{\firsttwo\linewidth}
		\centering
		\includegraphics[width=\linewidth]{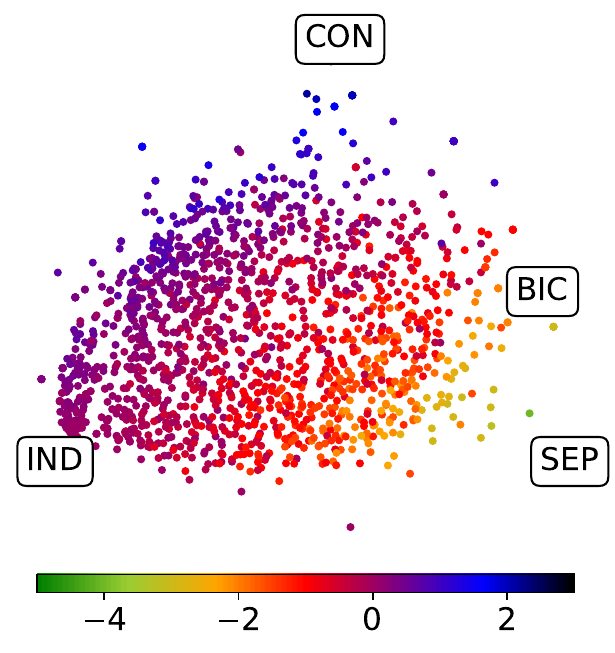}
		\caption{$5\times 5$, valuation distance}
	\end{subfigure}\hfill%
	\begin{subfigure}[t]{\firsttwo\linewidth}
		\centering
		\includegraphics[width=\linewidth]{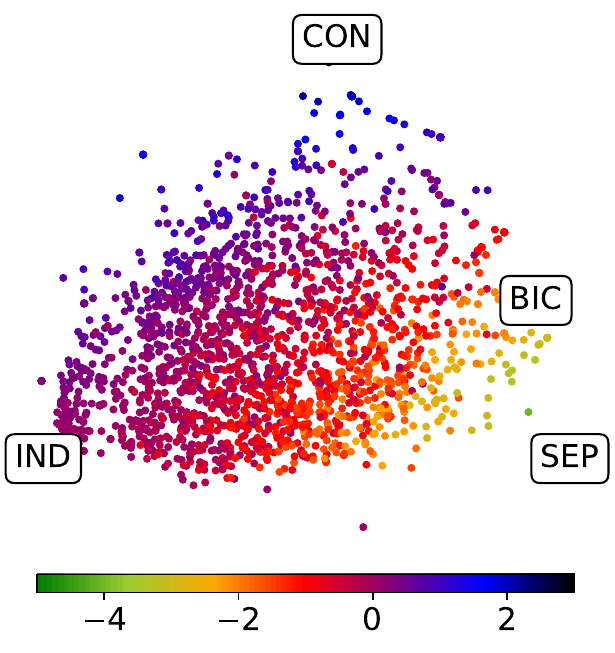}
		\caption{$5\times 5$, demand distance}
	\end{subfigure}\hfill%
	\begin{subfigure}[t]{\fpeval{(0.96 - \firsttwo - \firsttwo)*\linewidth}pt}
		\centering
		\includegraphics[width=\linewidth]{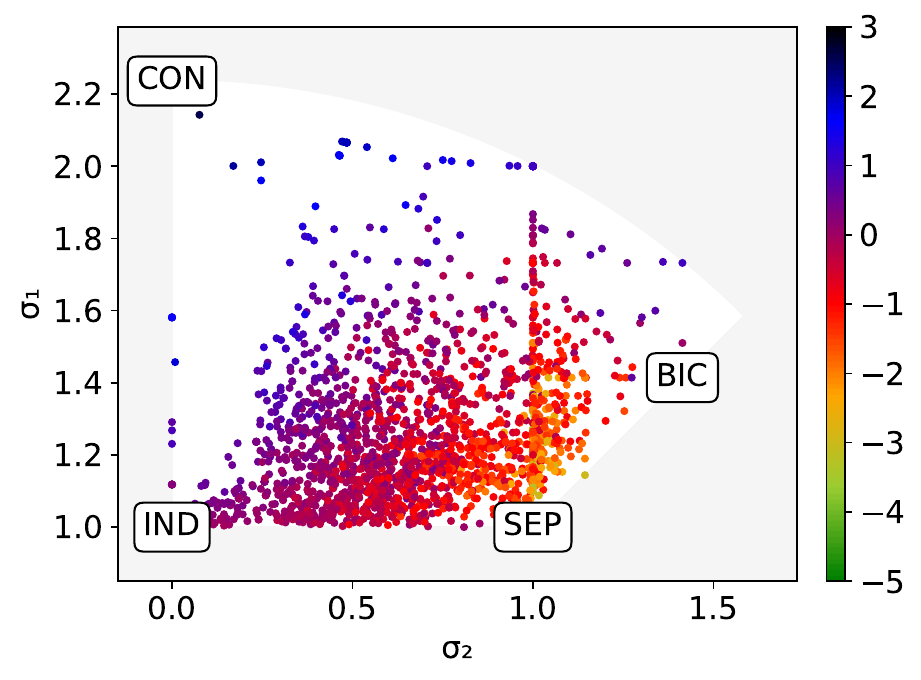}
		\caption{$5\times 5$}
	\end{subfigure}\\
	\begin{subfigure}[t]{\firsttwo\linewidth}
		\quad
	\end{subfigure}\hfill%
	\begin{subfigure}[t]{\firsttwo\linewidth}
		\centering
		\includegraphics[width=\linewidth]{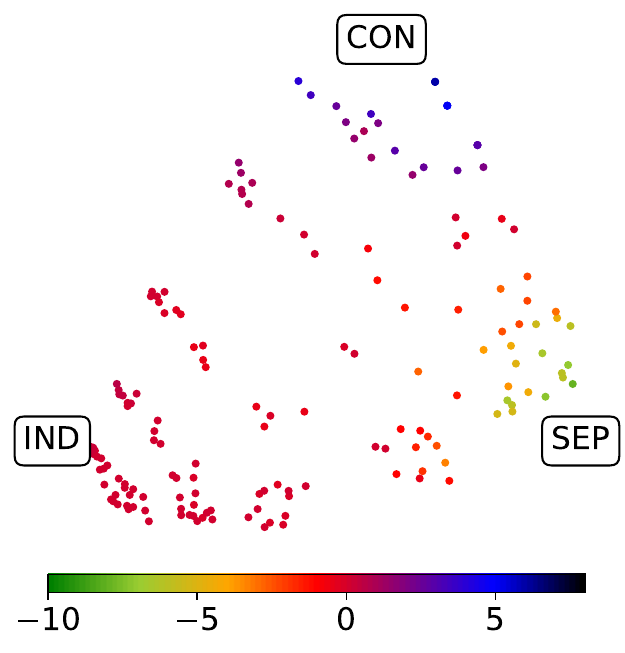}
		\caption{$10\times 20$, demand distance}
	\end{subfigure}\hfill%
	\begin{subfigure}[t]{\fpeval{(0.96 - \firsttwo - \firsttwo)*\linewidth}pt}
		\centering
		\includegraphics[width=\linewidth]{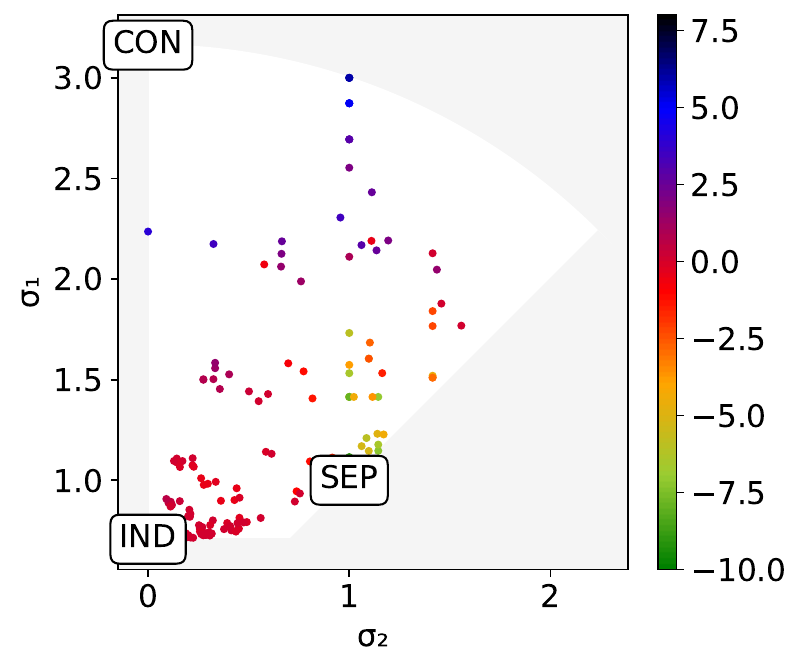}
		\caption{$10\times 20$}
	\end{subfigure}\\
	\caption{Distribution of the sum of the maximal envies on our
	distance-embedding map using the valuation distance (first column), using the
demand distance (second column), and on our explicit map (third column).
Computing the valuation distance for instances of the~$10 \times 20$~dataset was
computationally too demanding, hence the blank space in the third row.}
  \label{fig:all:sumabs} 
	\let\firsttwo\undefined
\end{figure*}

\begin{figure*}\centering
	\def\firsttwo{.29}
	\begin{subfigure}[t]{\firsttwo\linewidth}
		\centering
		\includegraphics[width=\linewidth]{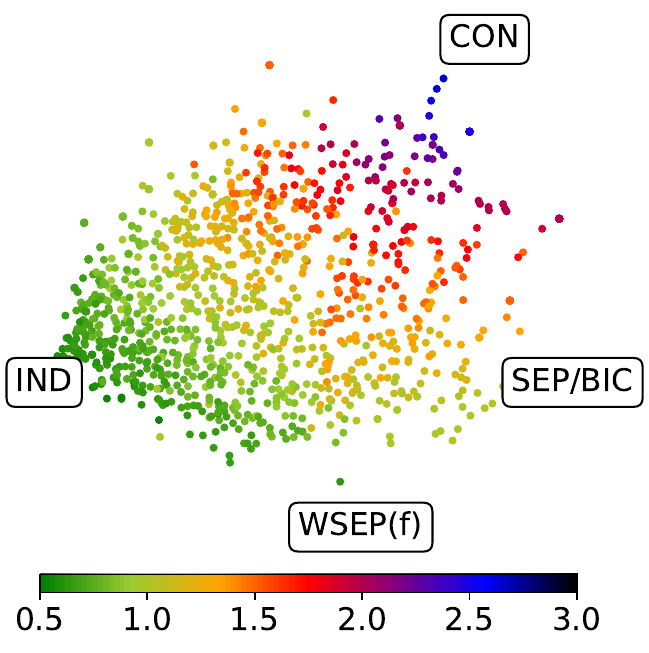}
		\caption{$3\times 6$, valuation distance}
	\end{subfigure}\hfill%
	\begin{subfigure}[t]{\firsttwo\linewidth}
		\centering
		\includegraphics[width=\linewidth]{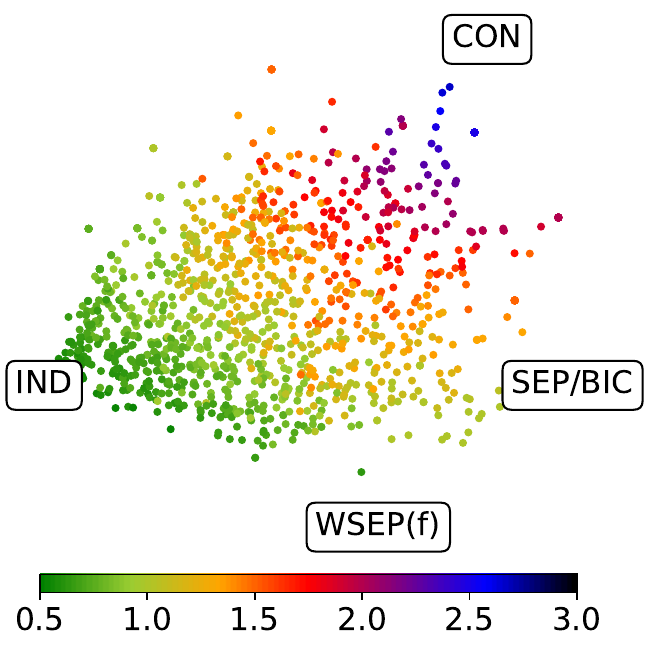}
		\caption{$3\times 6$, demand distance}
	\end{subfigure}\hfill%
	\begin{subfigure}[t]{\fpeval{(0.96 - \firsttwo - \firsttwo)*\linewidth}pt}
		\centering
		\includegraphics[width=\linewidth]{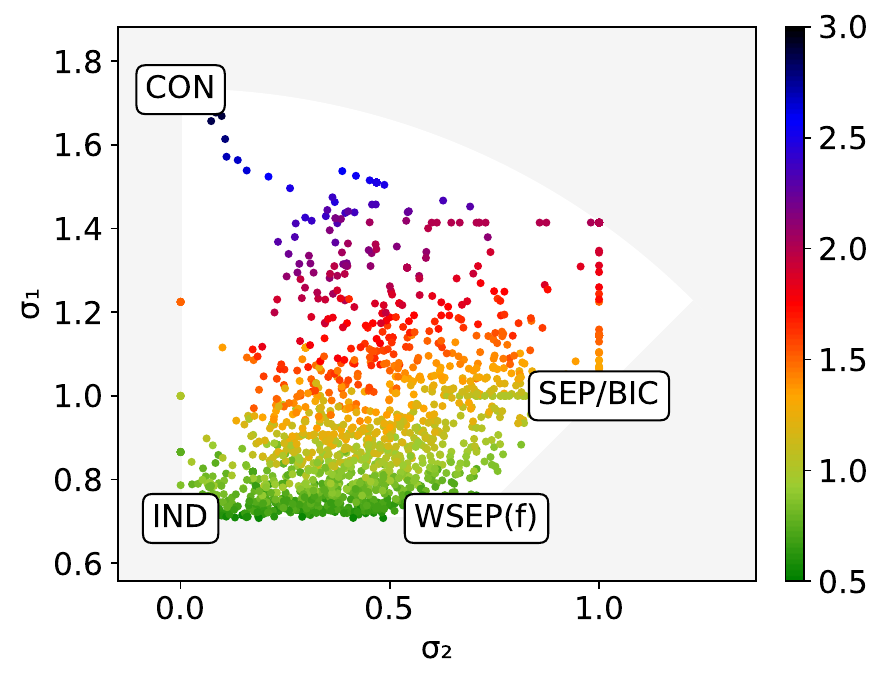}
		\caption{$3\times 6$}
	\end{subfigure}\\
	\begin{subfigure}[t]{\firsttwo\linewidth}
		\centering
		\includegraphics[width=\linewidth]{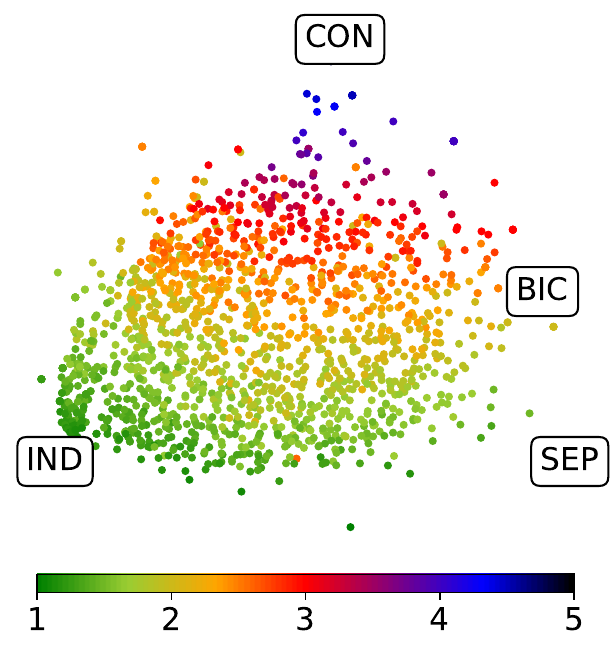}
		\caption{$5\times 5$, valuation distance}
	\end{subfigure}\hfill%
	\begin{subfigure}[t]{\firsttwo\linewidth}
		\centering
		\includegraphics[width=\linewidth]{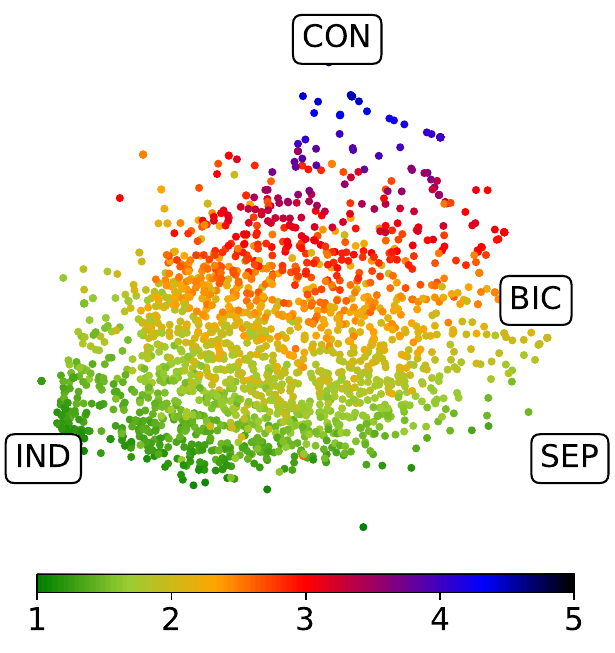}
		\caption{$5\times 5$, demand distance}
	\end{subfigure}\hfill%
	\begin{subfigure}[t]{\fpeval{(0.96 - \firsttwo - \firsttwo)*\linewidth}pt}
		\centering
		\includegraphics[width=\linewidth]{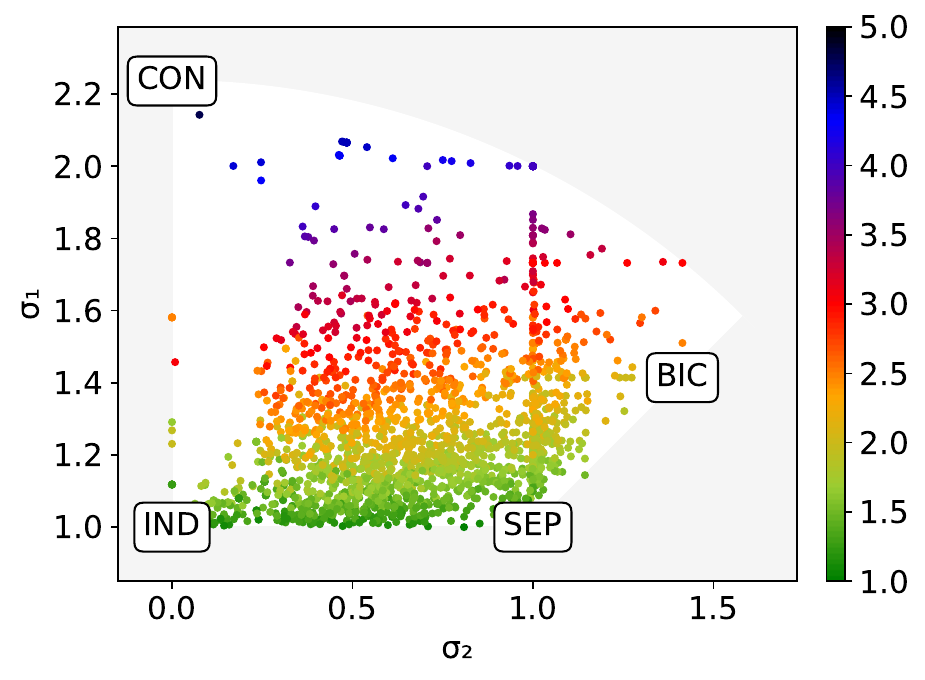}
		\caption{$5\times 5$}
	\end{subfigure}\\
	\begin{subfigure}[t]{\firsttwo\linewidth}
		\quad
	\end{subfigure}\hfill%
	\begin{subfigure}[t]{\firsttwo\linewidth}
		\centering
		\includegraphics[width=\linewidth]{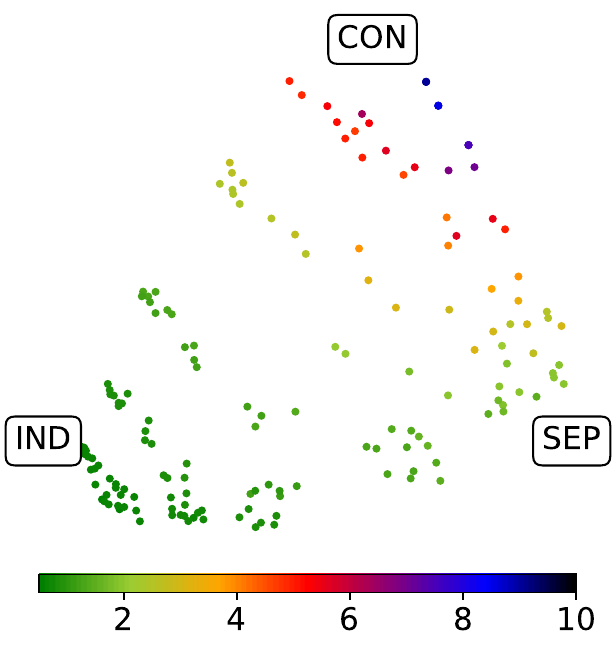}
		\caption{$10\times 20$, demand distance}
	\end{subfigure}\hfill%
	\begin{subfigure}[t]{\fpeval{(0.96 - \firsttwo - \firsttwo)*\linewidth}pt}
		\centering
		\includegraphics[width=\linewidth]{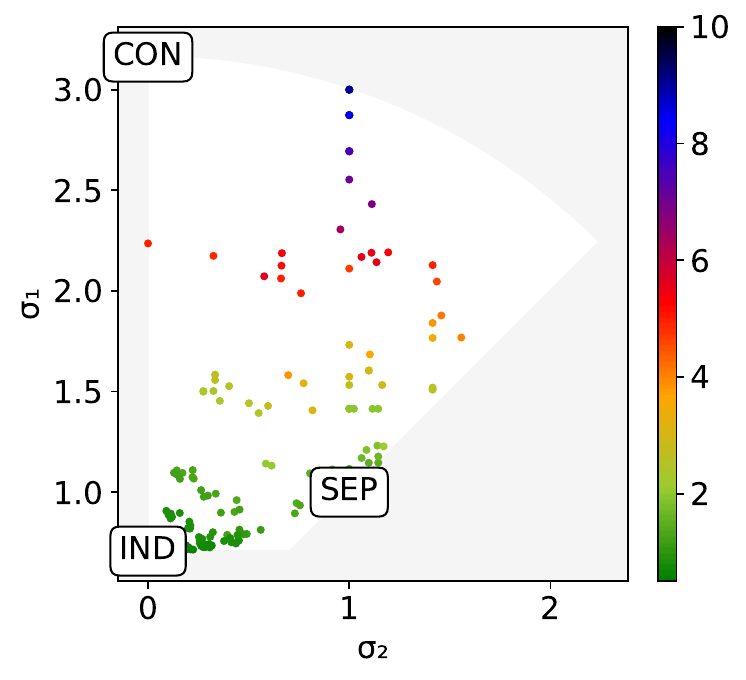}
		\caption{$10\times 20$}
	\end{subfigure}\\
	\caption{Distribution of the maximum demand on our distance-embedding map
	using the valuation distance (first column), using the demand distance (second
column), and on our explicit map (third column). Computing the valuation
distance for instances of the~$10 \times 20$~dataset was computationally too
demanding, hence the blank space in the third row.}
  \label{fig:all:maxd} 
	\let\firsttwo\undefined
\end{figure*}

\begin{figure*}\centering
	\def\firsttwo{.29}
	\begin{subfigure}[t]{\firsttwo\linewidth}
		\centering
		\includegraphics[width=\linewidth]{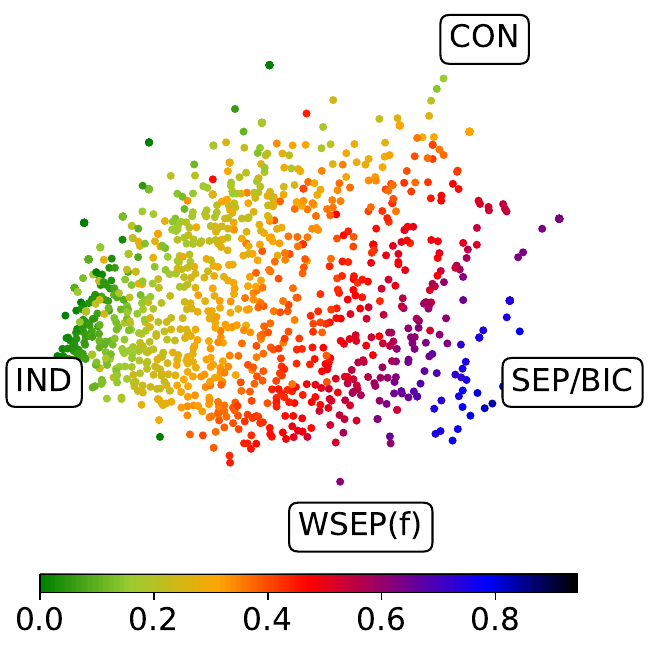}
		\caption{$3\times 6$, valuation distance}
	\end{subfigure}\hfill%
	\begin{subfigure}[t]{\firsttwo\linewidth}
		\centering
		\includegraphics[width=\linewidth]{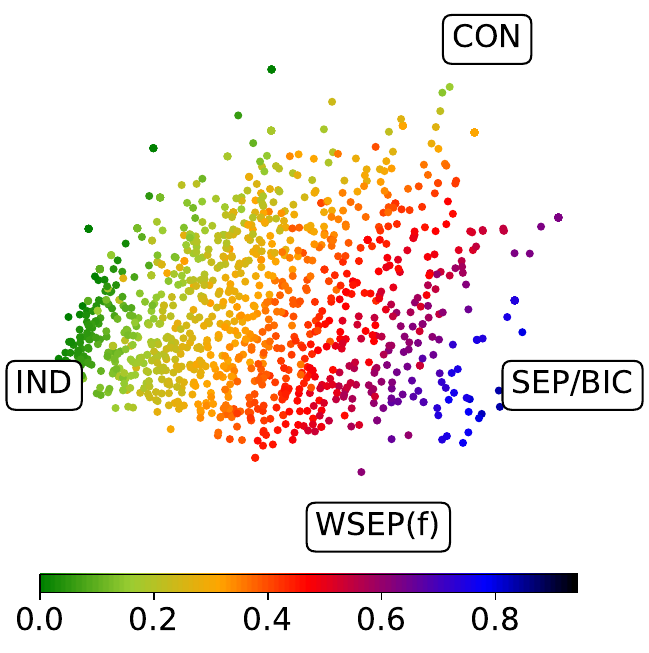}
		\caption{$3\times 6$, demand distance}
	\end{subfigure}\hfill%
	\begin{subfigure}[t]{\fpeval{(0.96 - \firsttwo - \firsttwo)*\linewidth}pt}
		\centering
		\includegraphics[width=\linewidth]{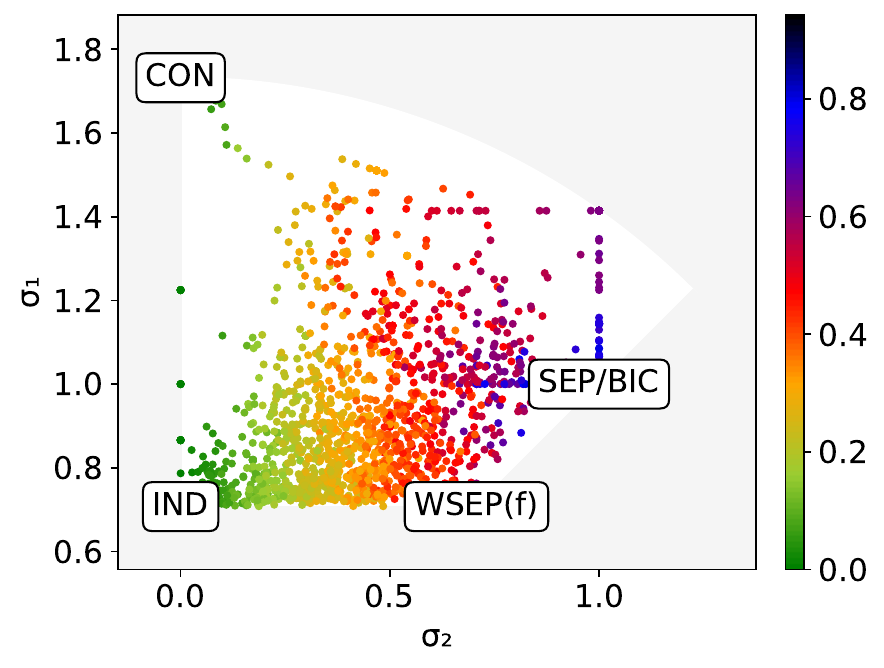}
		\caption{$3\times 6$}
	\end{subfigure}\\
	\begin{subfigure}[t]{\firsttwo\linewidth}
		\centering
		\includegraphics[width=\linewidth]{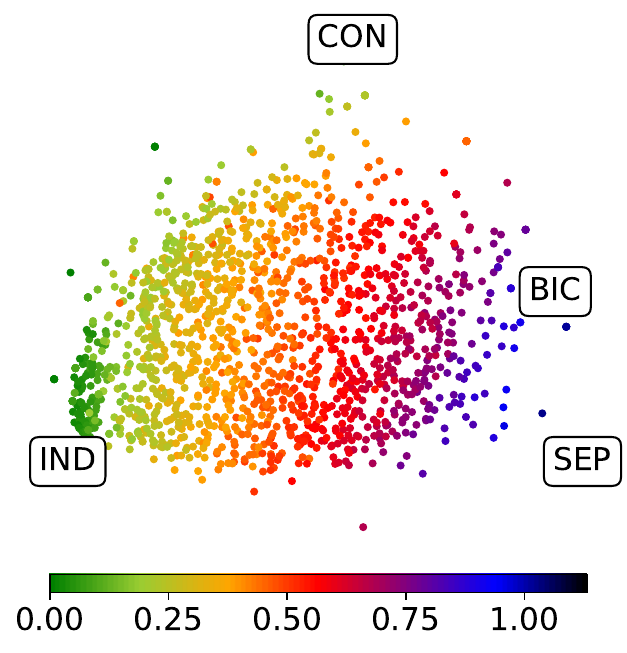}
		\caption{$5\times 5$, valuation distance}
	\end{subfigure}\hfill%
	\begin{subfigure}[t]{\firsttwo\linewidth}
		\centering
		\includegraphics[width=\linewidth]{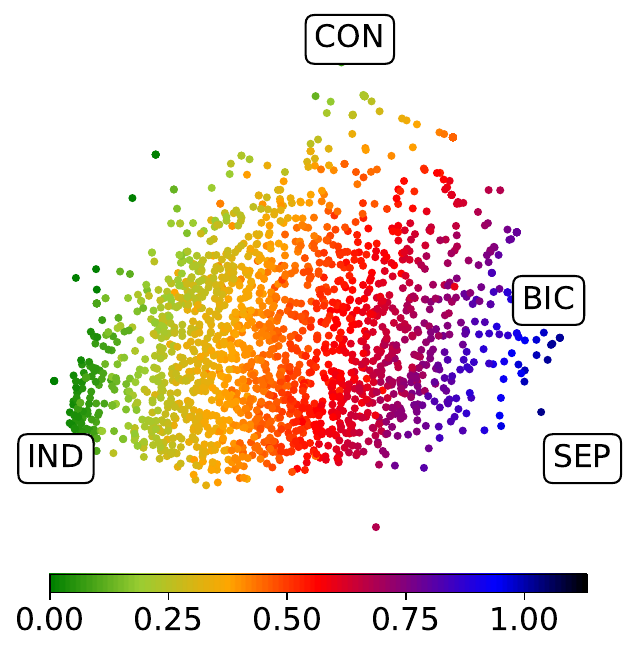}
		\caption{$5\times 5$, demand distance}
	\end{subfigure}\hfill%
	\begin{subfigure}[t]{\fpeval{(0.96 - \firsttwo - \firsttwo)*\linewidth}pt}
		\centering
		\includegraphics[width=\linewidth]{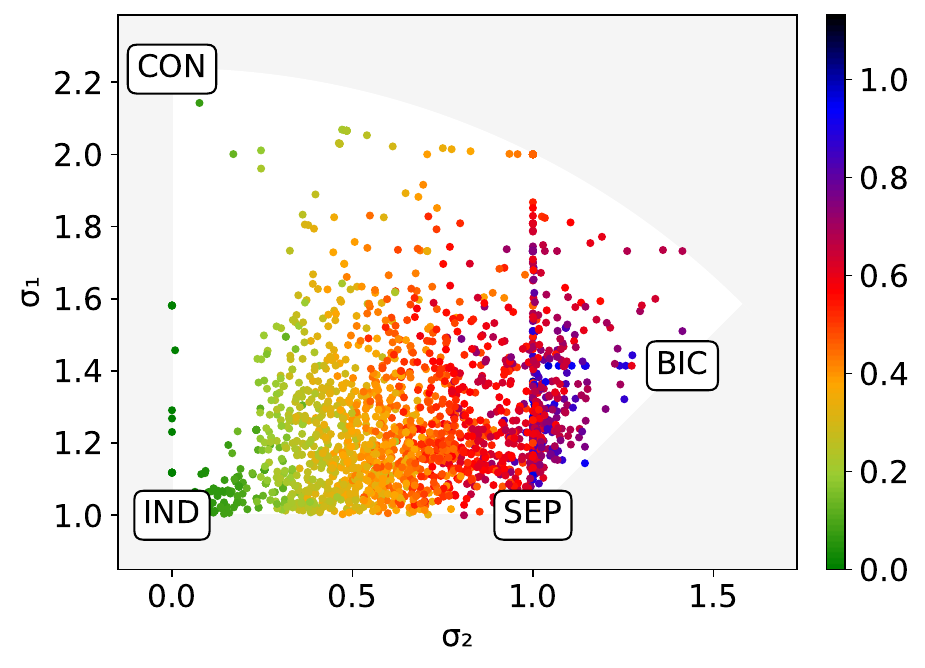}
		\caption{$5\times 5$}
	\end{subfigure}\\
	\begin{subfigure}[t]{\firsttwo\linewidth}
		\quad
	\end{subfigure}\hfill%
	\begin{subfigure}[t]{\firsttwo\linewidth}
		\centering
		\includegraphics[width=\linewidth]{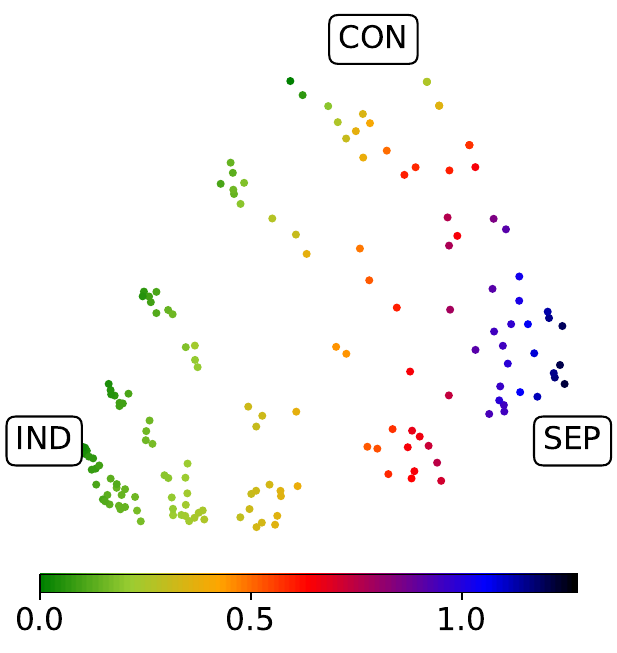}
		\caption{$10\times 20$, demand distance}
	\end{subfigure}\hfill%
	\begin{subfigure}[t]{\fpeval{(0.96 - \firsttwo - \firsttwo)*\linewidth}pt}
		\centering
		\includegraphics[width=\linewidth]{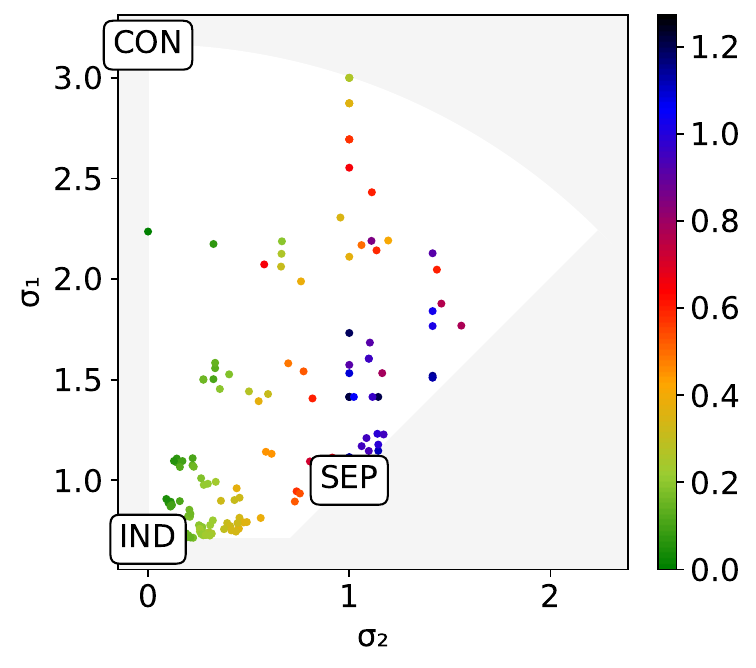}
		\caption{$10\times 20$}
	\end{subfigure}\\
	\caption{Distribution of the preference diversity on our distance-embedding
	map using the valuation distance (first column), using the demand distance
(second column), and on our explicit map (third column). Computing the valuation
distance for instances of the~$10 \times 20$~dataset was computationally too
demanding, hence the blank space in the third row.}
  \label{fig:all:prefdiv} 
	\let\firsttwo\undefined
\end{figure*}

In addition, we introduce the following measures:
To measure the \emph{diversity of demand}, we create a vector of all total demands and compute a Gini coefficient of this vector, where the demand for good $j$ is defined as $\sum_{i \in [n]} u_{i,j}$.
On the other hand, we compute a Gini coefficient of each vote and use the average
over all votes to measure the \emph{pickiness}.
The maps showing these two features can be seen in
\cref{fig:all:divdemand} and \ref{fig:all:minuspicki} (which shows
one minus pickiness, as this makes the value one for one extreme point and
zero for the other extreme points, which is also the case for diversity of
demand and preference diversity), which show that these measures also vary
smoothly over the map.

\begin{figure*}\centering
	\def\firsttwo{.29}
	\begin{subfigure}[t]{\firsttwo\linewidth}
		\centering
		\includegraphics[width=\linewidth]{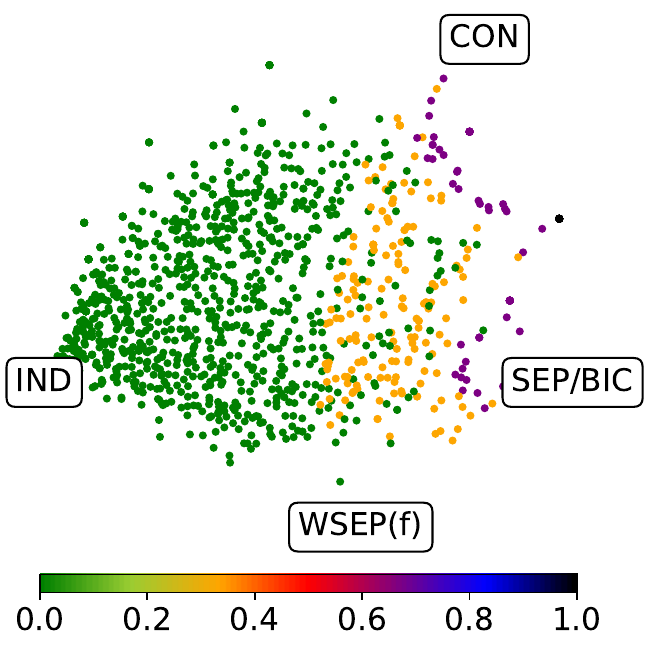}
		\caption{$3\times 6$, valuation distance}
	\end{subfigure}\hfill%
	\begin{subfigure}[t]{\firsttwo\linewidth}
		\centering
		\includegraphics[width=\linewidth]{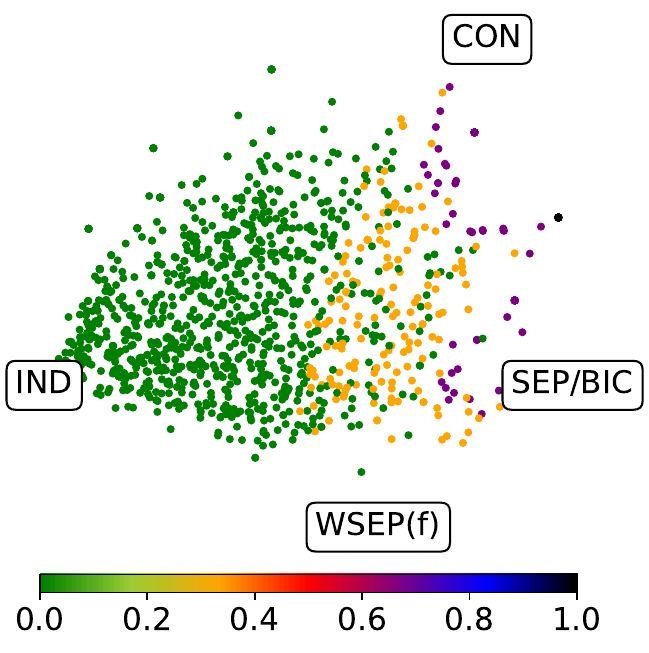}
		\caption{$3\times 6$, demand distance}
	\end{subfigure}\hfill%
	\begin{subfigure}[t]{\fpeval{(0.96 - \firsttwo - \firsttwo)*\linewidth}pt}
		\centering
		\includegraphics[width=\linewidth]{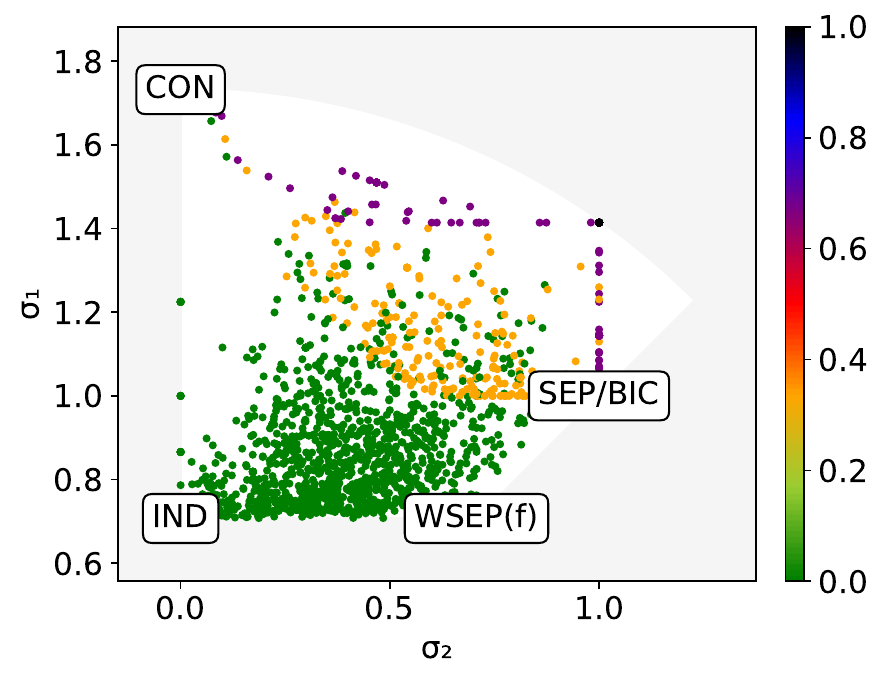}
		\caption{$3\times 6$}
	\end{subfigure}\\
	\begin{subfigure}[t]{\firsttwo\linewidth}
		\centering
		\includegraphics[width=\linewidth]{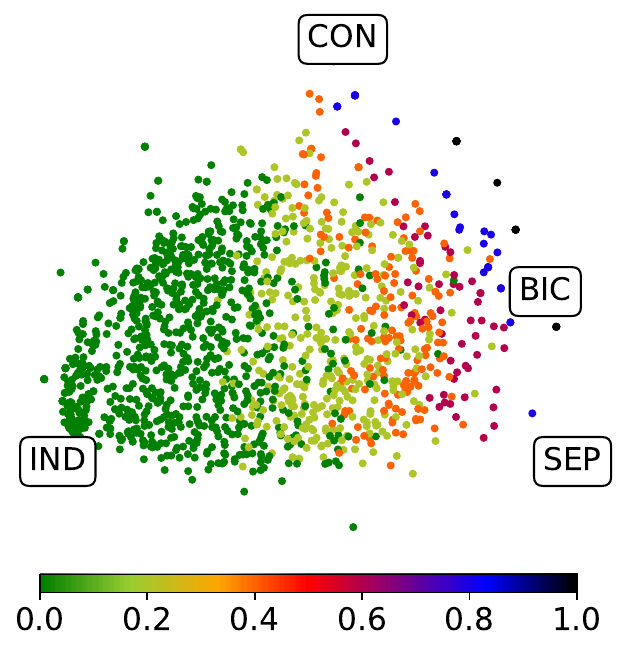}
		\caption{$5\times 5$, valuation distance}
	\end{subfigure}\hfill%
	\begin{subfigure}[t]{\firsttwo\linewidth}
		\centering
		\includegraphics[width=\linewidth]{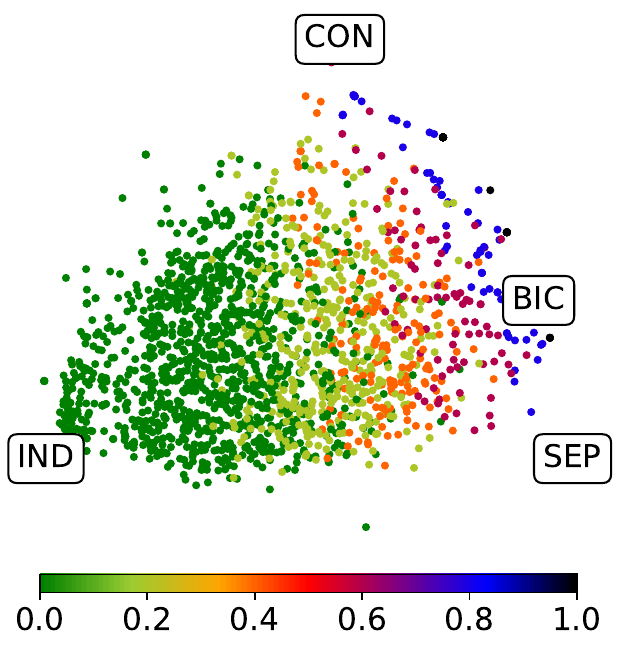}
		\caption{$5\times 5$, demand distance}
	\end{subfigure}\hfill%
	\begin{subfigure}[t]{\fpeval{(0.96 - \firsttwo - \firsttwo)*\linewidth}pt}
		\centering
		\includegraphics[width=\linewidth]{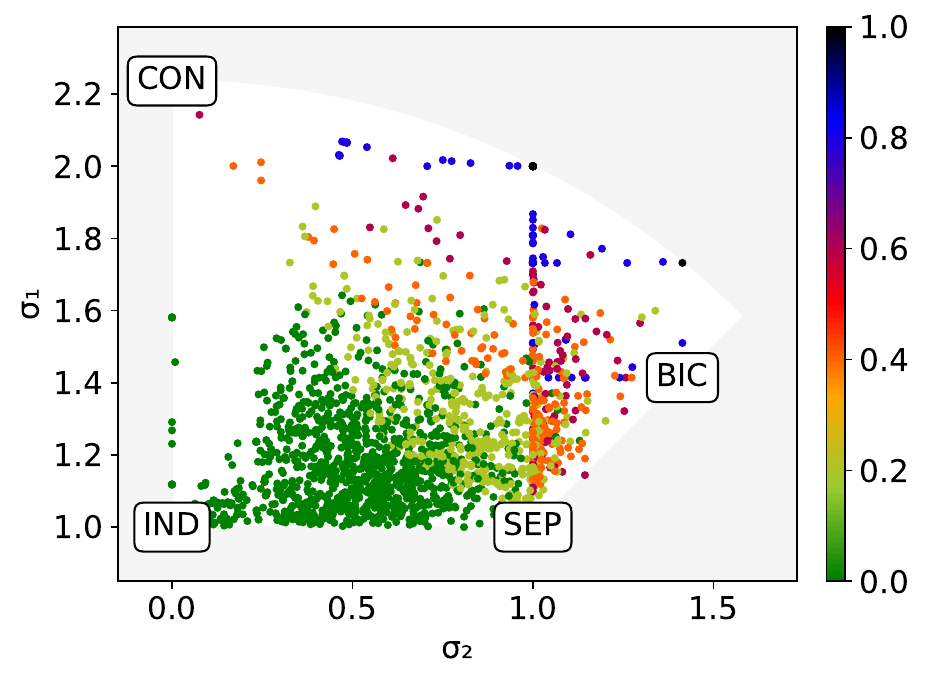}
		\caption{$5\times 5$}
	\end{subfigure}\\
	\begin{subfigure}[t]{\firsttwo\linewidth}
		\quad
	\end{subfigure}\hfill%
	\begin{subfigure}[t]{\firsttwo\linewidth}
		\centering
		\includegraphics[width=\linewidth]{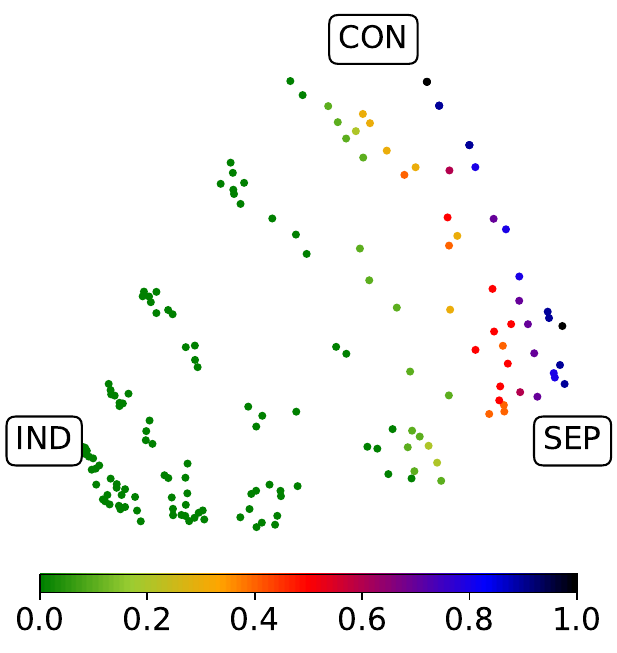}
		\caption{$10\times 20$, demand distance}
	\end{subfigure}\hfill%
	\begin{subfigure}[t]{\fpeval{(0.96 - \firsttwo - \firsttwo)*\linewidth}pt}
		\centering
		\includegraphics[width=\linewidth]{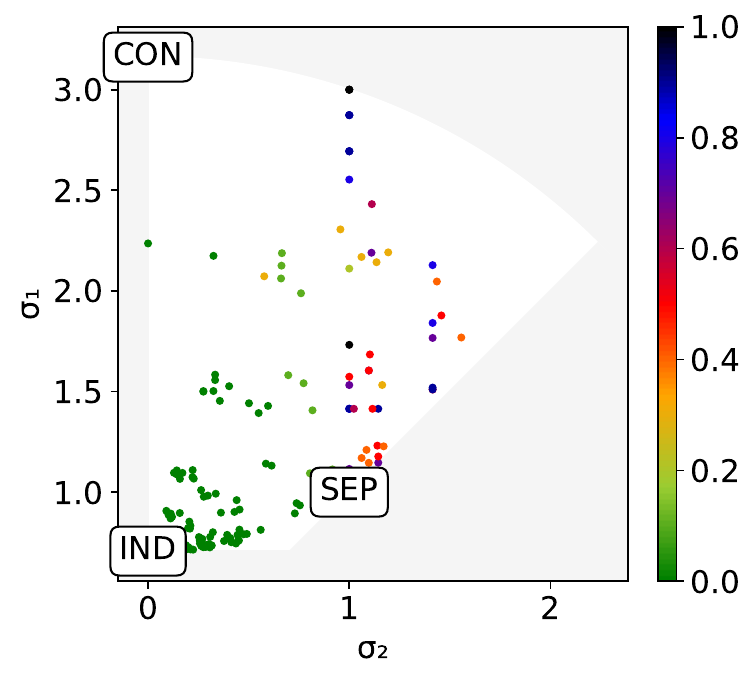}
		\caption{$10\times 20$}
	\end{subfigure}\\
	\caption{Distribution of the agents who are single-minded on our
	distance-embedding map using the valuation distance (first column), using the
demand distance (second column), and on our explicit map (third column).
Computing the valuation distance for instances of the~$10 \times 20$~dataset was
computationally too demanding, hence the blank space in the third row.}
  \label{fig:all:fracsingleminded} 
	\let\firsttwo\undefined
\end{figure*}

\begin{figure*}\centering
	\def\firsttwo{.29}
	\begin{subfigure}[t]{\firsttwo\linewidth}
		\centering
		\includegraphics[width=\linewidth]{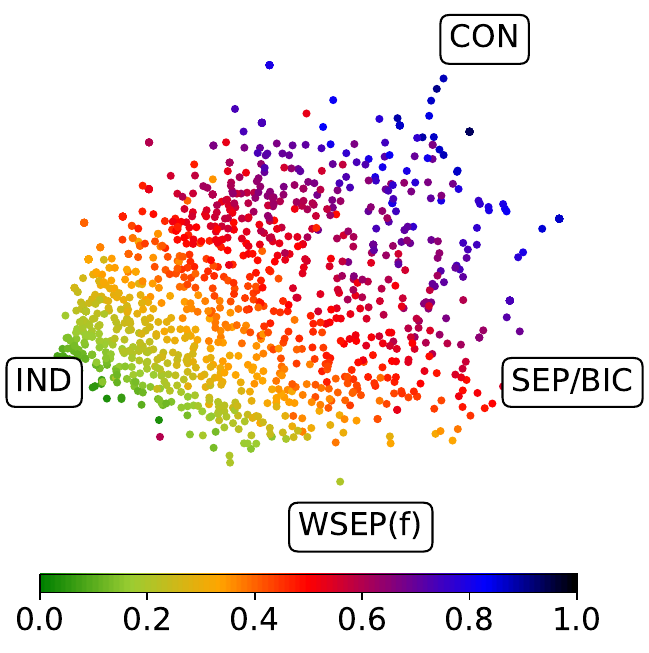}
		\caption{$3\times 6$, valuation distance}
	\end{subfigure}\hfill%
	\begin{subfigure}[t]{\firsttwo\linewidth}
		\centering
		\includegraphics[width=\linewidth]{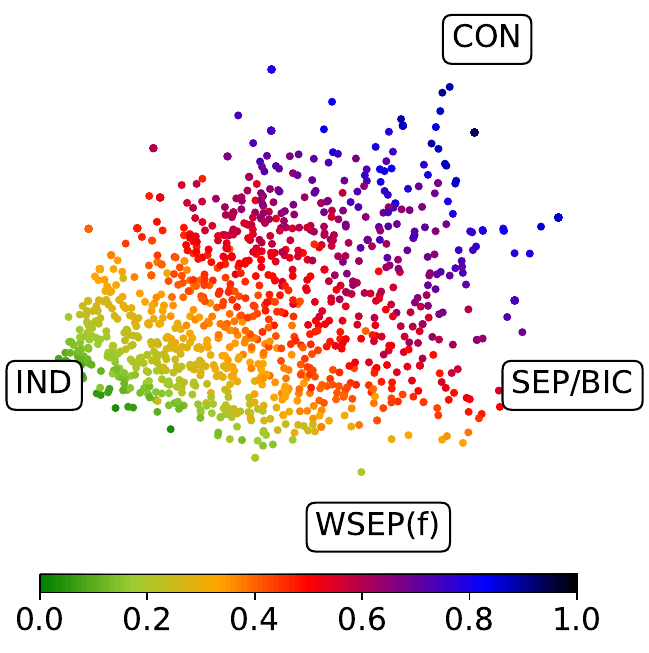}
		\caption{$3\times 6$, demand distance}
	\end{subfigure}\hfill%
	\begin{subfigure}[t]{\fpeval{(0.96 - \firsttwo - \firsttwo)*\linewidth}pt}
		\centering
		\includegraphics[width=\linewidth]{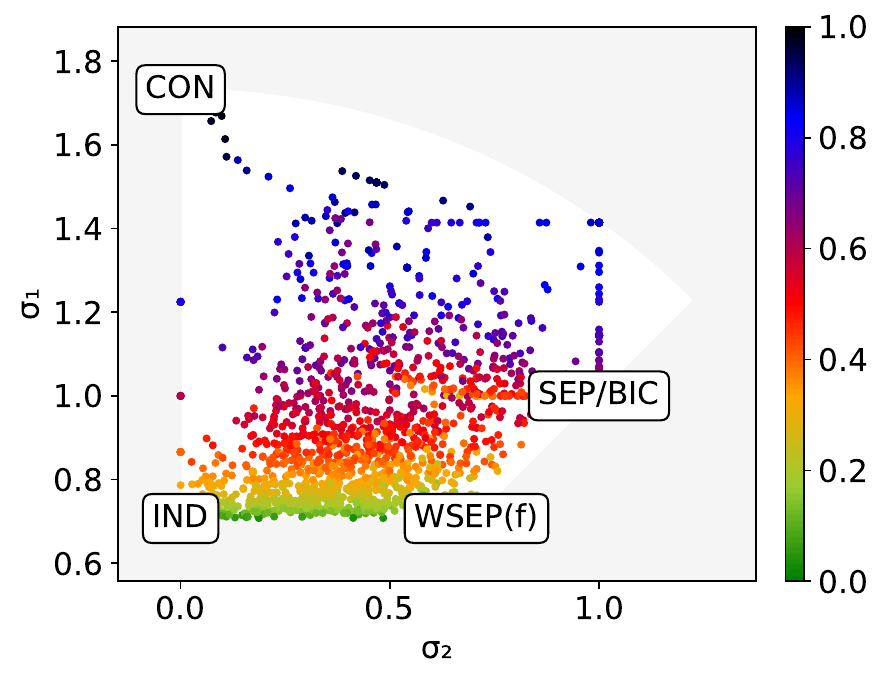}
		\caption{$3\times 6$}
	\end{subfigure}\\
	\begin{subfigure}[t]{\firsttwo\linewidth}
		\centering
		\includegraphics[width=\linewidth]{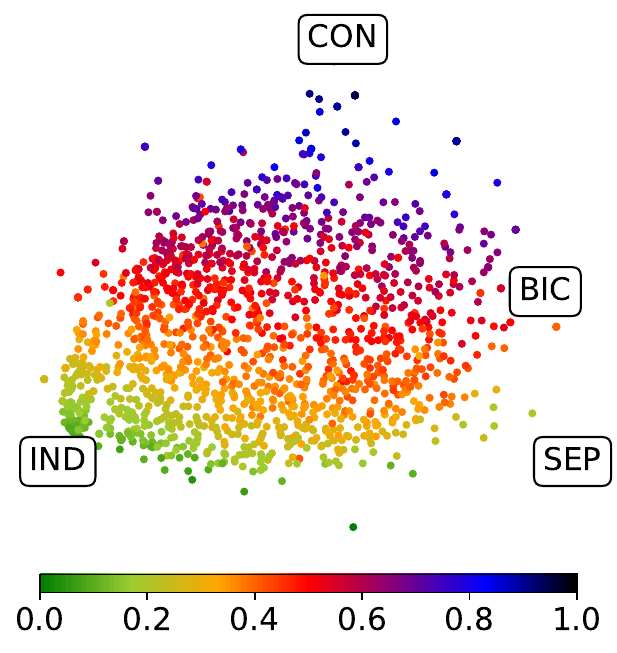}
		\caption{$5\times 5$, valuation distance}
	\end{subfigure}\hfill%
	\begin{subfigure}[t]{\firsttwo\linewidth}
		\centering
		\includegraphics[width=\linewidth]{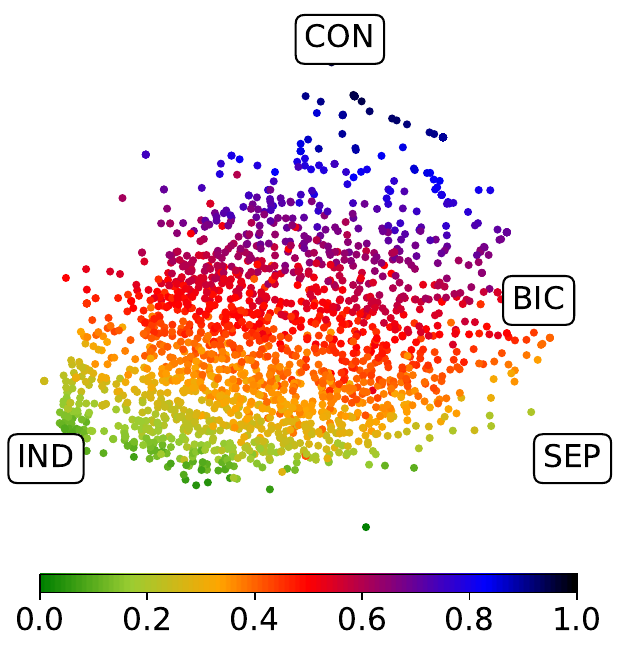}
		\caption{$5\times 5$, demand distance}
	\end{subfigure}\hfill%
	\begin{subfigure}[t]{\fpeval{(0.96 - \firsttwo - \firsttwo)*\linewidth}pt}
		\centering
		\includegraphics[width=\linewidth]{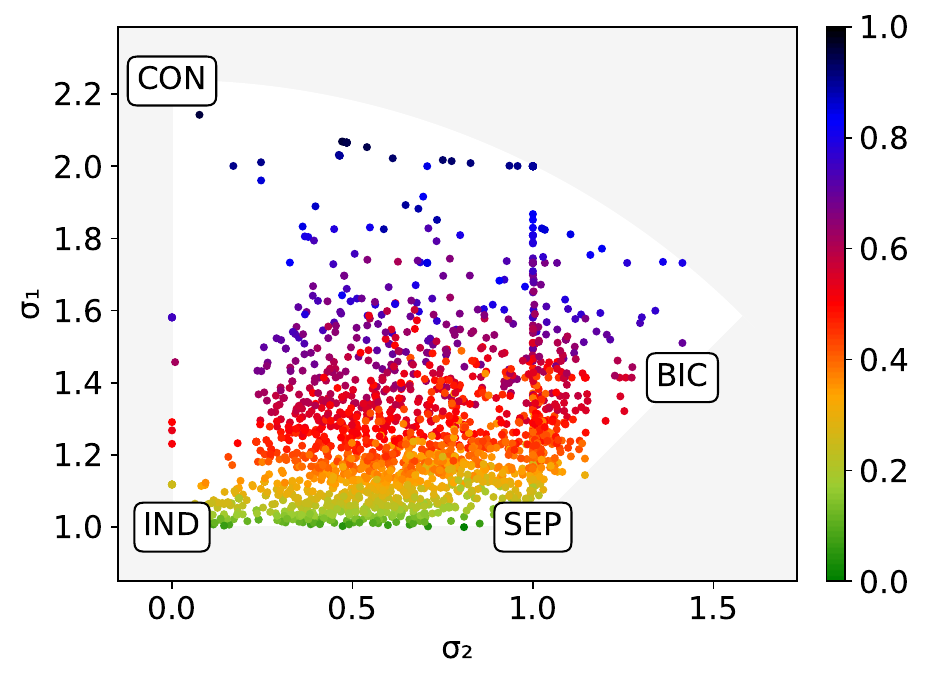}
		\caption{$5\times 5$}
	\end{subfigure}\\
	\begin{subfigure}[t]{\firsttwo\linewidth}
		\quad
	\end{subfigure}\hfill%
	\begin{subfigure}[t]{\firsttwo\linewidth}
		\centering
		\includegraphics[width=\linewidth]{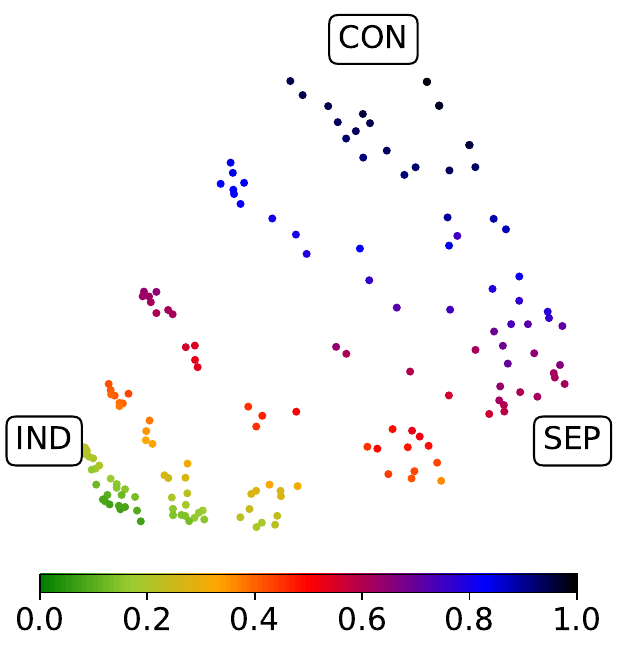}
		\caption{$10\times 20$, demand distance}
	\end{subfigure}\hfill%
	\begin{subfigure}[t]{\fpeval{(0.96 - \firsttwo - \firsttwo)*\linewidth}pt}
		\centering
		\includegraphics[width=\linewidth]{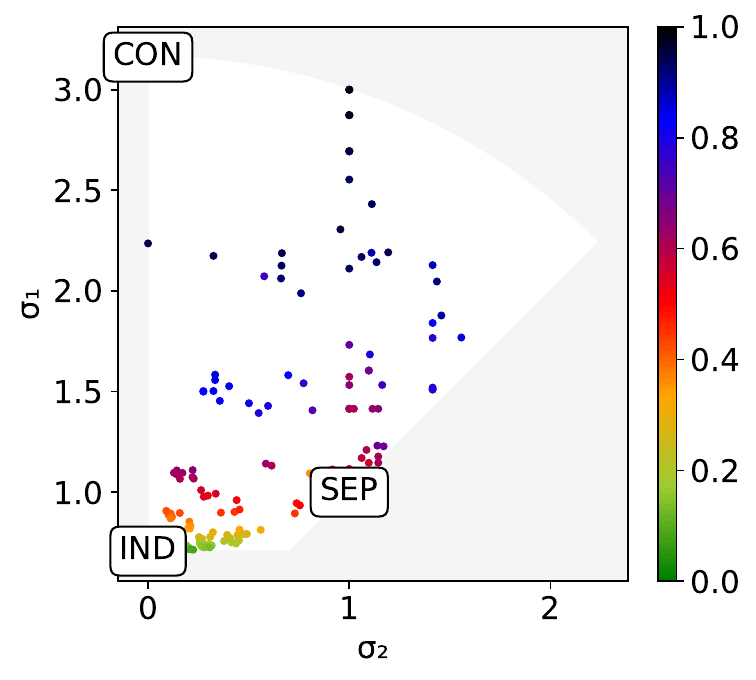}
		\caption{$10\times 20$}
	\end{subfigure}\\
	\caption{Distribution of the diversity of demand on our distance-embedding map
	using the valuation distance (first column), using the demand distance (second
column), and on our explicit map (third column). Computing the valuation
distance for instances of the~$10 \times 20$~dataset was computationally too
demanding, hence the blank space in the third row.}
  \label{fig:all:divdemand} 
	\let\firsttwo\undefined
\end{figure*}

\begin{figure*}\centering
	\def\firsttwo{.29}
	\begin{subfigure}[t]{\firsttwo\linewidth}
		\centering
		\includegraphics[width=\linewidth]{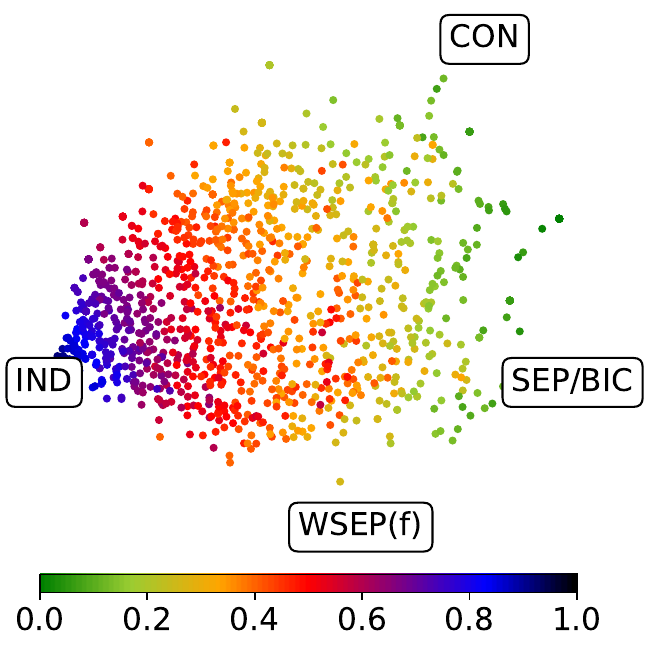}
		\caption{$3\times 6$, valuation distance}
	\end{subfigure}\hfill%
	\begin{subfigure}[t]{\firsttwo\linewidth}
		\centering
		\includegraphics[width=\linewidth]{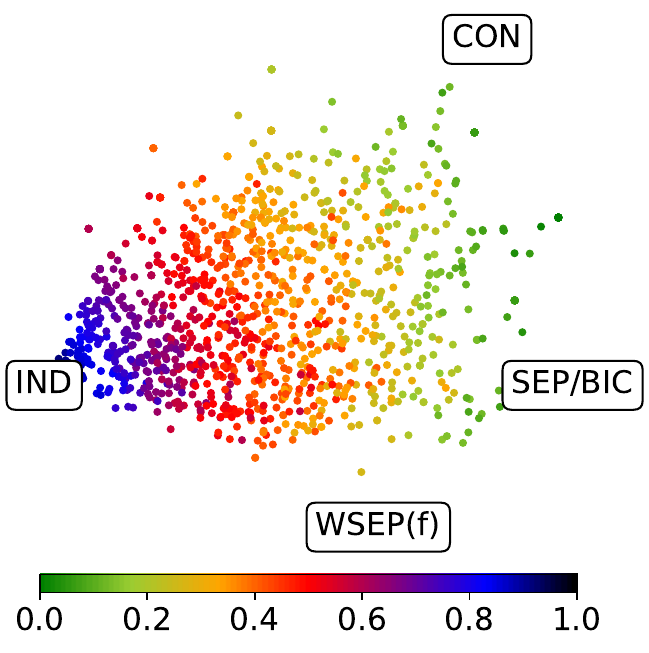}
		\caption{$3\times 6$, demand distance}
	\end{subfigure}\hfill%
	\begin{subfigure}[t]{\fpeval{(0.96 - \firsttwo - \firsttwo)*\linewidth}pt}
		\centering
		\includegraphics[width=\linewidth]{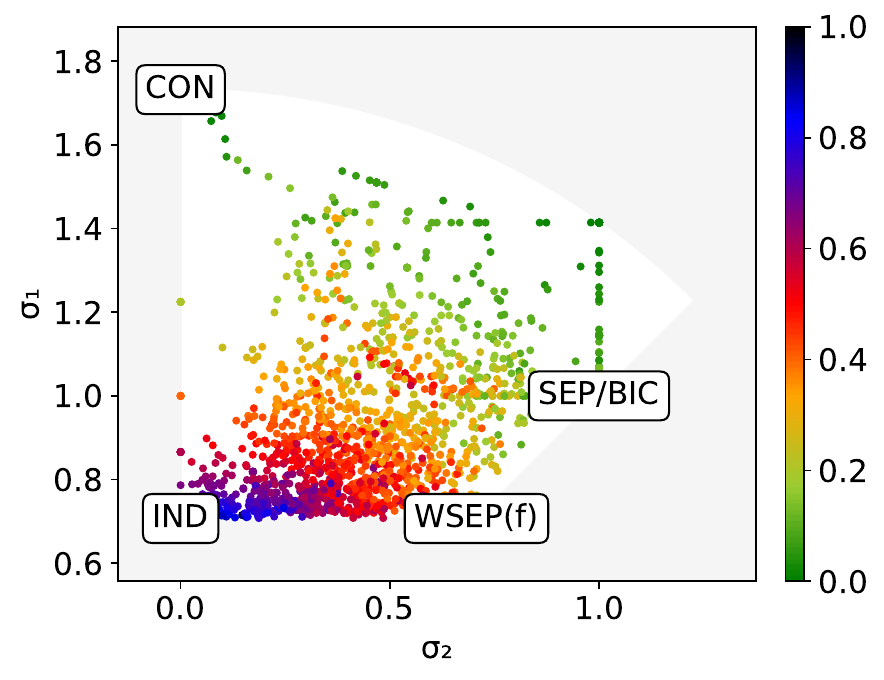}
		\caption{$3\times 6$}
	\end{subfigure}\\
	\begin{subfigure}[t]{\firsttwo\linewidth}
		\centering
		\includegraphics[width=\linewidth]{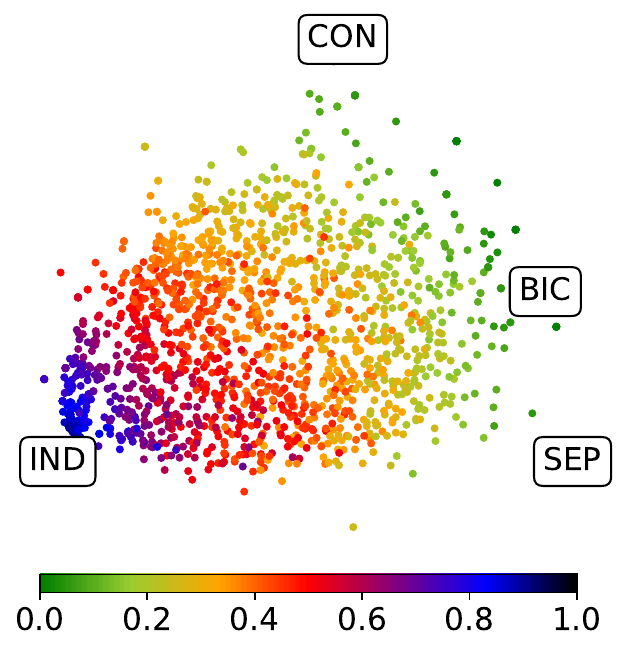}
		\caption{$5\times 5$, valuation distance}
	\end{subfigure}\hfill%
	\begin{subfigure}[t]{\firsttwo\linewidth}
		\centering
		\includegraphics[width=\linewidth]{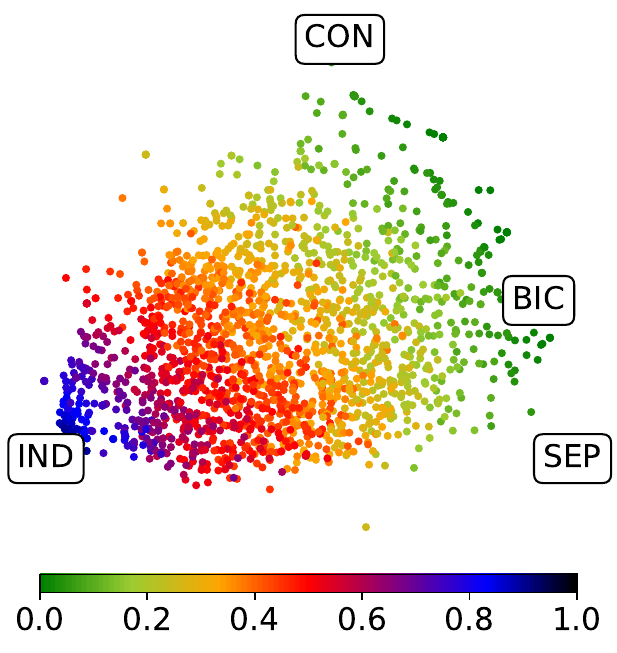}
		\caption{$5\times 5$, demand distance}
	\end{subfigure}\hfill%
	\begin{subfigure}[t]{\fpeval{(0.96 - \firsttwo - \firsttwo)*\linewidth}pt}
		\centering
		\includegraphics[width=\linewidth]{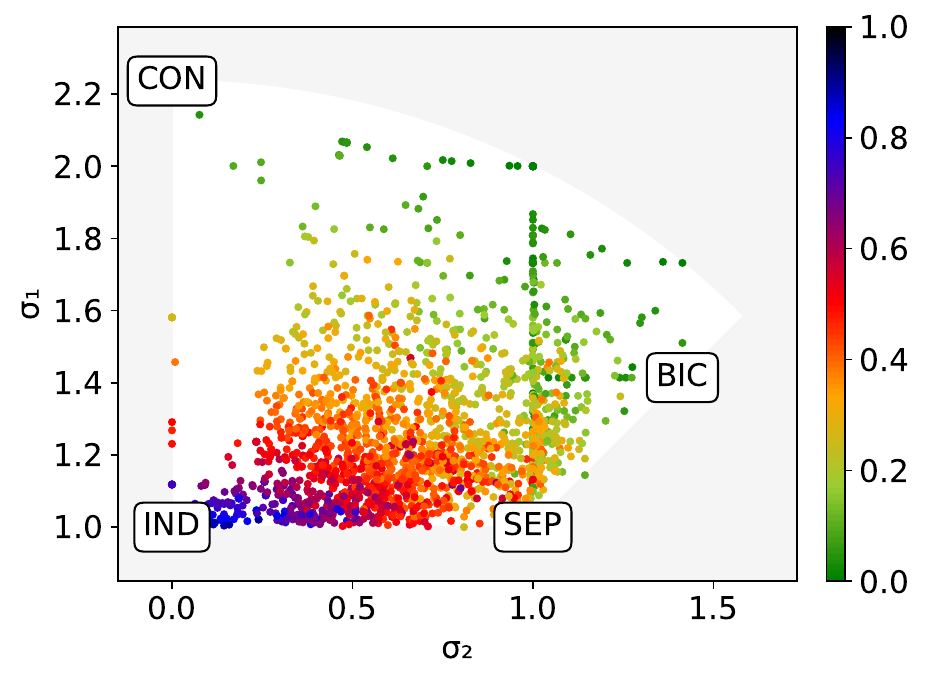}
		\caption{$5\times 5$}
	\end{subfigure}\\
	\begin{subfigure}[t]{\firsttwo\linewidth}
		\quad
	\end{subfigure}\hfill%
	\begin{subfigure}[t]{\firsttwo\linewidth}
		\centering
		\includegraphics[width=\linewidth]{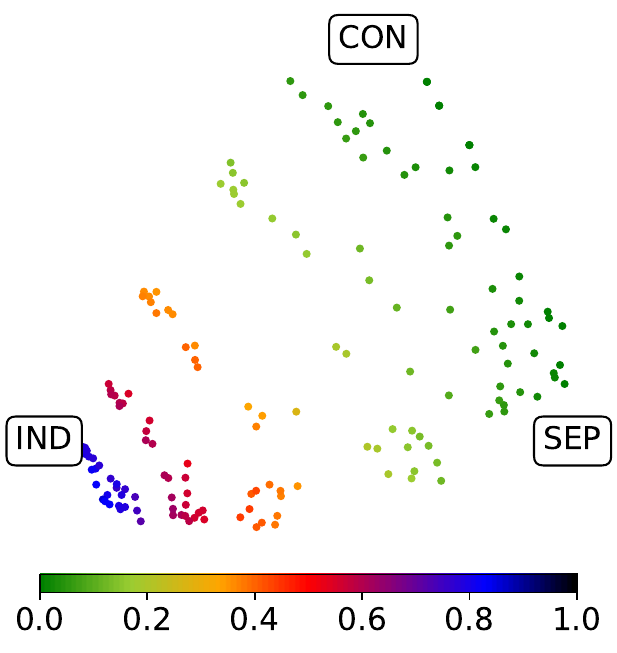}
		\caption{$10\times 20$, demand distance}
	\end{subfigure}\hfill%
	\begin{subfigure}[t]{\fpeval{(0.96 - \firsttwo - \firsttwo)*\linewidth}pt}
		\centering
		\includegraphics[width=\linewidth]{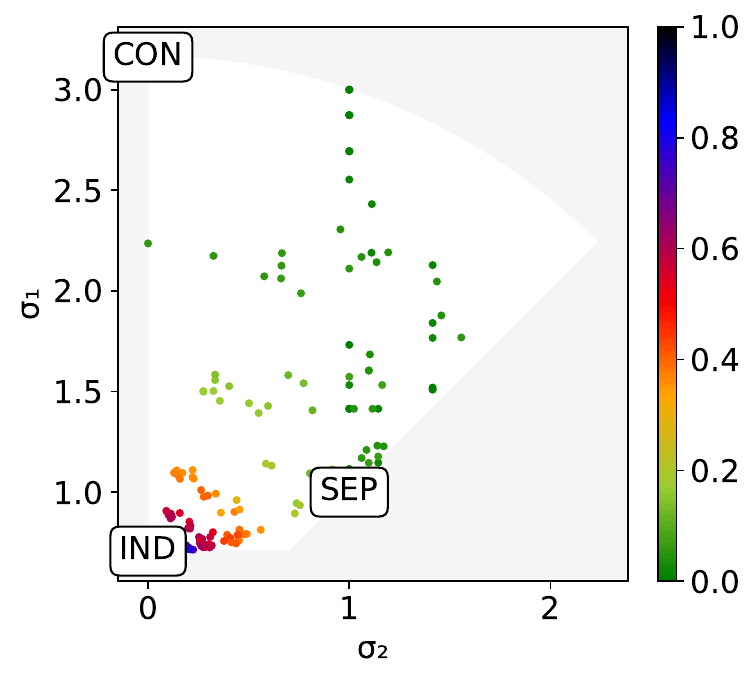}
		\caption{$10\times 20$}
	\end{subfigure}\\
	\caption{Distribution of one minus pickiness on our distance-embedding map
	using the valuation distance (first column), using the demand distance (second
column), and on our explicit map (third column). Computing the valuation
distance for instances of the~$10 \times 20$~dataset was computationally too
demanding, hence the blank space in the third row.}
  \label{fig:all:minuspicki} 
	\let\firsttwo\undefined
\end{figure*}

Lastly, we highlight each of the different instance sources separately for
both the $3\times 6$ and $5\times 5$ instances in
Fig.~\ref{fig:exponstri} to \ref{fig:candiescompar}---maps of the $10\times 20$ instances are missing here, as each instance apart from the corner points are from the resampling distribution---, which show the
observations about the instance sources on the distance-embedding map in
\cref{subsec:stude} more clearly on both distance-embedding and explicit
maps, both of which show similar distributions.

\begin{figure*}\centering
	\begin{subfigure}[t]{.32\linewidth}
		\centering
		\includegraphics[height=4cm]{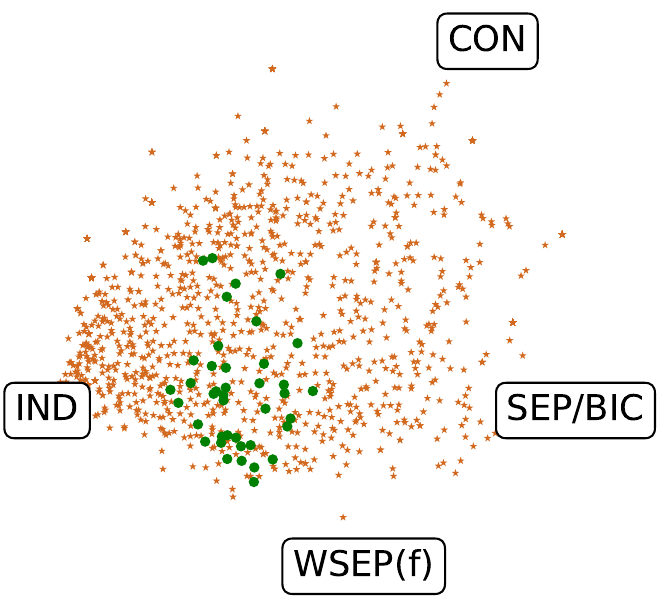}
		\caption{$3\times 6$, valuation distance}
	\end{subfigure}%
	\begin{subfigure}[t]{.32\linewidth}
		\centering
		\includegraphics[height=4cm]{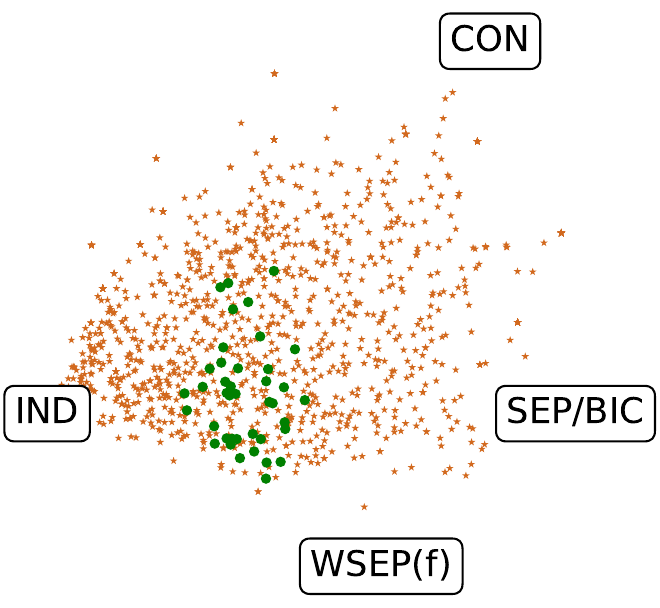}
		\caption{$3\times 6$, demand distance}
	\end{subfigure}%
	\begin{subfigure}[t]{.33\linewidth}
		\centering
		\includegraphics[height=4cm]{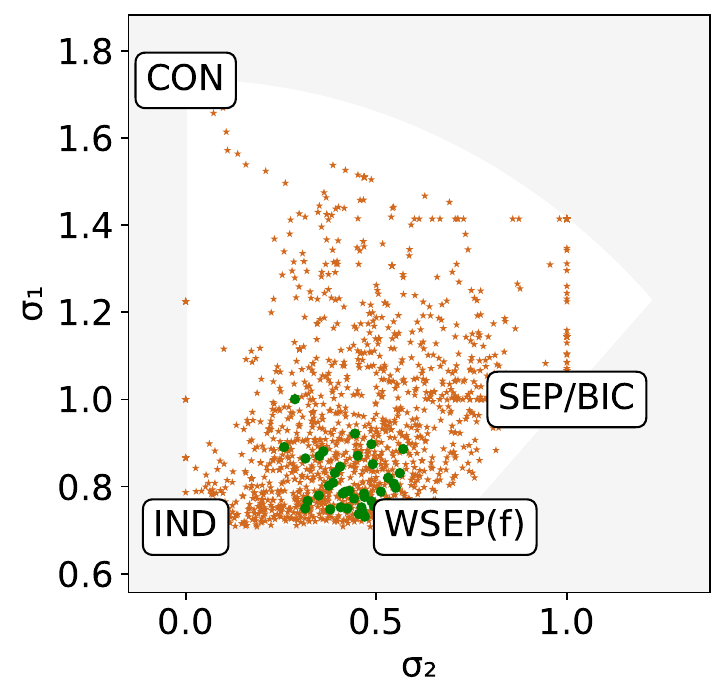}
		\caption{$3\times 6$}
	\end{subfigure}\\
	\begin{subfigure}[t]{.32\linewidth}
		\centering
		\includegraphics[height=4cm]{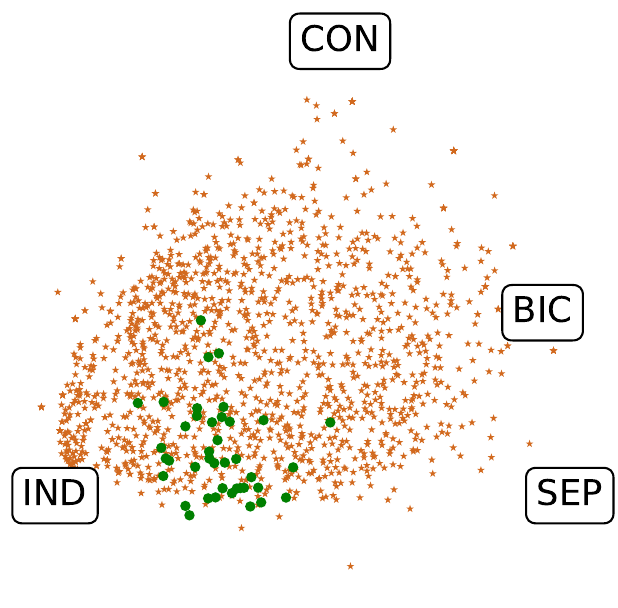}
		\caption{$5\times 5$, valuation distance}
	\end{subfigure}%
	\begin{subfigure}[t]{.32\linewidth}
		\centering
		\includegraphics[height=4cm]{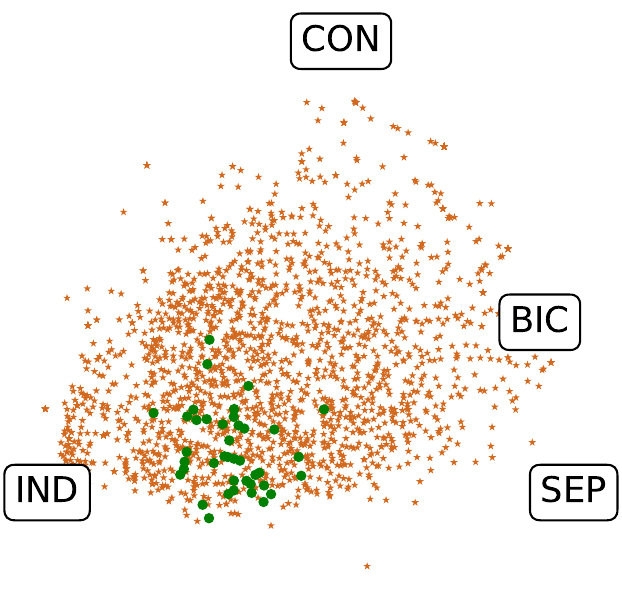}
		\caption{$5\times 5$, demand distance}
	\end{subfigure}%
	\begin{subfigure}[t]{.33\linewidth}
		\centering
		\includegraphics[height=4cm]{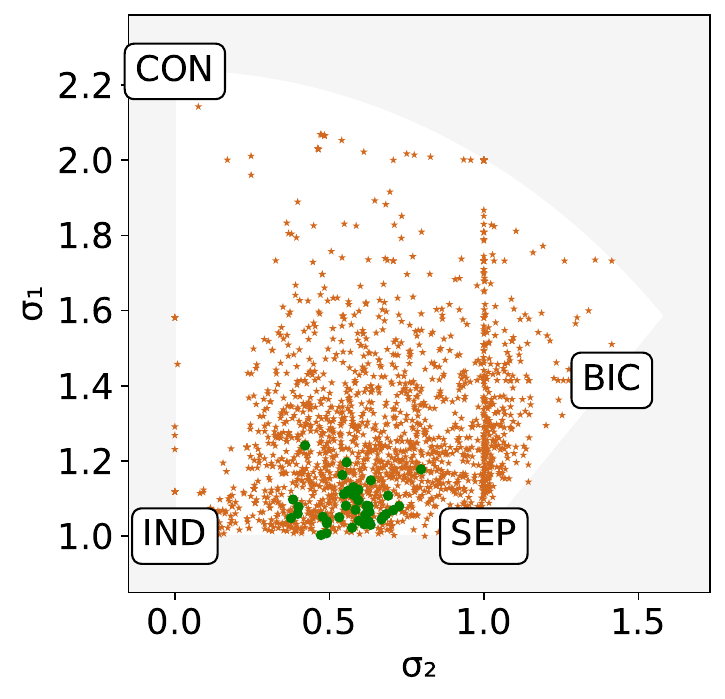}
		\caption{$5\times 5$}
	\end{subfigure}\\
	\caption{The instances from the i.i.d. distribution using the exponential distribution are marked as green dots, while all other instances are marked as brown stars.}
	\label{fig:exponstri}
\end{figure*}

\begin{figure*}\centering
	\begin{subfigure}[t]{.32\linewidth}
		\centering
		\includegraphics[height=4cm]{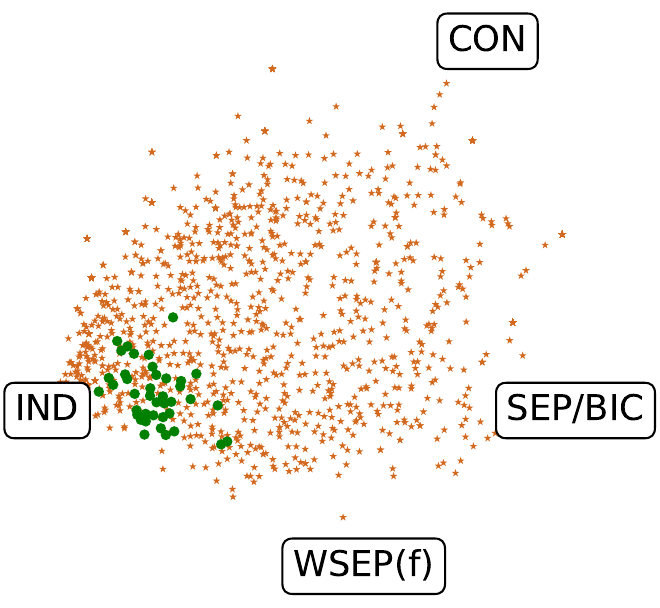}
		\caption{$3\times 6$, valuation distance}
	\end{subfigure}%
	\begin{subfigure}[t]{.32\linewidth}
		\centering
		\includegraphics[height=4cm]{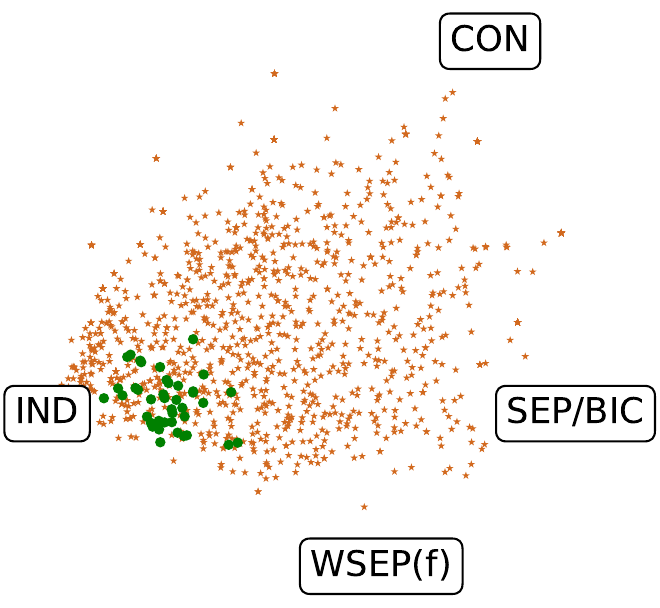}
		\caption{$3\times 6$, demand distance}
	\end{subfigure}%
	\begin{subfigure}[t]{.33\linewidth}
		\centering
		\includegraphics[height=4cm]{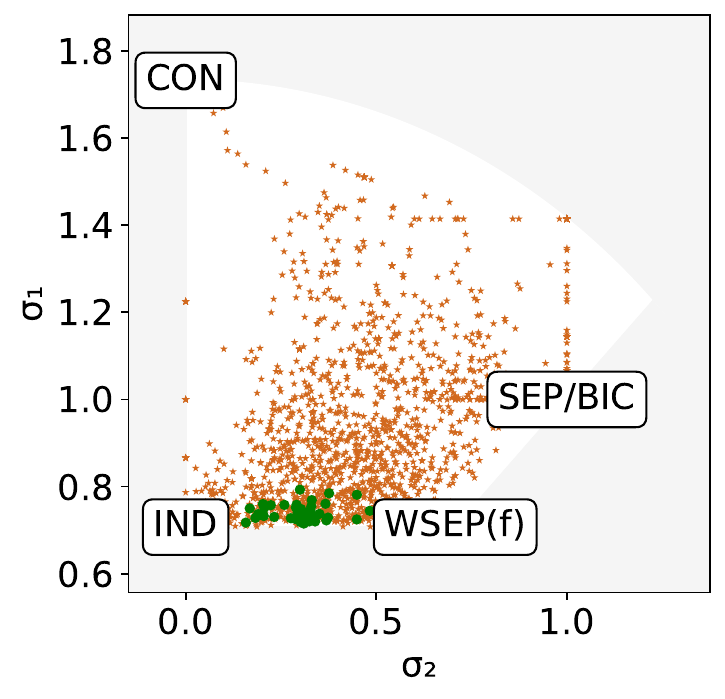}
		\caption{$3\times 6$}
	\end{subfigure}\\
	\begin{subfigure}[t]{.32\linewidth}
		\centering
		\includegraphics[height=4cm]{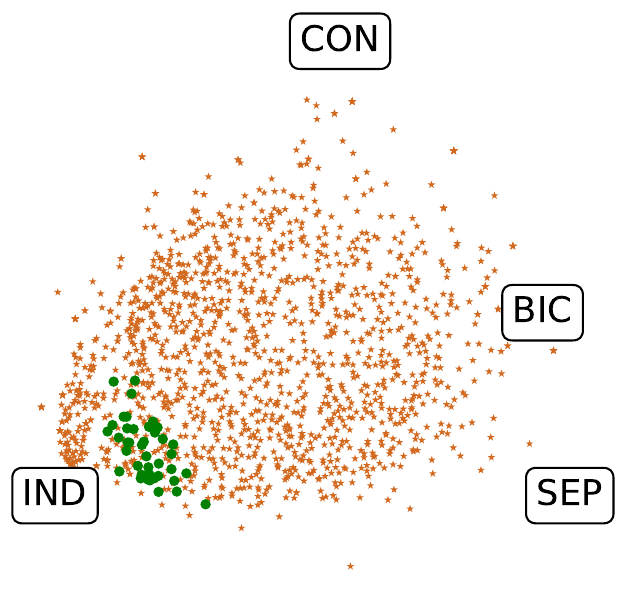}
		\caption{$5\times 5$, valuation distance}
	\end{subfigure}%
	\begin{subfigure}[t]{.32\linewidth}
		\centering
		\includegraphics[height=4cm]{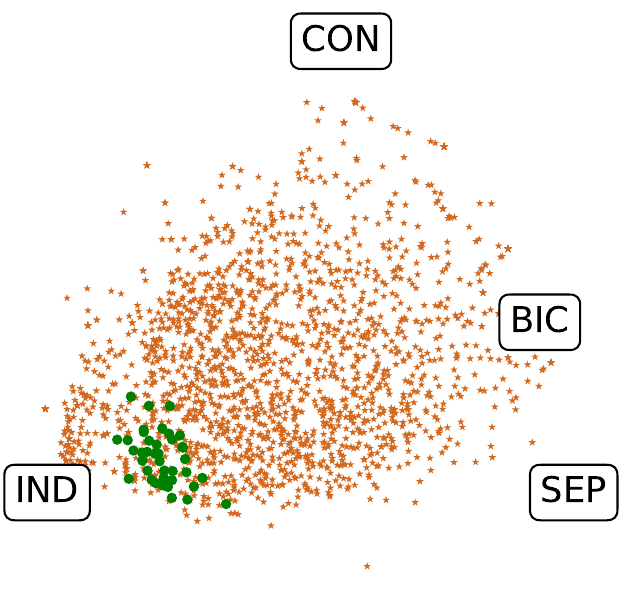}
		\caption{$5\times 5$, demand distance}
	\end{subfigure}%
	\begin{subfigure}[t]{.33\linewidth}
		\centering
		\includegraphics[height=4cm]{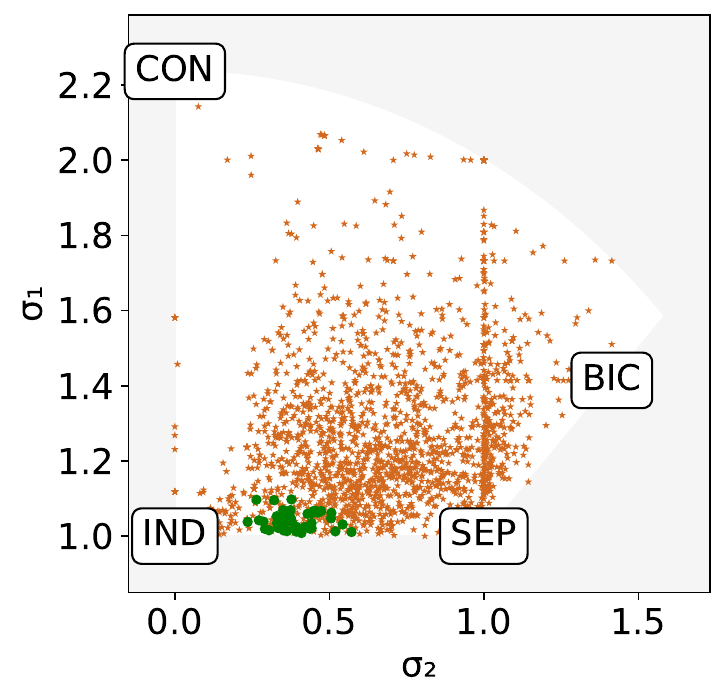}
		\caption{$5\times 5$}
	\end{subfigure}\\
	\caption{The instances from the i.i.d. distribution using the uniform distribution over $[0,1]$ are marked as green dots, while all other instances are marked as brown stars.}
	\label{fig:unistri}
\end{figure*}

\begin{figure*}\centering
	\begin{subfigure}[t]{.32\linewidth}
		\centering
		\includegraphics[height=4cm]{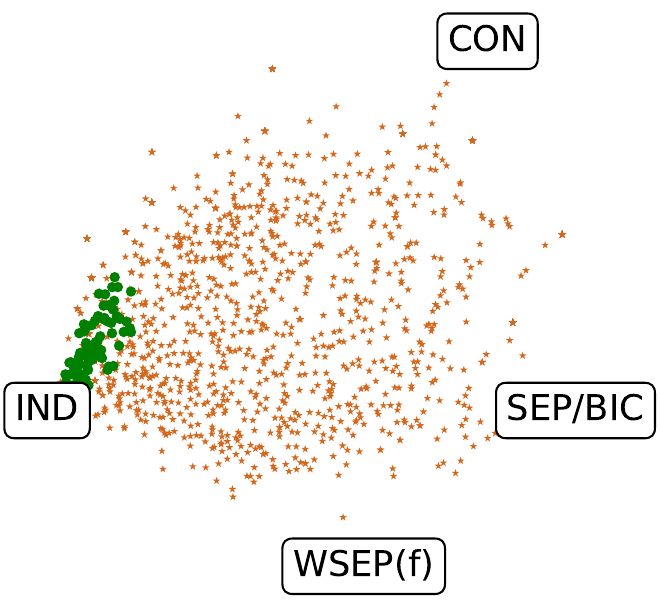}
		\caption{$3\times 6$, valuation distance}
	\end{subfigure}%
	\begin{subfigure}[t]{.32\linewidth}
		\centering
		\includegraphics[height=4cm]{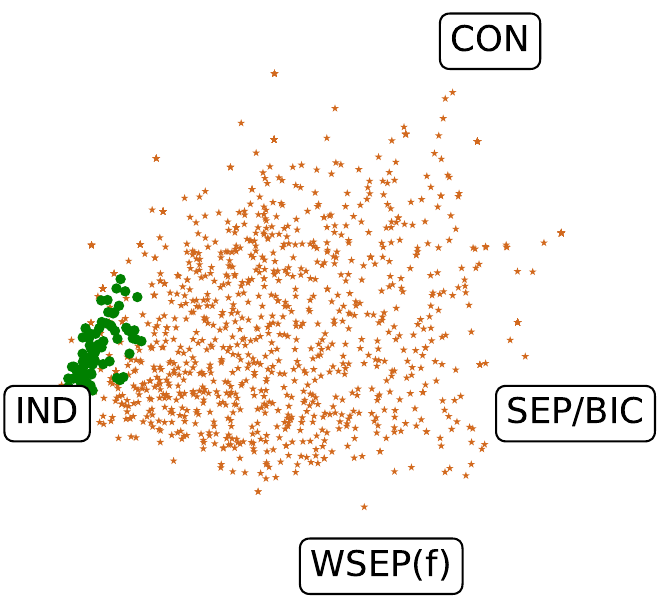}
		\caption{$3\times 6$, demand distance}
	\end{subfigure}%
	\begin{subfigure}[t]{.33\linewidth}
		\centering
		\includegraphics[height=4cm]{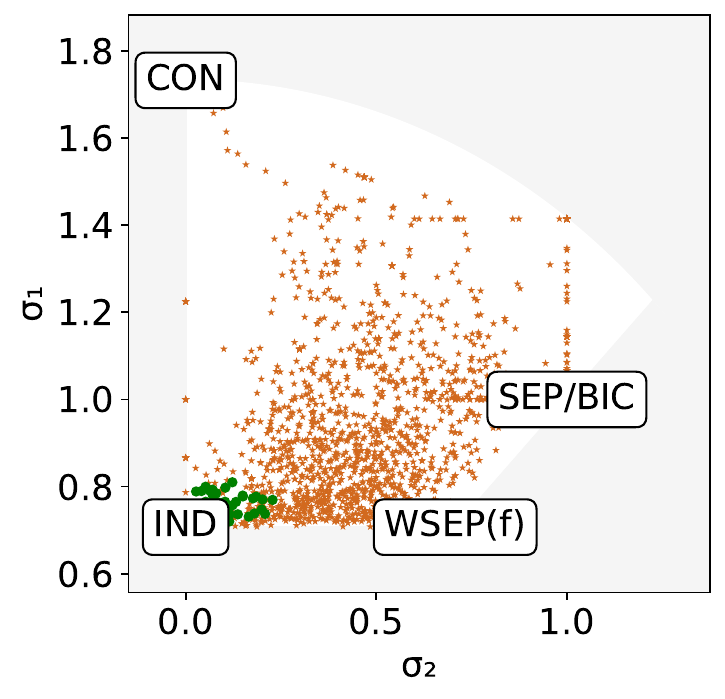}
		\caption{$3\times 6$}
	\end{subfigure}\\
	\begin{subfigure}[t]{.32\linewidth}
		\centering
		\includegraphics[height=4cm]{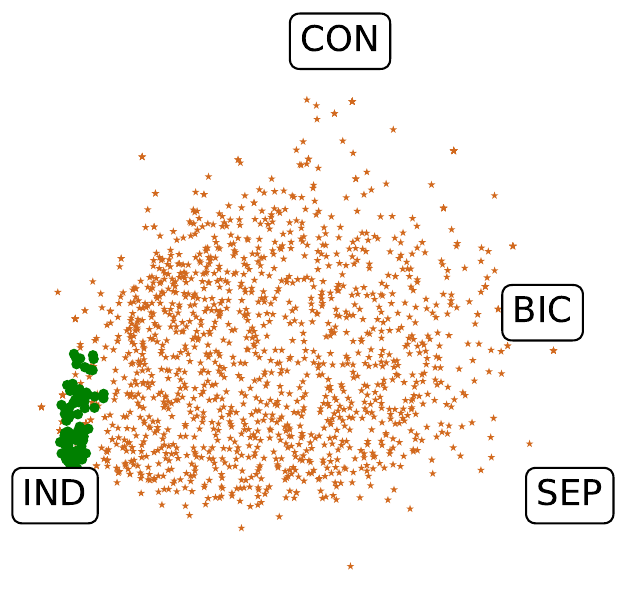}
		\caption{$5\times 5$, valuation distance}
	\end{subfigure}%
	\begin{subfigure}[t]{.32\linewidth}
		\centering
		\includegraphics[height=4cm]{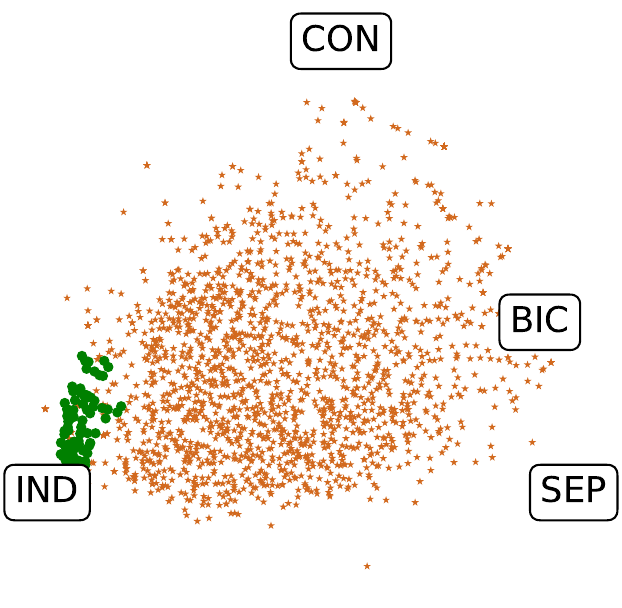}
		\caption{$5\times 5$, demand distance}
	\end{subfigure}%
	\begin{subfigure}[t]{.33\linewidth}
		\centering
		\includegraphics[height=4cm]{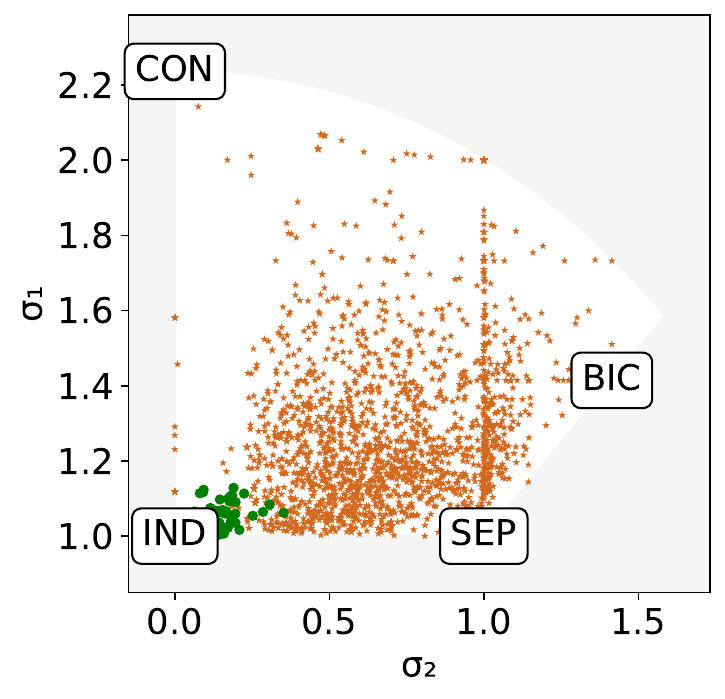}
		\caption{$5\times 5$}
	\end{subfigure}\\
	\caption{The instances from the attributes distribution are marked as green dots, while all other instances are marked as brown stars.}
	\label{fig:attristri}
\end{figure*}

\begin{figure*}\centering
	\begin{subfigure}[t]{.32\linewidth}
		\centering
		\includegraphics[height=4cm]{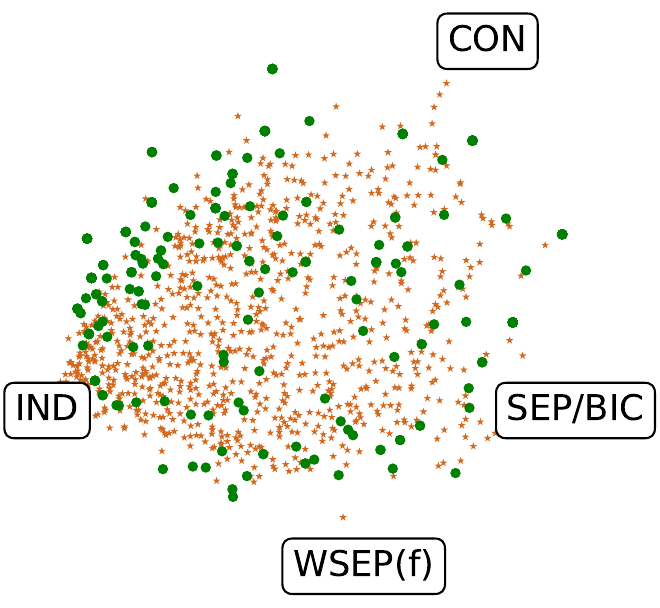}
		\caption{$3\times 6$, valuation distance}
	\end{subfigure}%
	\begin{subfigure}[t]{.32\linewidth}
		\centering
		\includegraphics[height=4cm]{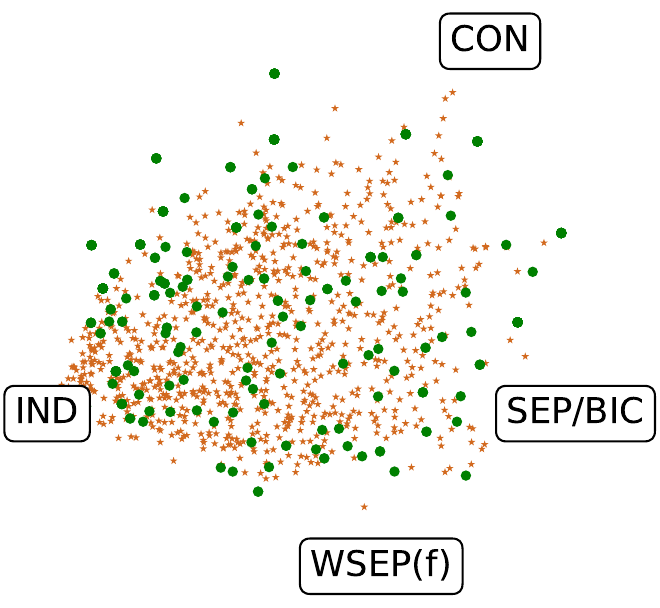}
		\caption{$3\times 6$, demand distance}
	\end{subfigure}%
	\begin{subfigure}[t]{.33\linewidth}
		\centering
		\includegraphics[height=4cm]{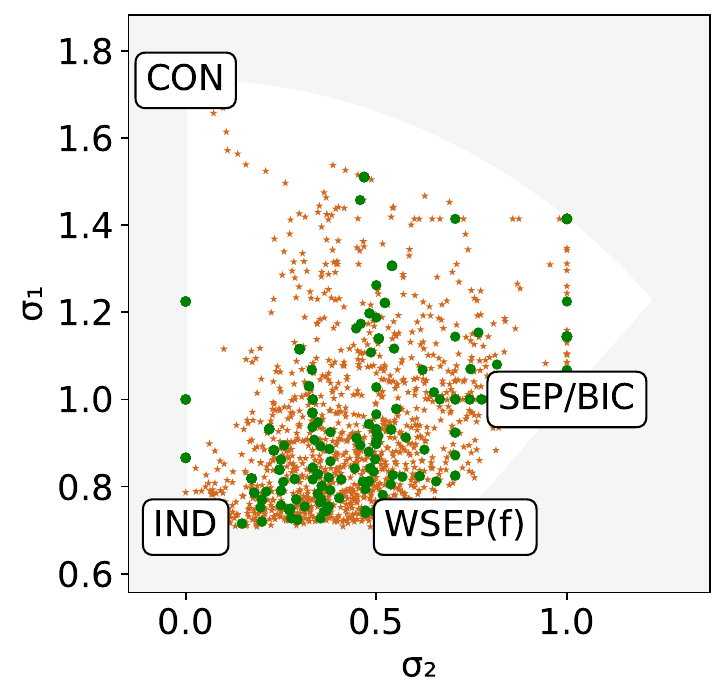}
		\caption{$3\times 6$}
	\end{subfigure}\\
	\begin{subfigure}[t]{.32\linewidth}
		\centering
		\includegraphics[height=4cm]{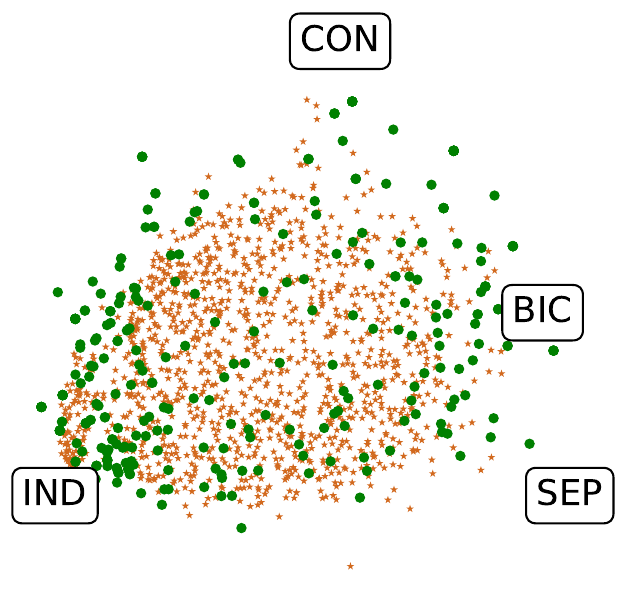}
		\caption{$5\times 5$, valuation distance}
	\end{subfigure}%
	\begin{subfigure}[t]{.32\linewidth}
		\centering
		\includegraphics[height=4cm]{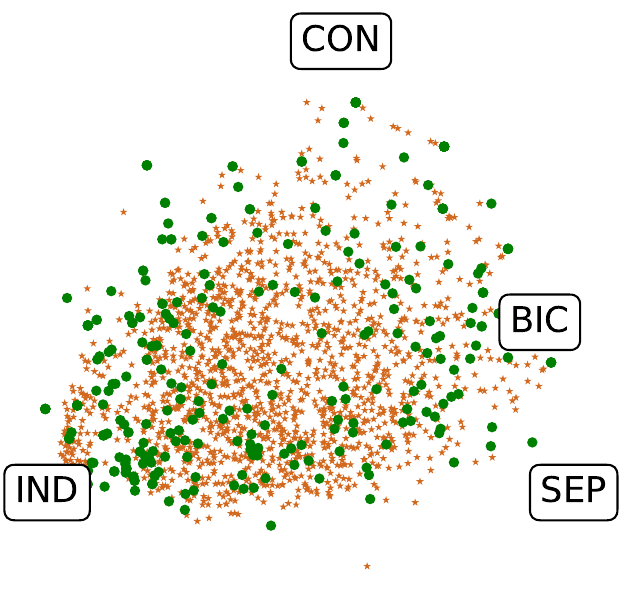}
		\caption{$5\times 5$, demand distance}
	\end{subfigure}%
	\begin{subfigure}[t]{.33\linewidth}
		\centering
		\includegraphics[height=4cm]{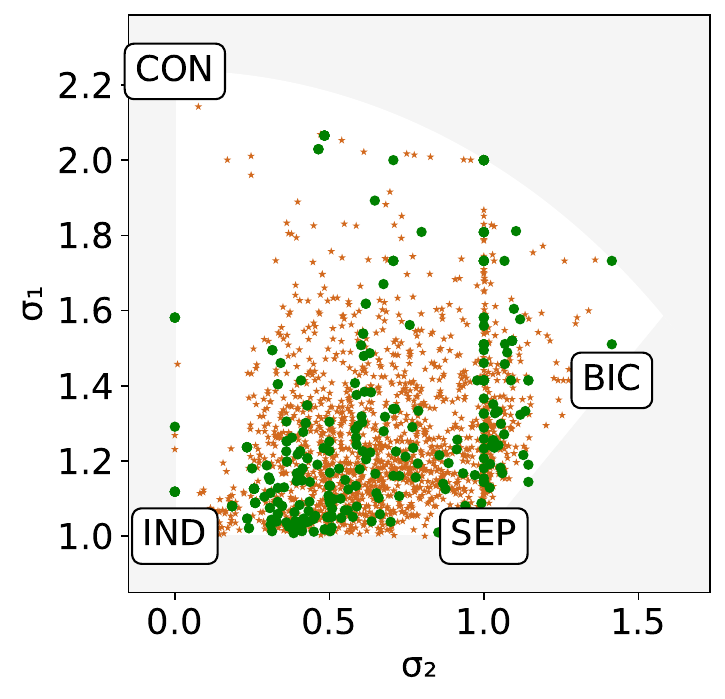}
		\caption{$5\times 5$}
	\end{subfigure}\\
	\caption{The instances from the resampling distribution are marked as green dots, while all other instances are marked as brown stars.}
	\label{fig:rsmplstri}
\end{figure*}

\begin{figure*}\centering
	\begin{subfigure}[t]{.32\linewidth}
		\centering
		\includegraphics[height=4cm]{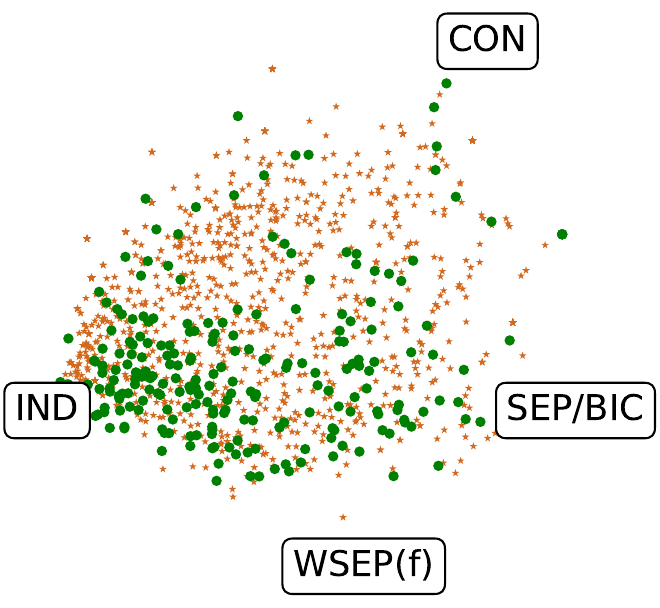}
		\caption{$3\times 6$, valuation distance}
	\end{subfigure}%
	\begin{subfigure}[t]{.32\linewidth}
		\centering
		\includegraphics[height=4cm]{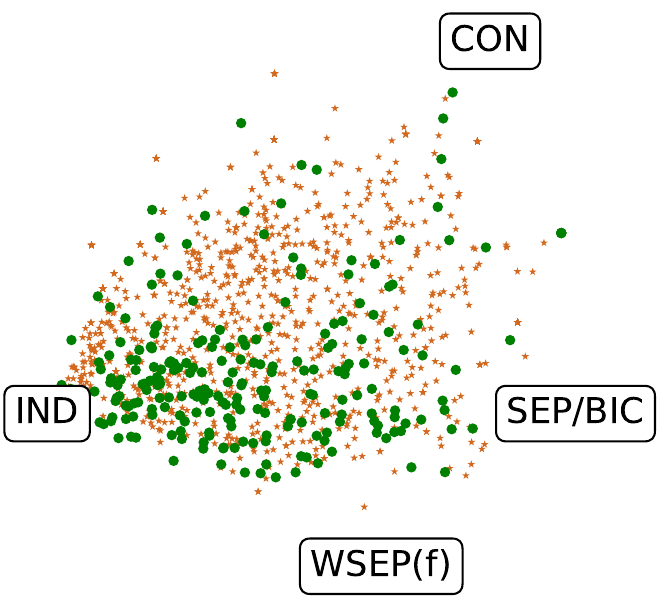}
		\caption{$3\times 6$, demand distance}
	\end{subfigure}%
	\begin{subfigure}[t]{.33\linewidth}
		\centering
		\includegraphics[height=4cm]{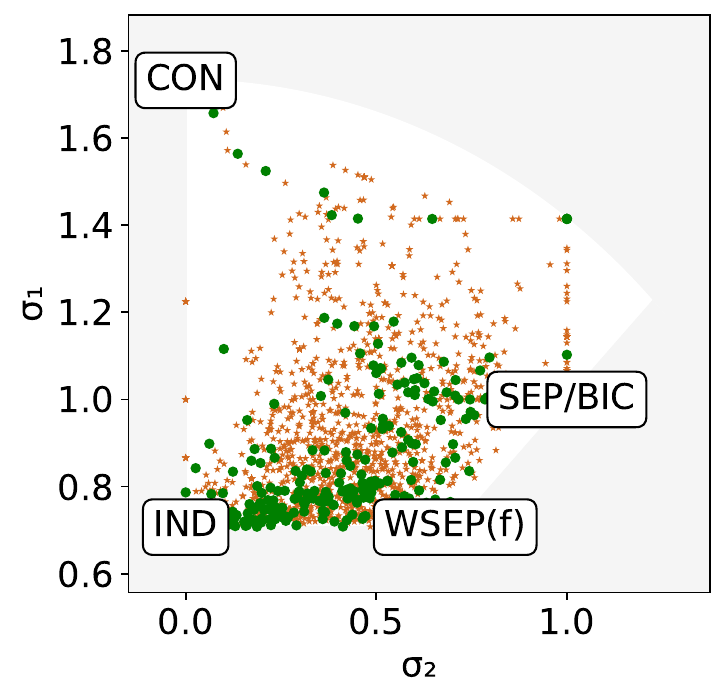}
		\caption{$3\times 6$}
	\end{subfigure}\\
	\begin{subfigure}[t]{.32\linewidth}
		\centering
		\includegraphics[height=4cm]{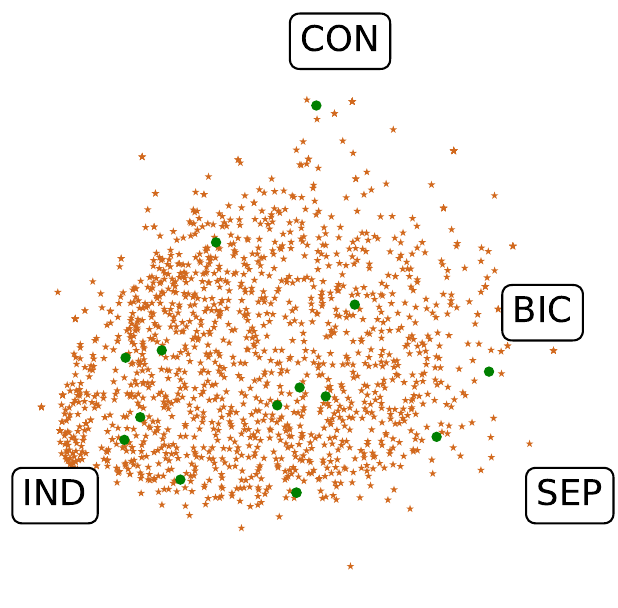}
		\caption{$5\times 5$, valuation distance}
	\end{subfigure}%
	\begin{subfigure}[t]{.32\linewidth}
		\centering
		\includegraphics[height=4cm]{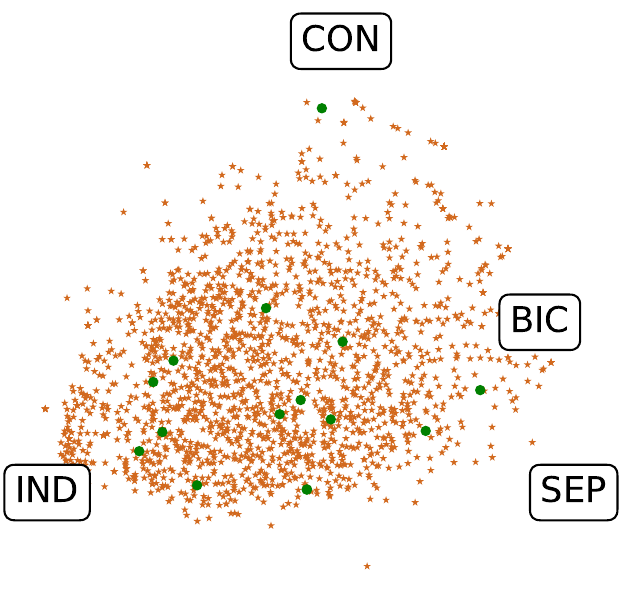}
		\caption{$5\times 5$, demand distance}
	\end{subfigure}%
	\begin{subfigure}[t]{.33\linewidth}
		\centering
		\includegraphics[height=4cm]{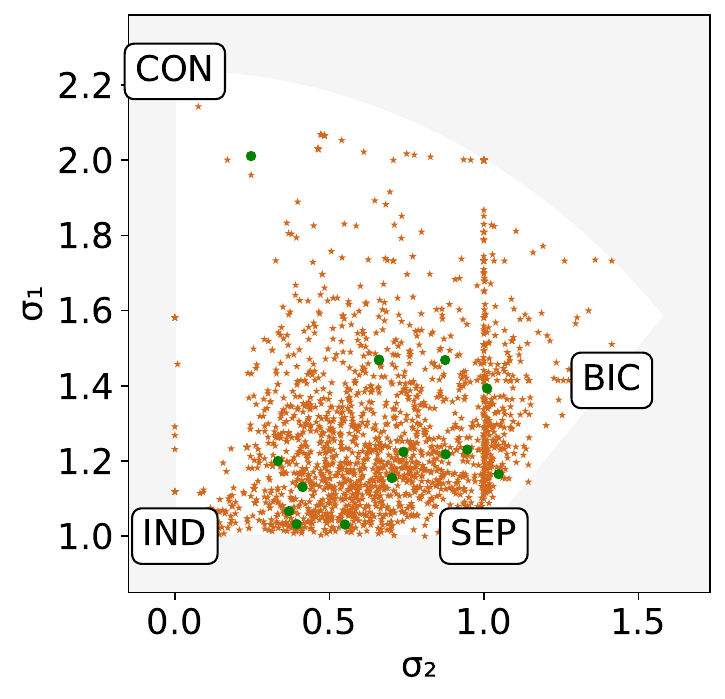}
		\caption{$5\times 5$}
	\end{subfigure}\\
	\caption{The instances from the Spliddit dataset are marked as green dots, while all other instances are marked as brown stars}
	\label{fig:spldstri}
\end{figure*}

\begin{figure*}\centering
	\begin{subfigure}[t]{.32\linewidth}
		\centering
		\includegraphics[height=4cm]{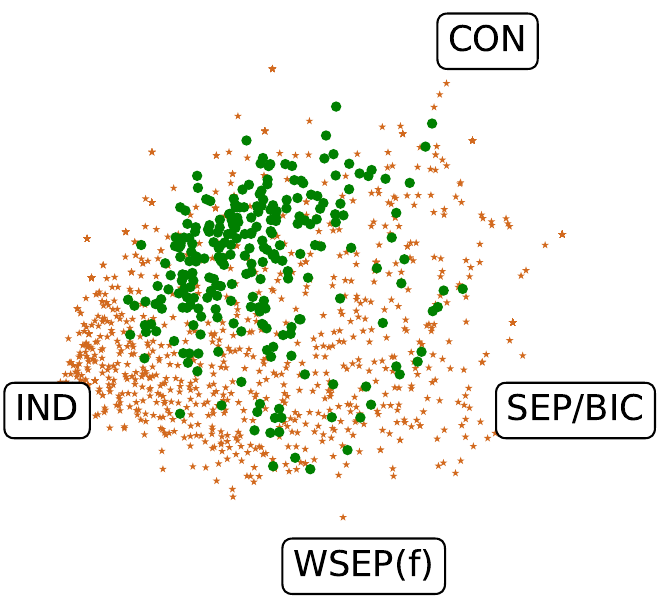}
		\caption{$3\times 6$, valuation distance}
	\end{subfigure}%
	\begin{subfigure}[t]{.32\linewidth}
		\centering
		\includegraphics[height=4cm]{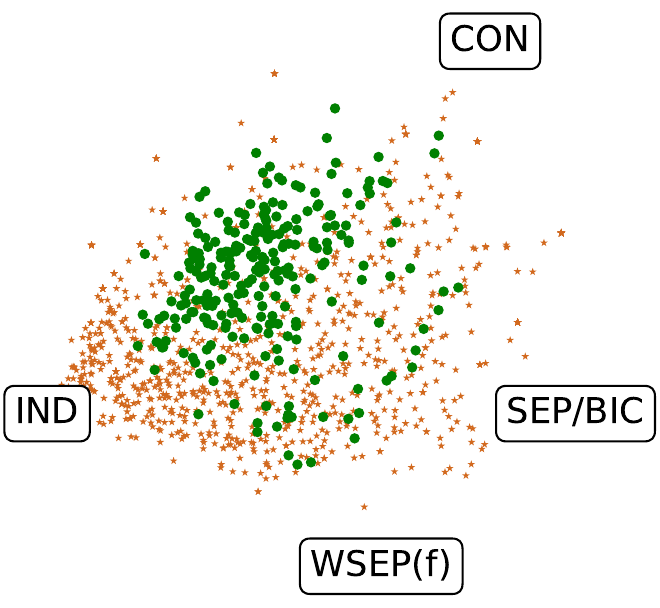}
		\caption{$3\times 6$, demand distance}
	\end{subfigure}%
	\begin{subfigure}[t]{.33\linewidth}
		\centering
		\includegraphics[height=4cm]{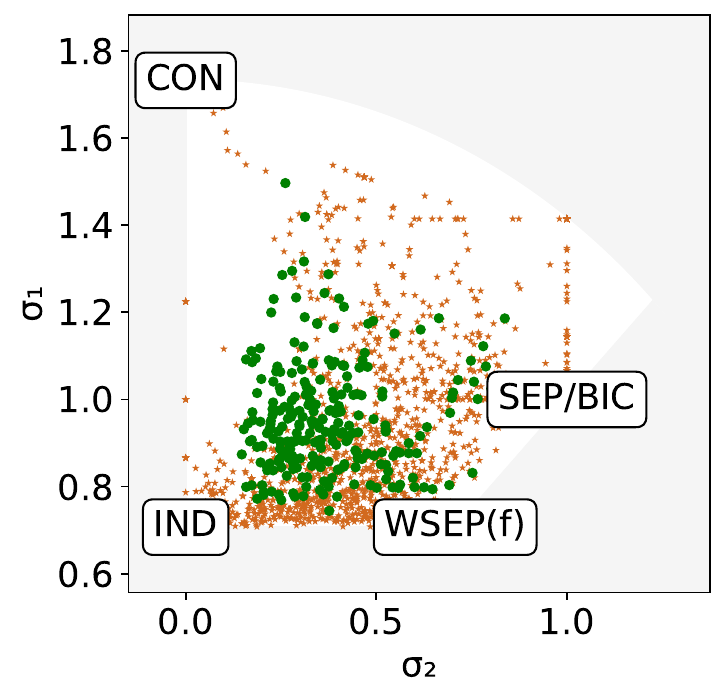}
		\caption{$3\times 6$}
	\end{subfigure}\\
	\begin{subfigure}[t]{.32\linewidth}
		\centering
		\includegraphics[height=4cm]{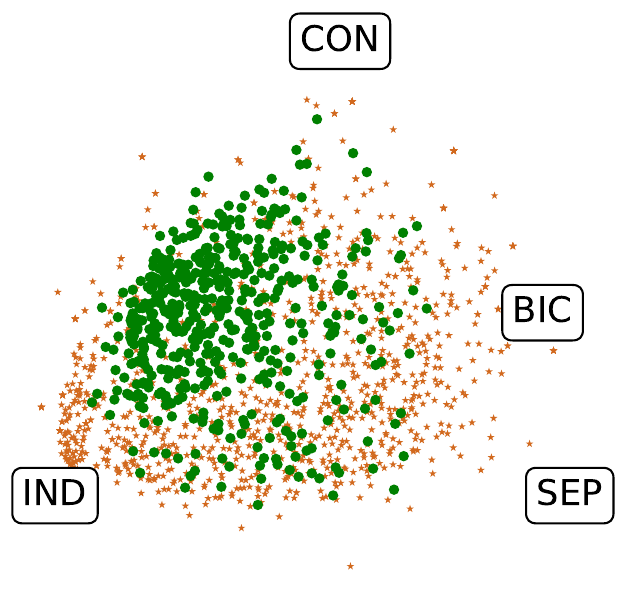}
		\caption{$5\times 5$, valuation distance}
	\end{subfigure}%
	\begin{subfigure}[t]{.32\linewidth}
		\centering
		\includegraphics[height=4cm]{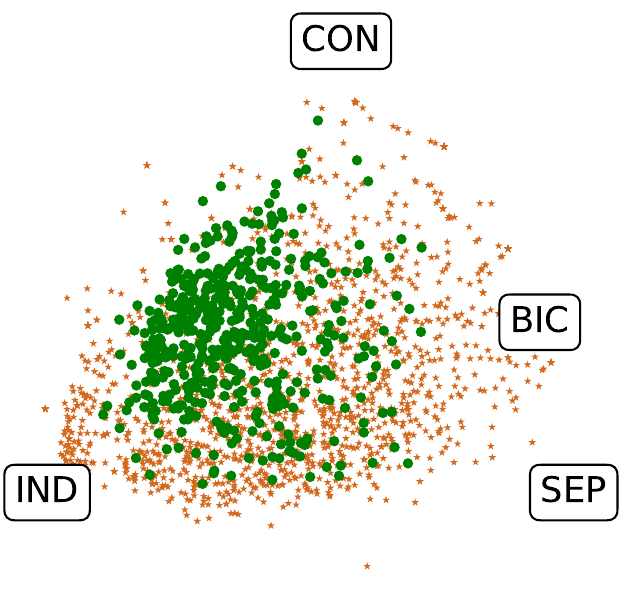}
		\caption{$5\times 5$, demand distance}
	\end{subfigure}%
	\begin{subfigure}[t]{.33\linewidth}
		\centering
		\includegraphics[height=4cm]{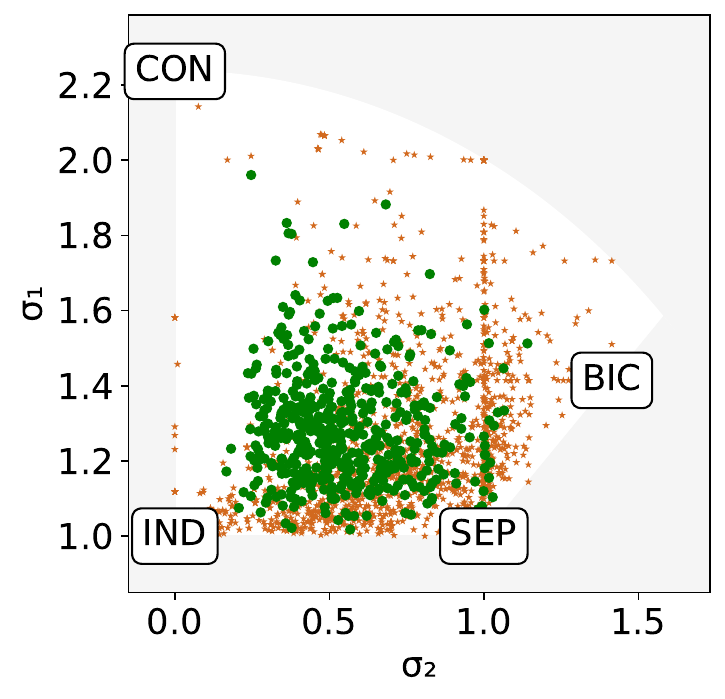}
		\caption{$5\times 5$}
	\end{subfigure}\\
	\caption{The instances from the island dataset are marked as green dots, while all other instances are marked as brown stars.}
	\label{fig:islstri}
\end{figure*}

\begin{figure*}\centering
	\begin{subfigure}[t]{.32\linewidth}
		\centering
		\includegraphics[height=4cm]{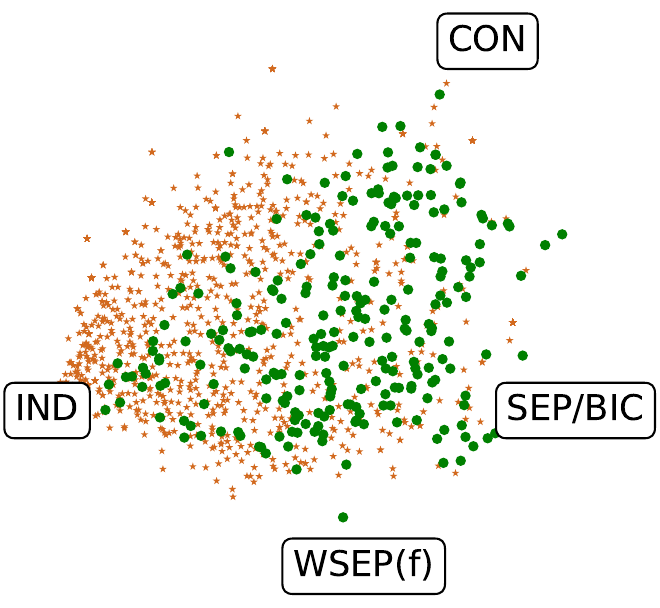}
		\caption{$3\times 6$, valuation distance}
	\end{subfigure}%
	\begin{subfigure}[t]{.32\linewidth}
		\centering
		\includegraphics[height=4cm]{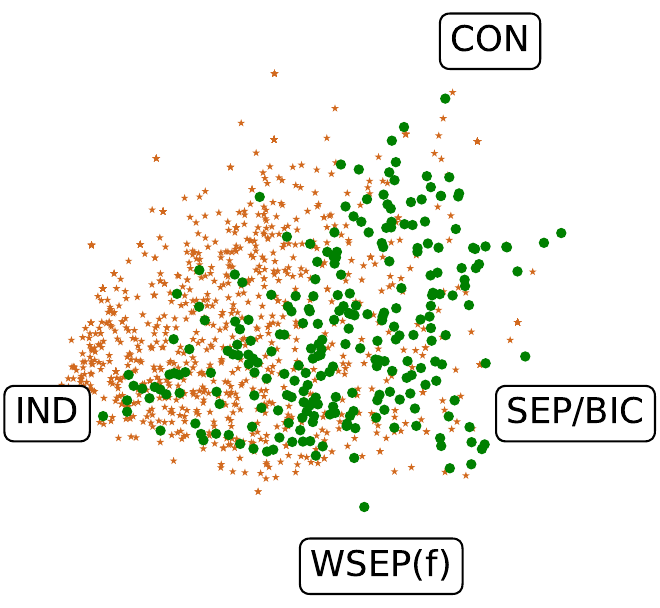}
		\caption{$3\times 6$, demand distance}
	\end{subfigure}%
	\begin{subfigure}[t]{.33\linewidth}
		\centering
		\includegraphics[height=4cm]{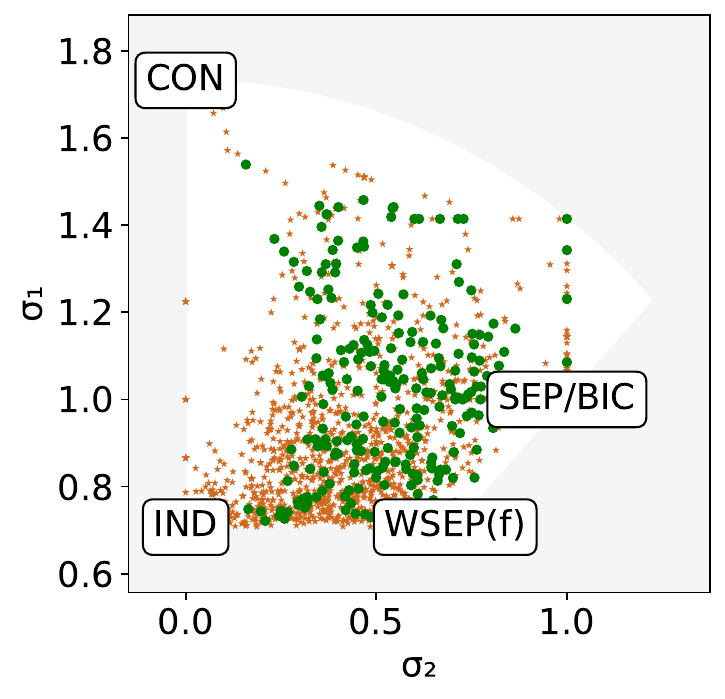}
		\caption{$3\times 6$}
	\end{subfigure}\\
	\begin{subfigure}[t]{.32\linewidth}
		\centering
		\includegraphics[height=4cm]{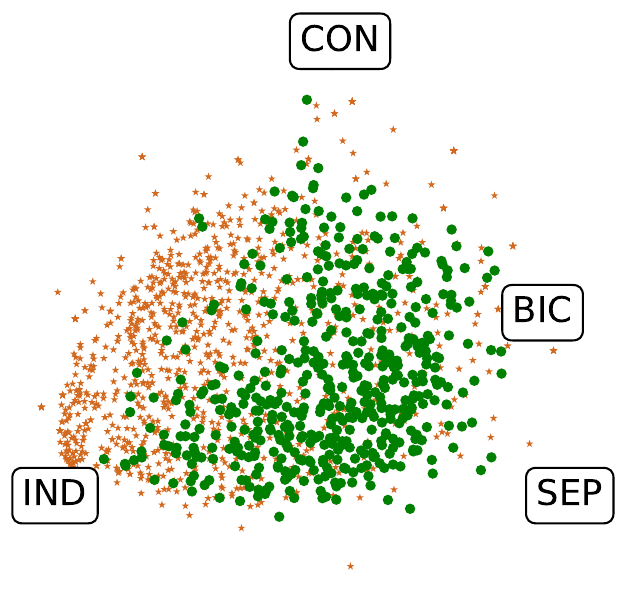}
		\caption{$5\times 5$, valuation distance}
	\end{subfigure}%
	\begin{subfigure}[t]{.32\linewidth}
		\centering
		\includegraphics[height=4cm]{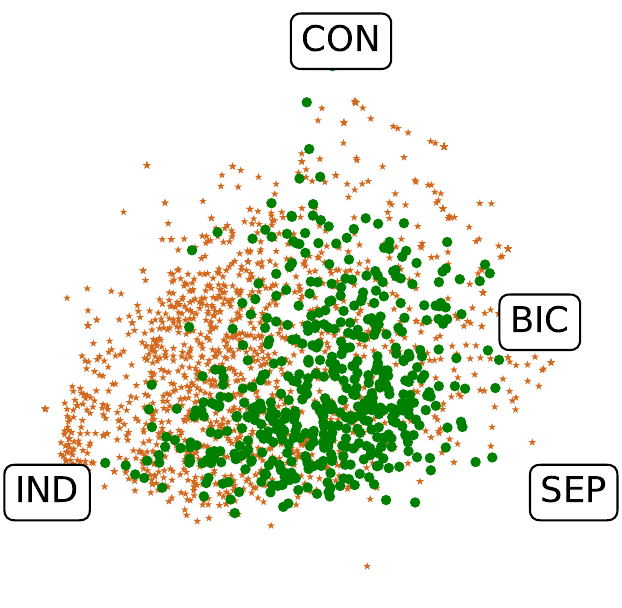}
		\caption{$5\times 5$, demand distance}
	\end{subfigure}%
	\begin{subfigure}[t]{.33\linewidth}
		\centering
		\includegraphics[height=4cm]{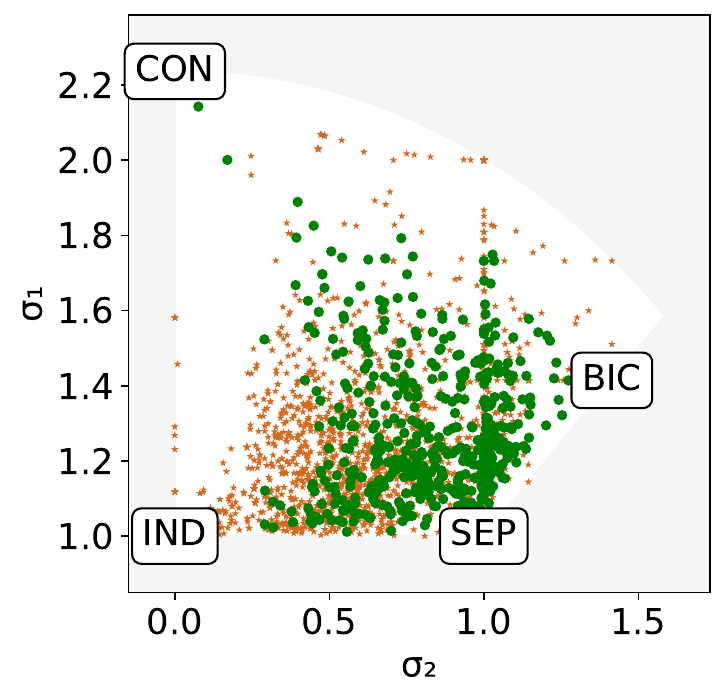}
		\caption{$5\times 5$}
	\end{subfigure}\\
	\caption{The instances from the candies dataset are marked as green dots, while all other instances are marked as brown stars.}
	\label{fig:candiescompar} 
\end{figure*}

\end{document}